\tikzset{
	photon/.style={decorate, decoration={snake}, draw=red},
	electron/.style={draw=blue, postaction={decorate},
		decoration={markings,mark=at position .55 with {\arrow[draw=blue]{>}}}},
	gluon/.style={decorate, draw=magenta,
		decoration={coil,amplitude=4pt, segment length=5pt}},
	sderiv/.style={postaction={decorate},
		decoration={markings,mark=at position .3 with {\arrow{>}}}},
	tderiv/.style={postaction={decorate},
		decoration={markings,mark=at position .7 with {\arrow{<}}}},
	stderiv/.style={postaction={decorate},
		decoration={markings,mark=at position .7 with {\arrow{<}},mark=at position .3 with {\arrow{>}}}}
}
\definecolor{see}{RGB}{67,75,179}
\definecolor{darksee}{RGB}{42,44,148}
\definecolor{honey}{RGB}{232,180,129}
\definecolor{lighthoney}{RGB}{255,254,220}
\definecolor{citecol}{rgb}{0.5,0,0} 
\definecolor{blue1}{RGB}{130,150,209}
\DeclareSymbolFont{bbold}{U}{bbold}{m}{n}
\DeclareSymbolFontAlphabet{\mathbbold}{bbold}
\newcommand{\pol}{\mathrm{pol}}
\definecolor{see}{RGB}{67,75,179}
\newcommand{\fA}{\mathfrak{A}}
\newcommand{\euD}{\mathscr{D}}  
\newcommand{\Lcal}{\mathcal {L}}
\newcommand{\Dcal}{\mathcal{D}}
\newcommand{\Ecal}{\mathcal{E}} 
\newcommand{\Fcal}{\mathcal{F}} 
\newcommand{\Acal}{\mathcal{A}}
\newcommand{\Ocal}{\mathcal{O}}
\newcommand{\Scal}{\mathcal{S}}
\newcommand{\Rcal}{\mathcal{R}}
\newcommand{\Tcal}{\mathcal{T}}
\newcommand{\Vcal}{\mathcal{V}}
\newcommand{\Zcal}{\mathcal{Z}}
\newcommand{\Qcal}{\mathcal{Q}}
\newcommand{\Ci}{C^\infty} 
\newcommand{\Hom}{\mathrm{Hom}}
\newcommand{\Obs}{\mathrm{\mathbf{Obs}}}       
\newcommand{\Sym}{\mathrm{Sym}}     
\newcommand{\WF}{\mathrm{WF}}         
\newcommand{\id}{\mathrm{id}}               
\newcommand{\dvol}{\mathrm{dvol}} 
\DeclareMathOperator{\tr}{\mathrm{tr}}                 
\newcommand{\loc}{\mathrm{loc}}
\newcommand{\mc}{{\mu\mathrm{c}}}
\newcommand{\NN}{\mathbb{N}}          
\newcommand{\RR}{\mathbb{R}}           
\newcommand{\CC}{\mathbb{C}}           
\newcommand{\Tb}{\mathbb{T}}
\newcommand{\al}{\alpha}
\newcommand{\ph}{\phi}
\newcommand{\T}{\cdot_{{}^\Tcal}}
\newcommand{\TT}{\Tcal}
\newcommand{\vr}[1]{\boldsymbol{#1}}         
\newcommand{\be}{\begin{equation}}
\newcommand{\ee}{\end{equation}}
\newcommand{\Lap}{\bigtriangleup}
\DeclareMathOperator{\supp}{\mathrm{supp}}      
\newcommand{\Pei}[2]{\lfloor #1, #2 \rfloor}
\theoremstyle{plain}
\newtheorem{thm}{Theorem}[section]
\newtheorem{df}[thm]{Definition}
\newtheorem{df-thm}[thm]{Definition/Theorem}
\newtheorem{prop}[thm]{Proposition}
\newtheorem{cor}[thm]{Corollary}
\newtheorem{lemma}[thm]{Lemma}
\theoremstyle{definition}
\newtheorem{rem}[thm]{Remark}
\newtheorem{exa}{Example}
\newtheorem*{NB}{Nota Bene}
\def\d{{\rm d}}
\def\xto{\xrightarrow}
\def\Ch{\mathbf{Ch}}
\def\Vec{\mathbf{Vec}}
\def\lag{{\rm Lag}} 
\def\llag{{{\mathbb L} {\rm ag}}}
\def\loc{{\rm Loc}} 
\def\lloc{{{\mathbb L} {\rm oc}}}
\def\mloc{{\rm MLoc}} 
\def\mlloc{{{\mathbb M}{\mathbb L} {\rm oc}}}
\def\cinfty{C^\infty}
\def\Obs{{\rm Obs}}
\def\Opens{{\rm Opens}}
\def\Dens{{\rm Dens}}
\def\cDens{\Dcal{\rm ens}}
\def\jet{{\rm j}}
\title{The observables of a perturbative algebraic quantum field theory form a factorization algebra}
\author[1]{\small{Owen Gwilliam}}
\address{ University of Massachusetts, Amherst,\\ Department of Mathematics,\\
\normalfont{\texttt{gwilliam@math.umass.edu}}}
\author[2]{\small{Kasia Rejzner}}
\address{University of York,  \\
	Department of Mathematics,\\
\normalfont{\texttt{kasia.rejzner@york.ac.uk}}
}
\date{\today}
\begin{document}
 \sloppy

\maketitle
\begin{abstract}
We demonstrate that perturbative algebraic QFT methods, as developed by  Fredenhagen and Rejzner, naturally yields a factorization algebras of observables for a large class of Lorentzian theories. 
Along the way we carefully articulate cochain-level refinements of multilocal functionals, building upon results about the variational bicomplex, and we lift existing results about Epstein-Glaser renormalization to these multilocal differential forms, results
which may be of independent interest. 
\end{abstract}

\tableofcontents


 
\section{Introduction}

Despite its spectacular success as a framework for theoretical physics, the mathematical foundations of quantum field theory (QFT) are not entirely well-understood. While the non-perturbative construction of 4-dimensional interacting QFT models is yet to be achieved, perturbative QFT has lately received a lot of attention from mathematicians. One of the mathematical approaches, developed for QFT in Lorentzian signature, goes under the name \emph{perturbative algebraic quantum field theory} (pAQFT) \cite{BDF,BF0,BF97,BFBook,DF,DF04,Due19,Book}. Another approach, developed for QFT in Riemannian signature, uses factorization algebras \cite{Cos,CoGw,CG2}. Comparing and relating them is a natural problem, with hopes of producing a comprehensive view on perturbative field theories.

In the context of free theory, these two approaches were brought together by us in \cite{GR20}, using the Batalin-Vilkovisky (BV) formalism for quantization but without needing renormalization and without discussing the local-to-global property. 
The main insight of that paper was that the time-ordered product of pAQFT is, morally speaking, the factorization product of Costello-Gwilliam, when considered on the appropriate class of opens. 
In particular, we demonstrated that on such opens, the pAQFT dg algebra with the time-ordered product and the classical BV differential is quasi-isomorphic to the CG dg algebra with the classical product and the deformed (quantum) BV differential. 

In this paper we tackle interacting theories.
Our central goal is to demonstrate precisely the slogan that BV quantization of a classical field theory provides a deformation quantization of its factorization algebra of observables.
The essential challenge is that renormalization is necessary, so the techniques of \cite{GR20} are insufficient.

In practice we show that the pAQFT machinery produces not only a net of algebras on causally convex opens, 
but it in fact yields a natural factorization algebra.
This result has two appealing features from the perspective of contemporary mathematical physics:
\begin{itemize}
\item it demonstrates the unity of field theory, as a corresponding statement in Riemannian signature is the central result of \cite{CG2},
which uses a quite different approach to renormalization, and
\item it demonstrates that the net of a pAQFT satisfies an interesting local-to-global axiom, which has appeared---for independent reasons---in recent work in algebraic topology and geometric representation theory.
\end{itemize}
Any pAQFT already constructed --- see, for instance, \cite{BF97,DFqed,H,FR3,BRZ,BFRej13} --- thus has an associated factorization algebra.
In this paper we do not examine any particular examples in detail,
but we discuss the case of pure Yang-Mills theory.
We expect that there will be interesting consequences when one studies pAQFTs on manifolds with nontrivial topology. 

\begin{rem}
Another important and closely related problem is how factorization algebras relate to the nets of algebras appearing in the Haag-Kastler framework
(i.e., the structure of observables as functors on spacetimes) 
rather than on constructive methods (which is the focus of \cite{GR20} and of this paper).
See \cite{BPS19} for an insightful examination of this question.
Local-to-global aspects of AQFT have been pursued by Benini, Schenkel, and collaborators under the term \emph{homotopy AQFT} \cite{BS17,BSS17,BSW19,BS19},
and it is intriguing to wonder what our constructions and their formalism can add to one another.
\end{rem}

\subsection{The main theorems}

On the level of classical theory, 
our main result is that the space $\mloc$ of multilocal functionals, used in pAQFT as a model for classical observables, has the structure of a factorization algebra.
A multilocal functional is a polynomial built from local functionals (i.e., the integrals of compactly supported Lagrangian densities).
We find it convenient to use local differential forms as a resolution of local functionals,
where a local differential form is a differential form-valued Lagrangian (cf. a Lagrangian density).
There is a corresponding resolution of multilocal functionals as well that we dub {\em multilocal differential forms}, denoted $\mlloc$.
Setting up these refined structures is a one technical aspect of this paper.
We use this machinery to prove the following.

\begin{thm}
Given a classical BV field theory in the sense of \cite{FR} with Lagrangian $\Lcal$,
there exists a strict factorization algebra $\Obs^{cl}$ assigning to each open set $U$ in spacetime, 
the cochain complex $\mloc(U)$ of multilocal functionals with the differential~$\Qcal_S = \{S,-\}$.

There exists a homotopy factorization algebra assigning to each open set in spacetime, 
the cochain complex of multilocal differential forms $\mlloc(U)$ with a differential $\delta_S$ lifting~$\Qcal_S$. 
\end{thm}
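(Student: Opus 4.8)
The plan is to construct the prefactorization structure by hand from the commutative algebra of functionals and then to establish the Weiss local-to-global axiom, which I expect to be the \emph{crux}. Recall that a multilocal functional is a polynomial in local functionals, so on each open $U$ the space $\mloc(U)$ is a (suitably completed) symmetric algebra on the space $\loc(U)$ of compactly supported local functionals. For an inclusion $U\subseteq V$ I take the extension map $\mloc(U)\to\mloc(V)$ to be the one induced by regarding a functional supported in $U$ as supported in $V$; for pairwise disjoint opens $U_1,\dots,U_n\subseteq V$ I take the factorization product $\mloc(U_1)\potimes\cdots\potimes\mloc(U_n)\to\mloc(V)$ to be multiplication of functionals. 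Associativity and $\Sigma_n$-equivariance of these products are immediate from the commutativity and associativity of the pointwise product, so the prefactorization axioms hold strictly. It remains to check that $\Qcal_S=\{S,-\}$ is a differential compatible with all structure maps; since $\{S,-\}$ is a graded derivation of the product that does not enlarge supports (the functional derivatives of a functional are supported where the functional is), it commutes with extensions and factorization products, so $\Obs^{cl}$ is a prefactorization algebra of cochain complexes.

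The heart of the first statement is the strict local-to-global axiom: for a Weiss cover $\mathcal{U}$ of $V$, the \v{C}ech complex of $\mathcal{U}$ with coefficients in $\mloc$ should map by an isomorphism to $\mloc(V)$. I would reduce this to two facts. First, that compactly supported local functionals form a strict cosheaf: a partition of unity subordinate to the cover decomposes any $\ell$ on $V$ as a finite sum $\sum_i \rho_i\ell$ with $\rho_i\ell$ supported in a cover element, which gives surjectivity, and the relations on overlaps generate the kernel. Second, that the completed symmetric algebra sends this cosheaf to a factorization algebra in the Weiss topology: the degree-$n$ part of $\mloc$ is supported on configurations of at most $n$ points of $V$, and the defining property of a Weiss cover---that every finite subset of $V$ lies in some member of $\mathcal{U}$---is exactly what is needed to decompose degree-$n$ functionals and identify $\mloc(V)$ with the strict colimit. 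The real work, and the \emph{main obstacle}, is bookkeeping at the level of the topology and completion: verifying that the partition-of-unity decomposition is continuous, that it descends through the quotient of densities by total derivatives that defines $\loc$, and that no homotopies are needed so that the colimit is realized on the nose.

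For the second statement I would transport the whole structure along the resolution $\mlloc\to\mloc$ supplied earlier by the variational-bicomplex machinery. The extension maps and factorization products for $\mlloc$ are again inclusion of support and wedge-multiplication of multilocal differential forms, and $\delta_S$ is a derivation lifting $\Qcal_S$, so the same formal argument makes $\mlloc$ a prefactorization algebra; the only difference is that the Weiss descent is now a quasi-isomorphism rather than an isomorphism, because the horizontal differential that makes $\mlloc$ a resolution of $\mloc$ obstructs a strict colimit---this is precisely why $\mlloc$ is a homotopy, and not a strict, factorization algebra. To establish homotopy descent I would compare the two \v{C}ech complexes: since $\mlloc(W)\to\mloc(W)$ is a quasi-isomorphism for every open $W$ in the cover and its iterated intersections, a spectral-sequence argument shows the induced map $\check{C}(\mathcal{U},\mlloc)\to\check{C}(\mathcal{U},\mloc)$ is a quasi-isomorphism; combining this with the isomorphism $\check{C}(\mathcal{U},\mloc)\to\mloc(V)$ from the first statement and the quasi-isomorphism $\mlloc(V)\to\mloc(V)$, a two-out-of-three argument in the evident commuting square yields that $\check{C}(\mathcal{U},\mlloc)\to\mlloc(V)$ is a quasi-isomorphism, i.e.\ the homotopy local-to-global axiom. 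The technical point to watch is that the resolution be uniform in the open set---compatible with the support filtration and with restriction to the cover---so that the degreewise comparison is valid and the spectral sequence converges; I expect this, rather than any single homological computation, to be where the care is required.
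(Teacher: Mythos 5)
Your treatment of the strict statement is essentially the paper's: the prefactorization maps come from extension-by-zero and multiplication, the Weiss axiom is reduced degreewise to the fact that a Weiss cover $\{U_i\}$ of $V$ induces an \emph{ordinary} cover $\{U_i^m\}$ of $V^m$, and $\Qcal_S$ is a support-preserving derivation, so the strict colimit (computed degreewise) survives the differential. One caution: your phrase ``suitably completed symmetric algebra'' hides the device that makes this work. The paper explicitly does \emph{not} use $\Sym(\loc_c(U))$ (it says the factorization property of the plain symmetric algebra is not obvious), but a thickening in which products of test densities $f(x_1)\cdots f(x_m)$ are replaced by arbitrary compactly supported functions on $U^m$; a partition of unity on $U^m$ subordinate to $\{U_i^m\}$ produces cutoffs that are not products of one-variable functions, so the decomposition argument only closes up in the thickened space. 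Your acknowledgment of the quotient-by-total-derivatives issue is also on target: the injectivity half of the cosheaf property for $\loc_c$ requires the diagram chase through $\llag^{n-1}_c \to \llag^n_c \to \loc_c \to 0$, not just the partition of unity.

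The genuine gap is in your proof of the homotopy statement. Your two-out-of-three argument rests on two claims that are both false, and the paper proves them false. First, you assume $\mlloc(W)\to\mloc(W)$ is a quasi-isomorphism ``for every open $W$ in the cover and its iterated intersections.'' The paper shows the integration map $\int_W$ is a quasi-isomorphism only on star-convex opens (and disjoint unions of such); on opens with nontrivial de Rham cohomology it fails, because Lagrangian $p$-forms can pair with nontrivial $p$-cycles (Wilson-loop-type observables). Elements of a Weiss cover and their finite intersections are not balls in general, so this local input is unavailable. Second, you assume $\check{C}(\mathcal{U},\mloc)\to\mloc(V)$ is a quasi-isomorphism because $\mloc$ is a strict cosheaf; but strict descent (a coequalizer statement) does not imply the \v{C}ech complex computes the value, and the paper's Corollary states exactly this failure: $\loc_c$ is a strict cosheaf that is \emph{not} a homotopy cosheaf. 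Indeed, if your comparison argument worked, it would show the \v{C}ech complexes of $\mloc$ and $\mlloc$ agree and hence that $\mloc$ satisfies homotopy descent --- contradicting the very reason $\mlloc$ was introduced. (Your causal explanation is also inverted: it is not that the de Rham differential ``obstructs a strict colimit'' for $\mlloc$ --- the paper notes $\lloc_c$ is a strict cosheaf as well --- rather, $\mloc$ fails the homotopy axiom while $\mlloc$ satisfies both.) The correct route, which the paper takes, proves homotopy Weiss descent for $\mlloc$ \emph{directly}: for each $m$, the \v{C}ech complex of $\lloc^{[m]\bullet}_c$ on the induced ordinary cover of $V^m$ is a double complex; running the spectral sequence with the \v{C}ech differential first, each layer $\lloc^{[m]p}_c$ is compactly supported sections of a vector bundle, so a partition of unity gives a contracting homotopy and the sequence collapses onto $\lloc^{[m]\bullet}_c(V^m)$. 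The differential $\delta_S$ is then incorporated by a second spectral-sequence comparison (filtering so that $\d_\mlloc$ is applied first), which is where the deformation is shown to preserve the quasi-isomorphism.
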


If we restrict to the subcategory of causally convex balls inside all opens,
we obtain a version of the usual dg pAQFT model for the classical field theory.
This comparison is similar to that of~\cite{GR20}.

Note that pAQFT encompasses both Lorentzian and Riemannian signatures,
so this theorem offers a new approach to observables in Euclidean field theory.
There is a natural map of factorization algebras from the classical observables constructed here to the classical observables constructed in \cite{CG2},
extending the embedding of local functionals into the CG classical observables.
(This map can be seen, in some sense, as the universal Noether map, 
although it falls outside the hypotheses used in~\cite{CG2}.)
The observables used in this paper are close in spirit to the constructions of Beilinson and Drinfeld~\cite{BD};
it might be fruitful to move beyond a spiritual connection.

The main result on quantum theory is that the above structure can be deformed by using the methods pf pAQFT to provide a factorization algebra that characterizes quantum observables of the theory.
Such a deformation may not exist; an anomaly must vanish.
(See Section~\ref{sec obsq} for an explanation of what a BV quantization means in pAQFT and with our extension to multilocal differential forms.)

\begin{thm}
Given a classical BV field theory in the sense of \cite{FR} that admits a BV quantization in the sense of \cite{FR3},
there exists a strict factorization algebra $\Obs^{q}$ assigning to each open set in spacetime,
the cochain complex of multilocal functionals with values $\CC[[\hbar]]$ with the differential~$\hat{s}=\Qcal_S + \d_\mloc -i\hbar \Delta_{\lambda\Lcal_I}$.
\end{thm}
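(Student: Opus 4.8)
The plan is to realize $\Obs^q$ as a deformation of the classical factorization algebra of the preceding theorem, base changed to $\CC[[\hbar]]$. Concretely, the underlying precosheaf of graded $\CC[[\hbar]]$-modules and all of its factorization structure maps---the extensions $\Obs^q(U)\to\Obs^q(V)$ for $U\subset V$ and the factorization products $\Obs^q(U_1)\otimes\Obs^q(U_2)\to\Obs^q(V)$ for disjoint $U_1,U_2\subset V$---are taken to be exactly those of $\Obs^{cl}$, tensored with $\CC[[\hbar]]$. Only the differential changes, from the classical $\Qcal_S+\d_\mloc$ to $\hat s=\Qcal_S+\d_\mloc-i\hbar\,\Delta_{\lambda\Lcal_I}$. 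Thus the proof splits into three tasks: (i) show $\hat s$ is a differential, i.e. $\hat s^2=0$; (ii) show the inherited structure maps are cochain maps for $\hat s$; and (iii) show that the Weiss codescent axiom survives the deformation.

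For (i), the well-definedness of the renormalized BV Laplacian $\Delta_{\lambda\Lcal_I}$ on the $\CC[[\hbar]]$-valued complex is precisely what a BV quantization in the sense of \cite{FR3} supplies: Epstein--Glaser renormalization of the time-ordered products makes $\Delta_{\lambda\Lcal_I}$ a well-defined operator on the (completed) multilocal complex. Expanding $\hat s^2$ and sorting by powers of $\hbar$, the order-$\hbar^0$ term is $(\Qcal_S+\d_\mloc)^2=0$, which holds by the preceding theorem, while the remaining terms assemble into the \qme{} for the total action $S+\lambda\Lcal_I$. The hypothesis that the theory admits a BV quantization is exactly the statement that the associated anomaly vanishes, so that $\hat s^2=0$; I would verify this by matching our $\d_\mloc$-refined Laplacian with the quantum master operator of \cite{FR3}.

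For (ii), the extension maps commute with $\Delta_{\lambda\Lcal_I}$ by the algebraic adiabatic limit: the observables assigned to $U$ depend on the cutoff $\lambda$ only through its germ near $U$ (one takes $\lambda\equiv 1$ on a neighborhood of $U$), so enlarging $U$ together with $\lambda$ is compatible. The factorization product, by contrast, is defined only on disjoint opens, and there the decisive point is that $\Delta_{\lambda\Lcal_I}$---a second-order operator whose failure to be a derivation is measured by the antibracket---acts as a \emph{derivation} on products of observables with disjoint supports, because the antibracket couples the two factors only through the propagator, which vanishes for spacelike or causally separated (in particular disjoint) supports. Hence $\hat s$ is a derivation for the factorization product, making that product a cochain map. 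This is exactly where the support estimates of the Epstein--Glaser construction, refined to multilocal forms earlier in the paper, are essential: renormalization must not smear supports enough to destroy this disjointness.

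Finally, for (iii), I would deduce codescent by an $\hbar$-adic filtration argument. Filtering $\Obs^q$ by powers of $\hbar$ yields an associated graded equal to $\Obs^{cl}\otimes\CC[[\hbar]]$ with the undeformed classical differential, whose Weiss codescent is the content of the preceding theorem; since the complex computing codescent is compatible with this complete and exhaustive filtration, codescent for $\hat s$ follows by comparing the two spectral sequences, or equivalently by completeness in $\hbar$. I expect the main obstacle to lie not in this formal step but in (i) and (ii): establishing that the renormalized $\Delta_{\lambda\Lcal_I}$, which can a priori smear supports, is genuinely local and cutoff-independent enough to be compatible with the \emph{strict} factorization structure, and that the anomaly-vanishing hypothesis delivers $\hat s^2=0$ after our refinement to multilocal differential forms. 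These are the substantive points; the codescent and base-change bookkeeping are comparatively routine.
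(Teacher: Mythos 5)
Your overall architecture---keep the classical structure maps, change only the differential, and get Weiss codescent from the $\hbar$-adic filtration---matches the paper, and your step (iii) is essentially the paper's own argument (a filtration-preserving map of spectral sequences whose first page is handled by the classical result, Proposition~\ref{prop: obscl is a fact alg}). But your step (ii) rests on a wrong mechanism at exactly the decisive point. You justify the derivation property of $\Delta_{\lambda\Lcal_I}$ on disjointly supported observables by saying the antibracket ``couples the two factors only through the propagator, which vanishes for spacelike or causally separated (in particular disjoint) supports.'' This has the logic backwards: disjointness does \emph{not} imply causal separation (one open can sit inside the future of the other), and no propagator---Pauli--Jordan or Feynman---vanishes between causally related but disjoint regions. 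If the failure of the derivation property were genuinely propagator-mediated, the factorization products would be cochain maps only for causally \emph{separated} opens, which is not enough for a factorization algebra over arbitrary disjoint opens. The correct mechanism, and the one the paper uses, is locality: the antibracket
\[
\{F,G\}(\phi)=\int_X \omega_{\Ecal^!}\bigl(\delta F|_\phi,\delta G|_\phi\bigr)
\]
is a pointwise bundle pairing integrated over $X$, so it vanishes whenever $\supp F\cap\supp G=\emptyset$, regardless of causal relations; and the anomalous master Ward identity (Theorem~\ref{thm: AMWI}) states that the anomaly maps $A_n$ are supported on the small diagonal of $X^n$, so all cross terms in $\Delta_{\lambda\Lcal_I}$ applied to a product of disjointly supported factors vanish. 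That diagonal-support statement is precisely the ``crucial property'' recorded after equation~\eqref{eq: Delta multiloc}, and it---not any support estimate on propagators---is where Epstein--Glaser renormalization earns its keep.

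Your step (i) also runs opposite to the paper in a way that leaves a gap. The paper does not prove $\hat{s}^2=0$ from the displayed local formula; it \emph{defines} $\hat{s}$ by conjugation, $\hat{s}(F)=e^{-i\lambda V/\hbar}\,\TT^{-1}\circ\delta_{S_0}\circ\TT(e^{i\lambda V/\hbar}F)$, so that $\hat{s}^2=0$ is automatic from $\delta_{S_0}^2=0$, and then uses the AMWI together with the QME to show this operator equals the manifestly local, support-preserving expression $\{S,-\}-i\hbar\Delta_{\lambda L_I(f)}$ with $f\equiv 1$ on the support of the argument (this also disposes of your cutoff-independence worry). If you instead take the local formula as the definition, then checking $\hat{s}^2=0$ ``by sorting powers of $\hbar$'' is not a formal computation: it requires identities governing how the renormalized $\Delta_{\lambda\Lcal_I}$ interacts with the antibracket and with itself, and those identities are again consequences of the AMWI, which your proposal never invokes. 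Both of your substantive steps are closed by the same missing ingredient: the AMWI and the locality (diagonal support) of the anomaly it provides.
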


If we restrict to the subcategory of causally convex balls inside all opens,
we obtain a version of the usual dg pAQFT model for the quantum field theory.
The time-ordered product is related to the factorization product, as in~\cite{GR20}.

We do {\em not} construct a homotopy factorization algebra of quantum observables,
as it would require proving a version of the anomalous master Ward identity at the level of multilocal differential forms.
Such a result involves a number of new ideas and techniques,
and we hope to provide it in future work.

In Riemannian signature, this factorization algebra of quantum observables is different than that constructed in~\cite{CG2}.
It would be wonderful to know if the map of classical observables can be extended to a map of quantum observables.
The different methods of construction (notably with renormalization methods) makes a direct treatment rather challenging,
but one should expect there to be compatible deformations of the factorization algebras of classical observables.

\subsection{Overview of the paper}

Our paper has two parts, with first of independent interest. 

In the first part (sections \ref{sec:local}-\ref{sec:ren}) we describe local and multilocal functionals --- with small variations on preceding work --- and their differential form refinements.
We explain how these lead to
a factorization algebra.
The arguments are standard in the theory of manifolds, modestly stretched to cover classical field theory,
but, in fact, no dynamics is relevant here.
We then explain how renormalization \`a la Epstein-Glaser, 
as developed in \cite{EG,BDF,H,FR3} to apply to pAQFT,
lifts to our multilocal differential forms.

In the second part (sections \ref{sec:classobs}-\ref{sec obsq}),
we incorporate dynamics by equipping this factorization algebra with a differential determined by the action functional;
the differential is local in spacetime and hence it continues to satisfy the cosheaf condition.
We call this dg factorization algebra the {\em classical observables} (section~\ref{sec:classobs}).
We then turn to the quantum differential in pAQFT,
leading to the {\em quantum observables} (section \ref{sec obsq}).
This deformation of the classical differential to the quantum differential is a key feature of the BV formalism,
and it leads to a deformation of the factorization algebra of classical observables to the factorization algebra of quantum observables.

We remark that the quantum differential is local in nature only if the quantized action satisfies the quantum master equation (QME),
and so satisfying the QME is necessary to obtain this deformation of factorization algebras. Possible obstructions to satisfy QME are identified as \emph{anomalies}.

\subsection{Outlook}

The results and techniques of this paper open up several avenues of exploration.

First, our result suggests a potential refinement of AQFT --- a local-to-global principle --- as we see that the examples of pAQFT are factorization algebras.
This aspect is most pertinent on spacetime manifolds with nontrivial topology.
(Note that to produce a classical BV field theory in our framework requires writing down an ``extended'' Lagrangian density, i.e., possibly adjoining Lagrangian with values in lower dimensional forms, satisfying the classical master equation with the GS bracket.
This issue becomes important only when the manifold is not contractible.)
Moreover, there are a number of factorization algebras arising from algebraic topology and representation theory, 
and it is interesting to ask if they might play a role in AQFT.

Second, AQFT has a wealth of ideas and constructions (such as superselection sectors, the theory of phases, DHR reconstruction) 
that ought to admit analogues for the factorization algebras of Euclidean field theories, 
and perhaps for other factorization algebras.
Our result suggests examples by which to explore these analogies.

Third, our use of local differential forms is closely related to their use in the BFV formalism \cite{CMR1,CMR2,mnev2020towards} and in higher form symmetries \cite{gaiotto2015generalized}.
We want to explore how these areas intertwine with AQFT and factorization algebras.

\subsection{Acknowledgements}

Our views on field theory are deeply shaped by Klaus Fredenhagen and Kevin Costello,
so their vision is reflected in our work here even if they played no direct role in it;
we offer gratitude for their guidance and mentorship in mathematical physics.
Eli Hawkins and Berend Visser have read drafts and offered invaluable feedback.
We have both benefited from conversations with Michele Schiavina, particularly about local differential forms and their role in BV formalism and its BFV cousins.
O.G. would like to thank Urs Schreiber and Igor Khavkine for pressing him on these issues and discussing how BV theories behave in Lorentzian signature at a nice dinner long ago in Hamburg.
Finally, we thank the referees, who caught many issues and pushed us to improve our paper on many fronts.

Our collaboration on this project first started at the Perimeter Institute, 
and we thank it for the wonderfully supportive, convivial atmosphere that fosters adventurous work.
The National Science Foundation supported O.G. through DMS Grants No. 1812049 and 2042052, which allowed K.R. to visit and push this project forward.

\section{Local functionals}
\label{sec:local}


Let $X$ be a smooth manifold of dimension $n$, and consider a field theory living on~$X$.

Our overarching goal is to introduce a version of multilocal functionals --- a natural class of observables for a field theory --- that is a cosheaf on $X$ with respect to the {\em Weiss topology}, 
which is the challenging condition necessary to obtain a {\em factorization} algebra of observables.
This cosheaf condition will be obtained by using a partitions of unity argument,
modeled on the standard example of compactly-supported sections of a vector bundle.

We introduce, in fact, two versions of (multi)local functionals. 
The first is the conventional notion, as used in \cite{FR,FR3},
and we show it satisfies a cosheaf condition in a non-homotopical sense.
The second replays the tune in a derived mode --- it will be familiar to those fond of the variational bicomplex --- that provides a homotopy cosheaf. 
In this section we treat local functionals, and discuss several variations and useful conditions.

For a discussion of cosheaves and homotopy cosheaves,
see Appendix~A.5 of~\cite{CoGw},
which is motivated by similar considerations.
It also contains guidance to useful pedagogical literature.

\subsection{Warm-up example of a cosheaf}

If $E \to X$ is a (graded) vector bundle,
let $\Ecal$ denote the sheaf on $X$ of smooth sections of $E$.
Every section can be multiplied by a smooth function,
and so $\Ecal$ is a sheaf of modules over the sheaf $\cinfty$ of smooth functions.
We thus have access to partitions of unity, 
which lead to a standard and straightforward proof of the following.

\begin{lemma}
The functor $\Ecal_c: \Opens(X) \to \Vec$, where $\Ecal_c(U)$ is the space of compactly supported smooth sections of $E$, is a cosheaf on~$X$.
\end{lemma}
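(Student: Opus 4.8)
The plan is to verify the codescent (cosheaf) axiom directly from partitions of unity, exactly as one does for compactly supported functions. First I would record that $\Ecal_c$ is genuinely covariant: if $U \subseteq V$ then extension by zero sends a compactly supported section on $U$ to one on $V$, and this assignment is functorial, so $\Ecal_c \colon \Opens(X) \to \Vec$ is a precosheaf. Since we work in $\Vec$, the cosheaf condition amounts to the statement that for every open $U$ and every open cover $\{U_i\}_{i \in I}$ of $U$, the value $\Ecal_c(U)$ is the colimit of the \v{C}ech diagram, i.e. the sequence
\[
\bigoplus_{i,j} \Ecal_c(U_i \cap U_j) \xrightarrow{\ \partial\ } \bigoplus_{i} \Ecal_c(U_i) \xrightarrow{\ \varepsilon\ } \Ecal_c(U) \to 0
\]
is exact, where $\varepsilon$ is the sum of the extension-by-zero maps and $\partial$ is the difference of the two extensions $\Ecal_c(U_i \cap U_j) \to \Ecal_c(U_i)$ and $\Ecal_c(U_i \cap U_j) \to \Ecal_c(U_j)$ (see Appendix A.5 of \cite{CoGw} for this formulation). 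Thus there are exactly two things to check: surjectivity of $\varepsilon$ and $\ke \varepsilon = \im \partial$.

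For surjectivity, I would take $s \in \Ecal_c(U)$ with $K := \supp s$ compact. The sets $U_i$ cover $K$, so finitely many of them do; choosing a partition of unity $\{\chi_k\}$ subordinate to the cover and supported near $K$, each $\chi_k s$ is a compactly supported section of $E$ over $U_k$, and $s = \sum_k \chi_k s = \varepsilon\big((\chi_k s)_k\big)$, with only finitely many nonzero terms. This uses nothing beyond paracompactness of $X$ and the fact that $\Ecal$ is a sheaf of $\cinfty$-modules.

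For exactness in the middle, suppose $(s_i)_i$ has finitely many nonzero entries and $\sum_i s_i = 0$ in $\Ecal_c(U)$ (after extension by zero). With the same subordinate partition of unity $\{\chi_k\}$, I would set $t_{ik} := \chi_k s_i$, which is compactly supported in $U_i \cap U_k$ because $\chi_k$ is supported in $U_k$ and $s_i$ in $U_i$. A short computation gives $(\partial t)_i = \sum_k (\chi_k s_i - \chi_i s_k) = \big(\textstyle\sum_k \chi_k\big)\, s_i - \chi_i \sum_k s_k = s_i$, using $\sum_k \chi_k \equiv 1$ near $\supp s_i$ together with the hypothesis $\sum_k s_k = 0$; hence $(s_i)_i \in \im \partial$. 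The reverse inclusion $\im \partial \subseteq \ke \varepsilon$ is immediate from $\varepsilon \circ \partial = 0$.

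The routine but genuinely fiddly part --- and the main obstacle --- is the bookkeeping in this last step: keeping track of which extension map and which sign attaches to each index, and checking that every section produced stays compactly supported inside the correct intersection. All sums are finite because the $s_i$ have compact support and a locally finite partition of unity meets each compact set in finitely many terms, so there are no convergence subtleties. The entire argument is driven by the single device of a partition of unity subordinate to the cover, which is precisely the template we intend to reuse when we upgrade from sections to (multi)local functionals.
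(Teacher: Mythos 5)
Your proof is correct and follows exactly the route the paper intends: the paper only sketches this lemma (extension by zero for functoriality, partitions of unity for codescent, outsourcing details to Appendix A.5.4 of \cite{CoGw}), and your argument---including the key computation $(\partial t)_i = \bigl(\sum_k \chi_k\bigr) s_i - \chi_i \sum_k s_k = s_i$ with $t_{ik} = \chi_k s_i$---is precisely the standard partition-of-unity proof being cited. Nothing is missing.
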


To see that $\Ecal_c$ is a functor (i.e., a precosheaf), 
observe that a compactly-supported section $\phi \in \Ecal_c(U)$ extends to a compactly-supported section $\phi \in \Ecal_c(V)$ of a larger open $V \supset U$ by setting $\phi = 0$ on $V -U$.

To see it is a cosheaf, the key idea is that partitions of unity allow one to write any section $f \in \Ecal_c(U)$ as a linear combination of such sections subordinate to the cover.
In other words, the sections are constructed in a local-to-global fashion.
See \S5.4 of Appendix A in \cite{CoGw} for a proof, 
but other proofs can be found in~\cite{BottTu, Bredon}.

\begin{rem}
Consider the more interesting case where we have a differential complex $(\Ecal_c, Q)$,
where $Q$ is a differential operator that defines a cochain complex.
(As an example, think of the de Rham complex.)
Then $(\Ecal_c,Q)$ satisfies another local-to-global condition:
it is a homotopy cosheaf, meaning that $(\Ecal_c(U),Q)$ is quasi-isomorphic to the homotopy colimit over the full simplicial diagram arising from a cover.
This proof can also be seen in \S5.4 of Appendix A in~\cite{CoGw}.
\end{rem}

\subsection{A sketch of why observables form a precosheaf}
\label{sec on obs as precosheaf}

Let $E \to X$ be a vector bundle on a smooth manifold $X$,
and let $\Ecal$ denote the sheaf of smooth sections.
We view $\Ecal$ as describing the fields of a field theory,
although for the moment we will discuss only kinematical structures and hence ignore the interesting aspects of field theory.

On an open set $U$, the vector space $\Ecal(U)$ can be viewed as an infinite-dimensional Fr\'echet manifold.
Let $\cinfty(\Ecal(U))$ denote the commutative algebra of smooth functions on this manifold.
(To be precise, we will think about Bastiani-smooth functionals, 
but here these coincide with the convenient notion, since we are working with Fr\'echet spaces. \cite{Michor})
We view these as the observables for the field theory that only depend on the behavior of fields in the region~$U$.

Since $\Ecal$ is a sheaf and hence is a contravariant functor out of the poset category $\Opens(X)$, we see that we have a precosheaf
\[
\begin{array}{cccc}
\cinfty(\Ecal):& \Opens(X) & \to & {\rm CAlg}\\
& U & \mapsto & \cinfty(\Ecal(U))
\end{array}
\]
because taking functions is a contravariant functor on the category of Fr\'echet manifolds.
We view this precosheaf as describing observables for the fields encoded by~$\Ecal$.

In the remainder of this paper we will develop variations on this construction,
notably by working with replacements of smooth functions (often subalgebras)
motivated by constructions in physics.
We will always produce precosheaves, and we will want to understand when these constructions are cosheaves.

There is a special feature of this linear situation.
The sheaf $\Ecal$ is {\em soft}, so that any restriction map $\Ecal(X) \to \Ecal(K)$ for {\em closed} set $K \supset X$ is surjective.
Hence the map $\cinfty(\Ecal(K)) \to \cinfty(\Ecal(X))$ is {\em injective}.

\begin{df}
\label{df_supp}
For an observable $F \in \cinfty(\Ecal(X))$, its {\em support} $\supp(F)$ is the smallest closed set $K$ such that $F$ is in the image of the map $\cinfty(\Ecal(K)) \to \cinfty(\Ecal(X))$.
\end{df}

For example, a constant function (i.e., observable that assigns the same number to every field) has support given by the empty set.
As another example, consider the case where $E$ is the trivial line bundle so that a field $\phi$ is simply a smooth function on $X$, and suppose $X$ is an oriented $n$-dimensional manifold.
Then the observable
\[
F(\phi) = \int_X \phi(x) f(x) \d^n x,
\]
with $f(x) \d^n x$ denoting some compactly supported top form,
has support given by the support of the top form.

\subsection{Local functionals, the strict version}

We turn now to a more sophisticated situation.
Let $\loc(X)$ denote the space of {\em local} functionals on the field configurations $\Ecal(X)$.
In brief, these are functions on $\Ecal(X)$ that arise by integrating Lagrangian densities.
For $E \to X = \RR^n$ the trivial line bundle, so that a field $\phi$ is simply a smooth function on $\RR^n$, 
a concrete example of a local functional is
\begin{equation}
\label{eq: rep of local}
F(\phi) = \int_{\RR^n} \phi(x)^3 (\partial_1 \phi(x))^7 (\partial_1 \partial_n^2 \phi(x)) f(x)\, \d^n x
\end{equation}
where $f$ is a compactly supported smooth function.
(The compact support of $f$ ensures that this formula makes sense for arbitrary fields without any decay at ``infinity.'')
Typically, a local functional involves integrating some polynomial in the derivatives of the field~$\phi$.

We will extend this characterization to general vector bundles over smooth manifolds.
We will follow chapter 3 of \cite{Book}, chapter 2 of Deligne and Freed in~\cite{FieldsStrings}, 
and \cite{BDGR},
where more discussion and motivation can be found.

To capture the data of the field and its derivatives, 
we use the jet bundles $\pi^{(k)}: J^k E \to X$,
where $k$ runs over the natural numbers. 
Recall that for any section $\phi \in \Ecal$, 
there is a canonical section $\jet_k \phi$ of $J^k E$, 
known as the {\it $k$th jet prolongation} of $\phi$, 
such that $\jet_k \phi(x)$ records the Taylor expansion of $\phi$ to order $k$ at each point of $x \in~X$.

Let $\Dens_X \to X$ denote the density line bundle (or orientation line bundle) of $X$.
When $X$ is oriented, $\Dens_X$ is precisely the top exterior power of the cotangent bundle.
Let $\cDens$ denote the smooth sections of this bundle,
which forms a sheaf.

We can now define abstractly the kind of integrands that appear in~\eqref{eq: rep of local}.
There are two parts to such a local functional: the Lagrangian and the density.

\begin{df}
\label{order k lag}
A {\em Lagrangian} of order $k \in \NN$ is a smooth function on the total space ${\rm J}^k E$ of the vector bundle $J^k E \to X$.
\end{df}

Given a field $\phi$, its prolongation $\jet_k \phi$ is a section of $J^k E$,
and so it composes with an order $k$ Lagrangian $\alpha$ to define a function $\alpha(\jet_k \phi)$ on $X$.
We can pair it with a compactly-supported density $\mu$ on $X$ to obtain a functional
\[
F_{\alpha,\mu}(\phi) = \int_X \alpha(\jet_k \phi) \mu
\]
on fields. 
This functional is the kind of object we are interested in.

Note an issue that will recur repeatedly.
We require $\mu$ to be compactly-supported so that we can evaluate $F_{\alpha,\mu}$ on an arbitrary field $\phi \in \Ecal(X)$, which may be nonzero everywhere.
If we wish to allow arbitrary densities $\mu$ (i.e., without compact support), 
the formula for $F_{\alpha,\mu}$ is well-defined if we restrict to compactly-support fields $\phi \in \Ecal_c(X)$.
This situation appears, for example, with action functionals, 
which are usually defined everywhere on spacetime and do not define functionals on arbitrary fields.
Depending on our needs, we simply place the compact support condition in one place or the other.
For the moment, we will allow arbitrary densities.

\begin{df}
Let $\loc_{(k)}(X)$ denote the subspace of $C^\infty(\Ecal_c(X))$ spanned by functionals $F_{\alpha,\mu}$ where
$\alpha$ is an order $k$ Lagrangian and $\mu$ is a smooth density.
\end{df}

We now turn to examining how this construction works functorially in open subsets of the manifold.
Note that if $U \subset V$ is an inclusion of opens in~$X$,
there is a restriction map $\loc_{(k)}(V) \to \loc_{(k)}(U)$ .
This feature is natural: 
if a field $\phi$ has compact support in $U$, 
then one can extend $\phi$ to a larger open $V \supset U$,
and hence a functional $F$ defined on $\Ecal_c(V)$ can be evaluated on $\phi$, so $F$ determines a functional on $\Ecal_c(U)$.
By varying over opens in $X$, the construction $\loc$ thus defines a functor from $\Opens(X)^{op}$ to graded vector spaces.
In other words, $\loc_{(k)}$ is a {\em sheaf}.

\begin{df}
\label{def of loc}
Let $\loc$ denote the colimit over $k$ of the sheaves~$\loc_{(k)}$.
\end{df}

\begin{rem}
Note that a global section $F \in \loc(X)$ need not have a finite order $k$, i.e., may not be an element of some $\loc_{(k)}(X)$.
A colimit of sheaves is the sheafification of the colimit of presheaves,
so a global section $F$ of $\loc(X)$ has the property that for every point $x \in X$, there is some neighborhood $U$ where $F|_U$ has finite order but this order might vary in the choice of point $x$.
Thus, unless $X$ is compact, a global bound on order is not guaranteed.
This kind of issue will recur repeatedly in the paper.
\end{rem}



\begin{rem}
The notation Loc is often used for a category of spacetimes in the AQFT community.
We do not work in that setting in this paper,
so we use it for local functionals.
\end{rem}

Note, however, that different choices of pairs $(\alpha,\mu)$ can produce the same functional.
In particular, note that $(f \alpha, \mu)$ and $(\alpha, f \mu)$ determine the same functional when $f \in \cinfty(X)$
because they determine the same integrand 
\[
(f \alpha(\jet_k \phi)) \mu = \alpha(\jet_k \phi) (f\mu)
\]
for every field $\phi$, i.e., these integrands agree pointwise, even before integrating.
We now describe the vector space of such integrands in more abstract terms.

Let $\pi^k$ denote the projection map from $J^k E$ to~$X$.
By pulling back along $\pi^k$, every function $f \in \cinfty(X)$ becomes a function on ${\rm J}^k E$,
and so the order $k$ Lagrangians $\cinfty({\rm J}^k E)$ are a $\cinfty(X)$-module.
The densities $\cDens(X)$ are also a $\cinfty(X)$-module.

\begin{df}
The order $k$ {\em Lagrangian densities} form a sheaf assigning to an open set $U$, the vector space
\[
\lag_{(k)}(U) = \cinfty({\rm J}^k E|_U) \otimes_{\cinfty(U)} \cDens(U),
\] 
and the {\em Lagrangian densities} are the colimit over $k$ of  the sheaves~$\lag_{(k)}$.
\end{df}
 
There is a canonical map
\[
q_{(k)}: \lag_{(k)} \to \loc_{(k)}
\]
sending $\alpha \otimes \mu$ to~$F_{\alpha,\mu}$.
Taking the colimit over $k$, we obtain a map
\[
q: \lag(X) \to \loc(X).
\]
In the next subsection we discuss how total derivatives are in the kernel of~$q$.

\begin{rem}
\label{analyticity of loc}
We have allowed the Lagrangian $\alpha$ to be a smooth function on a jet bundle.
In practice the primary examples of interest for physicists are polynomial or analytic functions {\it along the fiber} of the jet bundle:
consider, for instance, how most action functionals involve polynomials built out of derivatives of the fields.
(For a more detailed discussion of this notion, see Section~\ref{analytic variant} below.)
The restriction to fiberwise analytic functions becomes essential when we quantize, 
as Epstein-Glaser renormalization uses power series in a central way.
In defining local and multilocal functionals --- and classical observables --- we will work with smooth Lagrangians.
But in section~\ref{analyticity of obscl} we make the restriction to fiberwise analytic functionals in order to quantize.
\end{rem}

In practice, we want functionals on $\Ecal$, the noncompactly supported fields, not on the compactly supported fields.
Thus, let $\loc_{(k),c}(X)$ denote the subspace of $C^\infty(\Ecal(X))$ spanned by functionals $F_{\alpha,\mu}$ where $\alpha$ is a Lagrangian of order $k$ and $\mu$ is a compactly-supported smooth density.
This notion is also functorial in opens, but it is covariant,
so $\loc_{(k),c}$ forms a {\em precosheaf} on $X$.
Let $\loc_c$ denote the precosheaf of compactly supported local functionals.

We also have a precosheaf $\lag_c$ of compactly supported Lagrangian densities. 
Integration again determines a map $q: \lag_c \to \loc_c$ that produces a local functional from a compactly-supported Lagrangian density.

We want to show that $\loc_c$ is, in fact, a cosheaf.
As a first step, notice the following.

\begin{lemma}
The functor $\lag_c$ is a cosheaf.
\end{lemma}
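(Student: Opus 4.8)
The plan is to reduce the cosheaf property of $\lag_c$ to the already-established warm-up Lemma (that $\Ecal_c$ is a cosheaf for any graded vector bundle $E$), by recognizing $\lag_c$ as the compactly-supported sections of an appropriate sheaf of $\cinfty$-modules and then invoking a partition-of-unity argument. Concretely, for each order $k$, the sheaf $\lag_{(k)}(U) = \cinfty({\rm J}^k E|_U) \otimes_{\cinfty(U)} \cDens(U)$ is a sheaf of modules over $\cinfty$; the key structural observation is that it is \emph{soft} and admits partitions of unity, because tensoring a $\cinfty(U)$-module by a smooth function of compact support yields again a compactly-supported element. I would first verify the precosheaf (functoriality) claim, which is already asserted in the text: extension by zero sends a compactly-supported Lagrangian density on $U$ to one on any larger open $V \supset U$, since the integrand vanishes outside the original support.

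\emph{First} I would fix a cover $\{U_i\}$ of an open set $U$ and aim to show that the Čech-type sequence
\[
\bigoplus_{i,j} \lag_c(U_i \cap U_j) \to \bigoplus_i \lag_c(U_i) \to \lag_c(U) \to 0
\]
is exact. Surjectivity is the heart of the matter: given a compactly-supported Lagrangian density $\omega \in \lag_c(U)$, one chooses a partition of unity $\{\chi_i\}$ subordinate to the cover, with each $\chi_i$ compactly supported, so that $\chi_i \omega \in \lag_c(U_i)$ and $\sum_i \chi_i \omega = \omega$. Because $\lag_{(k)}$ is a $\cinfty$-module sheaf, multiplication by the smooth function $\chi_i$ is a legitimate operation that preserves the order $k$ and shrinks the support into $U_i$; passing to the colimit over $k$ preserves this, since a partition of unity does not raise the Lagrangian order. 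This realizes $\omega$ as a local-to-global combination of pieces subordinate to the cover, which is exactly the cosheaf surjectivity condition.

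\emph{Next} I would treat exactness in the middle (the cocycle condition), again by a partition-of-unity collapse: any family $(\omega_i) \in \bigoplus_i \lag_c(U_i)$ whose images agree in $\lag_c(U)$ (i.e.\ $\sum_i \omega_i = 0$ after extension by zero) can be written as a boundary of elements in $\bigoplus_{i,j}\lag_c(U_i \cap U_j)$, using the same $\{\chi_i\}$ to split each $\omega_i$ across the overlaps. This is the standard argument for compactly-supported sections of a soft sheaf, and the only subtlety is bookkeeping the supports so that each piece $\chi_j \omega_i$ genuinely lands in $\lag_c(U_i \cap U_j)$.

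\emph{The main obstacle} I anticipate is the interaction between the colimit over the order $k$ and the colimit/quotient defining the cosheaf structure. A global Lagrangian density need not have a uniform finite order (as flagged in the Remark following Definition~\ref{def of loc}), so one must confirm that the partition-of-unity decomposition and the Čech exactness commute with the filtered colimit over $k$. Since filtered colimits are exact and partition-of-unity multiplication preserves order, this should go through cleanly, but it is the step where one must be careful rather than invoke the warm-up Lemma verbatim. The cleanest route is to establish the cosheaf property for each $\lag_{(k),c}$ first, where the module-theoretic softness is manifest, and then take the filtered colimit over $k$, using that a filtered colimit of cosheaves is a cosheaf.
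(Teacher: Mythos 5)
Your proposal is correct and takes essentially the same approach as the paper: both reduce to the warm-up example by using the $\cinfty$-module structure on Lagrangian densities, so that a partition of unity $\{\psi_j\}$ decomposes any $\alpha \in \lag_c(U)$ as $\alpha = \sum_j \psi_j \alpha$ subordinate to the cover. Your write-up is simply more explicit than the paper's (which leaves the \v{C}ech middle-exactness and the filtered colimit over the order $k$ implicit), and your handling of those two points is sound.
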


\begin{proof}
To prove this, we need to borrow from our warm-up example.
As densities are sections of a vector bundle, 
one can take compactly supported sections to get a cosheaf, 
thanks to partitions of unity.
One immediately extends this observation to compactly supported Lagrangian densities as follows. 
For any $\phi \in \Ecal$ and any compactly supported Lagrangian density $\alpha \in \lag_c$, 
we have a density $\alpha(\jet_k \phi)$. 
A partition of unity $\{\psi_j\}_{j \in J}$ lets this density be decomposed subordinate to an arbitrary cover as
\[
\alpha(\jet_k \phi) = \sum_j \psi_j \alpha(\jet_k \phi)
\]
This construction is independent of $\phi$ because $\alpha = \sum_j \psi_j \alpha$ decomposes $\alpha$ subordinate to the cover as well.
\end{proof}

We now state a claim of import to us.

\begin{lemma}
\label{lem: loc is cosheaf}
This functor $\loc_c$ is a cosheaf on $X$ with values in graded vector spaces. 
\end{lemma}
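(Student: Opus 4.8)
The plan is to present $\loc_c$ as the cokernel of a map of cosheaves and then to invoke the elementary principle that the pointwise cokernel of a map of cosheaves of vector spaces is again a cosheaf. Recall that the strict cosheaf condition for a precosheaf $\mathcal F$ requires, for every cover $\{U_i\}$ of an open $U\subset X$, that the augmented sequence $\bigoplus_{i,j}\mathcal F(U_i\cap U_j)\to\bigoplus_i\mathcal F(U_i)\to\mathcal F(U)\to 0$ be exact; equivalently, that $\mathcal F(U)$ be the colimit of the \v{C}ech diagram of the cover. Now let $g\colon\mathcal A\to\mathcal B$ be a map of cosheaves and let $\mathcal C$ be its pointwise cokernel. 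Both the \v{C}ech colimit and the cokernel are colimits, and colimits commute with one another, so $\mathcal C(U)=\operatorname{coker}(g_U)=\operatorname{coker}(\colim_{\mathcal J} g)=\colim_{\mathcal J}\operatorname{coker}(g_\bullet)=\colim_{\mathcal J}\mathcal C$, where $\mathcal J$ is the \v{C}ech diagram. Hence $\mathcal C$ is a cosheaf. Note that this argument never inspects which covers are allowed, so it applies verbatim to the Weiss covers relevant to the factorization structure.

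To apply this I would take $g=\d_H$, the total (horizontal) derivative. Beside $\lag_c=\lag^{(n)}_c$, for each $0\le p\le n$ one has the precosheaf $\lag^{(p)}_c$ of compactly supported local $p$-forms, that is, horizontal $p$-forms whose coefficients are functions on a jet bundle; each is a cosheaf by the very partition-of-unity argument just used for $\lag_c$, since it is once more assembled from compactly supported sections of a vector bundle. The total derivative $\d_H\colon\lag^{(n-1)}_c\to\lag^{(n)}_c=\lag_c$ is a local differential operator, so it commutes with extension by zero and is therefore a map of cosheaves. It then suffices to identify $\loc_c$ with the pointwise cokernel of $\d_H$, compatibly with the extension maps; granting this, the principle above shows at once that $\loc_c$ is a cosheaf.

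The required identification $\loc_c(U)\cong\lag_c(U)/\operatorname{im}(\d_H|_U)$ reduces, since $q$ is by construction a surjection onto $\loc_c$, to the equality $\ker(q_U)=\operatorname{im}(\d_H|_U)$ for each $U$. The inclusion $\operatorname{im}(\d_H)\subseteq\ker q$ is Stokes' theorem --- a compactly supported total divergence integrates to zero on every field, so $q(\d_H\beta)=0$ --- and is exactly the content announced for the next subsection. The reverse inclusion $\ker q\subseteq\operatorname{im}(\d_H)$ is the substantive step and the main obstacle: if $F_{\alpha,\mu}$ vanishes as a functional on all of $\Ecal(U)$, then all of its variations vanish, so its Euler--Lagrange (source) form is zero, and the exactness of the variational bicomplex in top horizontal degree --- together with the compact support, which discards boundary terms and removes the de Rham cohomological ambiguity --- forces $\alpha\mu$ to be $\d_H$-exact. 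This is the one point that draws on genuine variational-bicomplex input rather than soft functoriality; once it is in hand, all of the cosheaf bookkeeping is formal.
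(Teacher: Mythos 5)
Your overall architecture is exactly the paper's: Remark~\ref{proof of loc is cosheaf} presents $\loc_c$ as the cokernel of the map of cosheaves $\d_\llag\colon \llag_c^{n-1}\to\llag_c^{n}$ and invokes the fact that colimits of cosheaves are computed as colimits of precosheaves (your ``colimits commute with colimits''), and the paper's explicit proof of Lemma~\ref{lem: loc is cosheaf} is a diagram chase through the same diagram, with partitions of unity supplying surjectivity just as in your sketch. Where you differ is that the paper treats the column exactness $\ker q_U=\operatorname{im}(\d_\llag|_U)$ as essentially definitional, while you correctly isolate it as the substantive input and attempt to justify it; that extra care is warranted, but your justification does not hold as stated.

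The problem is your parenthetical claim that compact support ``removes the de Rham cohomological ambiguity.'' It does the opposite: compact support is precisely what \emph{creates} an obstruction, living in $H^n_c(U)$. Concretely, if $\Lcal\in\llag^n_c(U)$ has vanishing Euler--Lagrange form, the fiber-scaling (vertical) homotopy of the variational bicomplex---which preserves horizontal compact support---gives $\Lcal=\Lcal_0+\d_\llag\eta$ with $\eta\in\llag^{n-1}_c(U)$ and $\Lcal_0\in\Omega^n_c(U)$ the field-independent restriction of $\Lcal$ to the zero section; such an $\Lcal$ is $\d_\llag$-exact with compact support if and only if $[\Lcal_0]=0$ in $H^n_c(U)$. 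But vanishing of the functional $q(\Lcal)$ only gives $\int_U\Lcal_0=0$, i.e., vanishing of the \emph{sum} of the integrals over the connected components of $U$, and this forces $[\Lcal_0]=0$ only when $U$ is connected. Since the cosheaf axiom quantifies over all opens and all covers, disconnected $U$ cannot be avoided: on $U=U_1\sqcup U_2$, the field-independent density $\mu_1-\mu_2$ with $\int\mu_1=\int\mu_2=1$ and $\supp\mu_i\subset U_i$ lies in $\ker q$ but not in $\operatorname{im}\d_\llag$, and correspondingly the pair of constant functionals $(1,-1)\in\loc_c(U_1)\oplus\loc_c(U_2)$ is killed by the gluing map even though $\loc_c(U_1\cap U_2)=0$. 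This troublesome class is exactly the ``constant'' cohomology (the $\RR$ in degree $0$ of Theorem~\ref{olver thm}) that the mapping cone defining $\lloc^\bullet$ is built to excise in the dg setting. To be fair, the paper's own proof leans on the same unproven exactness (``each column is exact by the definition of local functionals''), so your proposal is faithful to the paper's route; but the step you yourself singled out as the main obstacle is where the argument genuinely needs repair---by restricting to connected opens, by quotienting out constants, or by passing to the dg replacement $\lloc_c$.
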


For the sake of clarity, we mean here the topology on $X$ as a topological space, 
not the Grothendieck site on the category $\Opens(X)$ determined by Weiss covers,
which appears when we discuss factorization algebras.

In the proof, we will use some facts about local differential forms, 
which are defined in the next section.
Hence the proof appears after remark~\ref{proof of loc is cosheaf} in that section.

\subsection{Local functionals, the dg version}

We would like to have a {\em homotopy} cosheaf of local functionals,
i.e., we would like $\loc_c(U)$ to be quasi-isomorphic to the homotopy colimit of $\loc_c$ on the simplicial diagram arising from any sufficiently nice cover $\{U_i\}$ of an open subset~$U$.
(
Unfortunately, it does not seem that $\loc_c$ is a homotopy cosheaf,
so we replace our notion of local functional above by a derived version that does have this local-to-global behavior.
Loosely speaking, we replace densities $\cDens$ by the de Rham complex;
instead of working with Lagrangian densities modulo total derivatives,
this complex itself implements that quotient when taking cohomology.
Moreover, we can then use partitions of unity acting on the de Rham complex to check the homotopy cosheaf condition.

This kind of issue is analogous to the use of resolutions in sheaf theory.
For example, the constant sheaf $\underline{\RR}$ on $X$ is a sheaf of vector spaces,
but it is not a homotopy sheaf (i.e., a sheaf with values in the $\infty$-category of cochain complexes).
The de Rham complex provides a soft resolution of the constant sheaf $\underline{\RR}$ and it does define a homotopy sheaf.x

\begin{rem}
For those familiar with the variational bicomplex, 
we will use a quotient complex that consists of local functions on the fields with values in de Rham forms on the spacetime.
In other words, it is the bottom horizontal row of the variational bicomplex.
\end{rem}

We now explain our construction in detail.

\def\orlb{{\rm or}}

First, we explain the de Rham complex associated to densities.
When $X$ is an oriented $n$-manifold, the density line $\Dens \to X$ is precisely $\Lambda^n T^*_X$,
so that densities $\cDens$ are the smooth top forms.
{\it We will assume throughout this paper that $X$ is oriented,}
as it is a standard hypothesis in algebraic quantum field theory and because it is straightforward (but notationally distracting) to deal with the general case.
(See the next remark.)

Thus, we see 
\[
\Omega^0(X) \xto{\d} \Omega^1(X) \xto{\d} \cdots \xto{\d} \Omega^n(X) = \cDens(X).
\]
If we shift this complex and place it in degrees $-n$ to $0$, 
then densities sit in degree 0,
and the zeroth cohomology consists of densities modulo total derivatives.
As we are working with sections of vector bundles and the differential in this complex is a differential operator,
it manifestly determines a sheaf on~$X$.

\begin{rem}
When $X$ is unoriented, however, the density line is the tensor product $\Lambda^n T^*_X \otimes \orlb_X$,
where $\orlb_X \to X$ denotes the orientation line bundle,
which has a natural flat connection.
We can use instead the de Rham complex for this flat line bundle.
Note that $X$ is oriented if and only if the orientation bundle is trivial.
\end{rem}

Second, we now provide a new, cochain version of Lagrangian densities.
The idea is just a simple extension of the earlier version: 
given a function $\alpha \in C^\infty({\rm J}^k E)$ and a $p$-form $\mu$ on $X$,
we consider the {\em Lagrangian $p$-form}
\[
\alpha(\jet_k(\phi)) \mu,
\]
where $\phi$ denotes an arbitrary field.
Such a Lagrangian $p$-form is a map from $\Ecal(X)$ to $p$-forms on $X$;
in fact, by construction, 
\[
\alpha(\jet_k(-)) \mu: \Ecal \to \Omega^p_X
\]
is  a map of sheaves on $X$.
In this precise sense it is local on~$X$.

Let us make a more explicit characterization.
Recall that $\pi^k$ denotes the projection map from ${\rm J}^k E$ to~$X$,
and let $\pi^k_* C^\infty_{{\rm J}^k E}$ denote the sheaf obtained by pushing forward along $\pi^k$ of smooth functions on the total space of this jet bundle.
(That means this sheaf on $X$ satisfies
\[
\pi^k_* C^\infty_{{\rm J}^k E}(U) = C^\infty_{{\rm J}^k E}((\pi^k)^{-1}(U))
\]
for any open set $U \subset X$.)
An $\alpha$ as above can be viewed as a section of this sheaf on $X$,
and the composite $\alpha \circ \jet_k$ is a map of sheaves from $\Ecal$ to $C^\infty_X$.
(Note that we mean here maps of sets, not linear maps.)

\begin{lemma}
The sheaf map
\[
-\circ \jet_k: \pi^k_* C^\infty_{{\rm J}^k E} \to {\rm Maps}(\Ecal, C^\infty_X).
\]
is injective on each open $U \subset X$.
\end{lemma}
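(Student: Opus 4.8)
The plan is to reduce injectivity to triviality of the kernel and then to exploit that every finite jet at a point is realized by an honest field. The assignment $\alpha \mapsto \alpha \circ \jet_k$ is manifestly $\RR$-linear: for fixed $\phi \in \Ecal(U)$ and $x \in U$, the value $(\alpha \circ \jet_k)(\phi)(x) = \alpha(\jet_k\phi(x))$ is just the evaluation of $\alpha$ at the fixed point $\jet_k\phi(x)$ of the total space of $J^kE|_U$, and evaluation at a point is linear in $\alpha$. Thus it suffices to show that if $\alpha \circ \jet_k$ is the zero map of sheaves on $U$, then $\alpha = 0$ as a smooth function on the total space of $J^kE|_U$.

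To verify $\alpha = 0$ pointwise, I would fix an arbitrary point $\xi$ in the total space of $J^kE|_U$, lying over some $x \in U$, and produce a field $\phi \in \Ecal(U)$ with $\jet_k\phi(x) = \xi$. Granting such a $\phi$, the hypothesis gives $\alpha(\xi) = \alpha(\jet_k\phi(x)) = (\alpha\circ\jet_k)(\phi)(x) = 0$, and since $\xi$ was arbitrary we conclude $\alpha \equiv 0$, which is exactly injectivity.

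The one substantive step --- and the main point to check carefully --- is the realizability of $\xi$ by a genuine field, i.e.\ the surjectivity of the $k$th prolongation onto a single fiber of $J^kE$. I would argue locally: pick a chart around $x$ together with a local frame trivializing $E$, so that $\xi$ records a finite list of prescribed derivatives up to order $k$ at $x$. The Taylor polynomial with these coefficients is a local section whose $k$-jet at $x$ is precisely $\xi$; multiplying it by a bump function equal to $1$ near $x$ and compactly supported in the chart leaves the $k$-jet at $x$ untouched, and extending by zero produces a global smooth section $\phi \in \Ecal(U)$. This is elementary and of finite order, so no appeal to Borel's theorem on infinite jets is needed.

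Two bookkeeping remarks. First, because we only claim injectivity on each open $U$ separately, there is no gluing or descent to worry about --- the argument lives entirely pointwise in the total space of $J^kE|_U$. Second, although the vanishing of $\alpha\circ\jet_k$ as a sheaf map permits using locally defined test sections, the bump-function extension shows that global fields on $U$ already suffice, so the two readings of the hypothesis agree.
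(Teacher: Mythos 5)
Your proposal is correct and follows essentially the same route as the paper: both reduce injectivity to the vanishing of the kernel by linearity, and both conclude by realizing an arbitrary point of the total space of ${\rm J}^k E$ as the $k$-jet of an actual field. The paper simply asserts that such a field exists, whereas you supply the standard construction (Taylor polynomial cut off by a bump function), which is a reasonable filling-in of that step rather than a different argument.
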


\begin{proof}
The map is linear, by inspection, so it suffices to show that the kernel is zero.
Suppose $\alpha$ in $\pi^k_* C^\infty_{{\rm J}^k E}$ is sent to zero in ${\rm Maps}(\Ecal, C^\infty_X)$.
Then for every field $\phi$, we have that $\alpha(\jet_k(\phi)) = 0$, 
but this implies $\alpha$ vanishes at every point $x$ in ${\rm J}^k E$,
as we can produce a $\phi$ whose prolongation passes through~$x$.
Hence $\alpha = 0$, as desired.
\end{proof}

This set-up carries over to maps with values in $p$-forms as well.
Consider the composite map
\[
\pi^k_* C^\infty_{{\rm J}^k E} \times \Omega^p_X \xto{(-\circ \jet_k) \times \id } {\rm Maps}(\Ecal, C^\infty_X) \times \Omega^p_X \xto{\cdot} {\rm Maps}(\Ecal, \Omega^p_X)
\]
sending $(\alpha, \mu)$ to the functional $\phi \mapsto \alpha(\jet_k(\phi)) \mu$.
This map is bilinear by inspection, so we can work with the (algebraic) tensor product $\pi^k_* C^\infty_{{\rm J}^k E} \otimes \Omega^p_X$. 
Note, however, that for $f$ a smooth function on $X$, 
the image of $(f\alpha, \mu)$ and $(\alpha, f\mu)$ are the same.
We thus want to work with
\begin{equation}
\label{eqn Fp}
\Upsilon^p: \pi^k_* C^\infty_{{\rm J}^k E} \otimes_{\cinfty_X} \Omega^p_X \to {\rm Maps}(\Ecal, \Omega^p_X)
\end{equation}
and we call any functional of the form $\Upsilon^p(\alpha, \mu)$ a {\em Lagrangian $p$-form}.

\begin{rem}
Note that $\pi^k_* C^\infty_{{\rm J}^k E}$ is a subsheaf of ${\rm Maps}(\Ecal, C^\infty_X)$,
but it is closed under the right action of $C^\infty_X$ via the canonical map $(\pi^k)^*: C^\infty_X \to \pi^k_*C^\infty_{{\rm J}^k E}$ of sheaves of algebras.
Hence we can take the relative tensor product used in the definition.
\end{rem}

\begin{df}
\label{lag p-forms}
The sheaf of order $k$ {\em Lagrangian $p$-forms} denotes 
\[
\llag_{(k)}^p = \pi^k_* C^\infty_{{\rm J}^k E} \otimes_{C^\infty_X} \Omega^p_X,
\]
which embeds via $\Upsilon^{p}$ into~${\rm Maps}(\Ecal, \Omega^p_X)$.

Let the {\em Lagrangian $p$-forms} $\llag^p(X)$ denote the colimit over $k$ of the~$\llag_{(k)}^p(X)$.
\end{df}

Having defined the Lagrangian $p$-forms, we need to explain why there is a de Rham-type operator going from $p$-forms to $p+1$-forms and yielding a cochain complex.

Observe that for any smooth field $\phi$,
the $p$-form $\alpha(\jet_k(\phi)) \mu$ is smooth because the prolongation $\jet_k$ and $\alpha$ are smooth.
Thus it is possible to apply the usual exterior derivative $\d$ to this $p$-form.

\begin{lemma}
\label{lem: d makes sense}
For any Lagrangian $p$-form $\Lcal$, there is a Lagrangian $p+1$-form $\Lcal'$ such that $\d(\Lcal(\phi)) = \Lcal'(\phi)$ for any field~$\phi$.
\end{lemma}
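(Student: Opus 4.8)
The plan is to prove the statement locally and then observe that the local construction is canonical, hence glues. The key point is that a Lagrangian $p$-form $\Lcal = \Upsilon^p(\alpha, \mu)$ is built from a function $\alpha$ on the jet bundle $\jet^k E$ and a genuine $p$-form $\mu$ on $X$; to differentiate $\Lcal(\phi) = \alpha(\jet_k(\phi))\,\mu$ I would apply the Leibniz rule, writing $\d(\alpha(\jet_k(\phi))\,\mu) = \d(\alpha(\jet_k(\phi))) \wedge \mu + \alpha(\jet_k(\phi))\,\d\mu$. The second term is immediately of the desired form, namely $\Upsilon^{p+1}(\alpha, \d\mu)$, so the real content lies in showing that $\d(\alpha(\jet_k(\phi)))$ is again of the form ``a jet-bundle function evaluated on $\phi$, paired with a $1$-form.''

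\medskip

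First I would work in a local chart with coordinates $x^1, \dots, x^n$ on $X$ and fiber coordinates $u^I_{J}$ on $\jet^k E$, where $J$ is a multi-index of length $\le k$ recording the order of the derivative and $I$ indexes a local frame of $E$. The crucial identity is the chain rule for prolongations: under $\phi \mapsto \jet_k(\phi)$, the coordinate $u^I_{J}$ pulls back to $\partial_J \phi^I$, and differentiating along $X$ raises the derivative order. Concretely,
\[
\frac{\partial}{\partial x^a}\big(\alpha(\jet_k(\phi))\big) = \sum_{I, |J| \le k} \frac{\partial \alpha}{\partial u^I_J}\big(\jet_k(\phi)\big)\,\partial_a \partial_J \phi^I = \big(D_a \alpha\big)(\jet_{k+1}(\phi)),
\]
where $D_a$ denotes the \emph{total derivative} operator, a first-order differential operator sending a function on $\jet^k E$ to a function on $\jet^{k+1} E$. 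This identity exhibits $\partial_a(\alpha(\jet_k(\phi)))$ as the evaluation on $\phi$ of the jet-bundle function $D_a \alpha$ of order $k+1$. Hence $\d(\alpha(\jet_k(\phi))) = \sum_a (D_a\alpha)(\jet_{k+1}(\phi))\,\d x^a$, which is precisely $\Upsilon^{1}$ applied to a family of order-$(k+1)$ Lagrangians wedged against $\d x^a$. Combining with the Leibniz term and wedging with $\mu$ yields a Lagrangian $(p+1)$-form $\Lcal'$ of order $k+1$, so that $\Lcal' = \Upsilon^{p+1}\!\big(\sum_a D_a\alpha \otimes (\d x^a \wedge \mu) + \alpha \otimes \d\mu\big)$.

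\medskip

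The main obstacle I anticipate is not the local computation, which is standard variational-bicomplex bookkeeping, but rather verifying that $\Lcal'$ is well-defined \emph{globally}, i.e., that the total derivative $D_a\alpha$ and the assignment $\alpha \mapsto \Lcal'$ are independent of the chosen coordinate chart and of the representative $(\alpha, \mu)$ modulo the relation $(f\alpha, \mu) \sim (\alpha, f\mu)$ used to form the relative tensor product in Definition~\ref{lag p-forms}. For the latter I would check directly that the Leibniz rule is compatible with the $\cinfty_X$-bilinearity, using that $\d$ itself satisfies the Leibniz rule with respect to multiplication by functions $f \in \cinfty(X)$ (here $f$ must be interpreted via its pullback $(\pi^k)^*f$ on the jet bundle, so that the two total-derivative terms arising from $f$ cancel against the $\d f$ term). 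For coordinate-independence, the cleanest route is to invoke the injectivity established in the preceding lemma: since $\Upsilon^{p+1}$ is injective and $\d(\Lcal(\phi))$ is manifestly a well-defined $(p+1)$-form for each fixed field $\phi$, any two local expressions for $\Lcal'$ agree after applying $\Upsilon^{p+1}$, hence agree. This injectivity argument sidesteps an explicit change-of-coordinates verification and makes the globalization essentially automatic. Finally, passing to the colimit over $k$ poses no difficulty, since the order only increases by one.
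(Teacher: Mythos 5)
Your proposal is correct and follows essentially the same route as the paper's proof: apply the Leibniz rule to split off the $\alpha(\jet_k(\phi))\,\d\mu$ term, then use the chain rule in a local chart to exhibit $\d(\alpha(\jet_k(\phi)))$ as the evaluation on $\jet_{k+1}(\phi)$ of an order-$(k+1)$ Lagrangian (the total derivative), exactly as in the paper's displayed formula~\eqref{defofdlag}; your extra care about well-definedness via injectivity of $\Upsilon^{p+1}$ is a reasonable substitute for the paper's partition-of-unity localization. One small slip: your displayed chain rule omits the explicit base-coordinate term $\frac{\partial \alpha}{\partial x^a}\big(\jet_k(\phi)\big)$, which the paper includes and which is part of the total derivative $D_a$ as you define it, so your final identification $(D_a\alpha)(\jet_{k+1}(\phi))$ is nonetheless correct.
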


\begin{proof}
By applying a partition of unity, we can check the claim locally in a coordinate patch over which we have chosen a frame to trivialize the bundle $E$.
Moreover, it suffices to check on a Lagrangian $p$-form of the form $\alpha(\jet_k(-)) \mu$, as above.
For an arbitrary field $\phi$, we then compute
\[
\d\left( \alpha(\jet_k(\phi)) \mu\right) =\d\left( \alpha(\jet_k(\phi)) \right)\mu + \alpha(\jet_k(\phi))  \d \mu.
\]
The rightmost term is manifestly a Lagrangian $p+1$-form, so we just need to check that the first summand is too.

Let $(x_1,\ldots, x_n)$ be coordinates in the patch, and with respect to the frame, a field decomposes as $\phi = (\phi_1,\ldots,\phi_m)$.
These choices lead to a natural frame on the jet bundle ${\rm J}^k E$ given by the $(\partial^\mu \phi_i)$ with multi-index $\mu = (m_1,\ldots,m_n)$ such that $|\mu| = \sum_{i=1}^n m_i \leq k$;
this frame plus the base coordinates equip the total space of the jet bundle with a coordinate system.

The chain rule tells us
\begin{equation}
\label{defofdlag}
\d\left( \alpha(\jet_k(\phi)) \right) 
= \sum_{i = 1}^n \left( 
\frac{\partial \alpha}{\partial x_i}\Big|_{\jet_k(\phi)(x)} 
+ \sum_{j, |\mu| \leq k} \frac{\partial \alpha}{\partial (\partial^\mu \phi_j)}\Big|_{\jet_k(\phi)(x)} \frac{\partial (\partial^\mu \phi_j )}{\partial x_i}
\right) \d x_i.
\end{equation}
A term like $\partial \alpha/\partial x_i$ manifestly encodes a function on the jet bundle ${\rm J}^k E$.
A term like $\partial \alpha/\partial (\partial^\mu \phi_j)$ is also a function on the jet bundle,
so the second term arises by pulling back the function 
\[
\frac{\partial \alpha}{\partial (\partial^\mu \phi_j)} \frac{\partial}{\partial x_i}  \left(\partial^\mu \phi_j \right)
\]
from the jet bundle ${\rm J}^{k+1} E$ along the prolongation $\jet_{k+1} \phi$.
(Note that this function might depend on a derivative of a $k$-jet of the field.)
\end{proof}

Let $\d_{\llag}$ denote the operator on Lagrangian differential forms that is induced by the exterior derivative $\d$ on differential forms:
if $F_{\alpha,\mu}$ denotes the Lagrangian $p$-form associated to $\alpha$ and $\mu$, we define
\[
(\d_{\llag} F_{\alpha,\mu})(\phi) = \d(\alpha(\jet_k(\phi))\mu). 
\]
An explicit description in coordinates is given in equation~\eqref{defofdlag} above.
This induced operator inherits the usual properties of the exterior derivative,
and, in particular, is square-zero.

This observation suggests that we work with the cochain complex~$\llag^\bullet$, given by
\begin{equation}
\label{olver complex}
\llag^0 \xto{\d_\llag} \llag^1 \xto{\d_\llag} \cdots \xto{\d_\llag} \llag^n   
\end{equation}
and concentrated in degrees $0$ to n,
where $\d_{\llag}$ denotes the differential induced by the exterior derivative and constructed above in Lemma~\ref{lem: d makes sense}.
The cohomology in degree $n$ contains, as a summand, the local functionals already defined.
There is, however, some unwanted cohomology, even locally, by the following result.

\begin{thm}[\cite{Olver}, Theorem~5.80]
\label{olver thm}
On any star-convex open $U \subset \RR^n$, 
the cohomology of the complex $\llag^\bullet(U)$ is $\RR$ in degree~$0$, $\loc(U)$ in degree~$n$, and vanishes in all other degrees.
\end{thm}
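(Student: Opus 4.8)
The plan is to reduce the statement to a computation of the cohomology of the horizontal (``total derivative'') row of the variational bicomplex on a star-convex region, which is a classical result in the style of the algebraic Poincar\'e lemma. The key structural observation is that the complex $\llag^\bullet(U)$ is built from $\pi^k_* C^\infty_{{\rm J}^k E} \otimes_{C^\infty_X} \Omega^p_X$ with the differential $\d_\llag$ that, by Lemma~\ref{lem: d makes sense} and equation~\eqref{defofdlag}, acts as the \emph{total} exterior derivative on the jet bundle: on each generator it combines the explicit $x$-derivative with the chain-rule contribution running through the jet coordinates $\partial^\mu \phi_j$. Thus $\llag^\bullet(U)$ is (after taking the colimit over jet order $k$, so that $\d_\llag$ which raises $k$ by one is well-defined) precisely the complex of horizontal forms on the infinite jet bundle $J^\infty E|_U$ with the horizontal differential $\d_H$. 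This is exactly the object Olver's Theorem~5.80 computes, so the bulk of the work is to make this identification precise and then invoke the cited result.

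First I would record that passing to the colimit over $k$ is harmless: $\d_\llag$ maps $\llag^p_{(k)}$ into $\llag^{p+1}_{(k+1)}$, so the differential is only defined on the colimit $\llag^\bullet = \colim_k \llag^\bullet_{(k)}$, which is the complex of functions on the total space of the pro-object $J^\infty E$ valued in $\Omega^\bullet_X$. Second, I would restrict to $U \subset \RR^n$ star-convex and trivialize $E$ over $U$ with a frame, reducing the problem to the model case $E = U \times \RR^m$, exactly the normal form invoked in the proof of Lemma~\ref{lem: d makes sense}; since the statement is local and coordinate-independent, this loses nothing. Third, I would match our generators $\alpha(\jet_k(-))\,\mu$, with $\alpha \in C^\infty({\rm J}^k E)$ and $\mu \in \Omega^p_X$, against Olver's horizontal forms, and check that $\d_\llag$ agrees with his total derivative operator $D$ under this dictionary; the formula~\eqref{defofdlag} is precisely the statement that $\d_\llag$ acts by $D_i = \partial/\partial x_i + \sum_{j,\mu} (\partial^{\mu+e_i}\phi_j)\,\partial/\partial(\partial^\mu\phi_j)$, which is Olver's total derivative.

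With the identification in hand, Theorem~\ref{olver thm} follows by citing \cite{Olver}, Theorem~5.80: on star-convex $U$ the horizontal cohomology of the infinite jet complex is $\RR$ in degree $0$, the space of local functionals (``functional forms'' / the quotient of top-degree horizontal forms by total divergences) in degree $n$, and zero in between. The one genuine verification is that Olver's degree-$n$ cohomology agrees with \emph{our} $\loc(U)$: by Definition~\ref{def of loc} and the discussion of the map $q \colon \lag \to \loc$, a local functional is exactly a Lagrangian density modulo total derivatives, i.e. $\Omega^n$-valued Lagrangians modulo the image of $\d_\llag$, which is the cokernel Olver computes as $H^n$. The degree-$0$ term $\RR$ records the locally constant functions, i.e. Lagrangians independent of the fields and of $x$ that are $\d_\llag$-closed, matching the constants in Olver's statement.

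The main obstacle is the careful translation between the two formalisms rather than any hard analysis: Olver works with the infinite jet bundle and the variational bicomplex over a fixed base coordinate patch, whereas our $\llag^\bullet$ is phrased in terms of the sheaves $\pi^k_* C^\infty_{{\rm J}^k E}$ and the functional interpretation $\alpha \mapsto \alpha(\jet_k(-))$. I would need to confirm that the injectivity established in the Lemma preceding~\eqref{eqn Fp} lets us treat Lagrangian $p$-forms genuinely as jet-bundle functions (so that ``equal as functionals on $\Ecal$'' means ``equal as jet functions''), and that the star-convexity hypothesis is used only through Olver's homotopy operator, which requires no decay or support conditions. No subtlety arises from the fields being noncompactly supported, since the statement concerns the algebraic/local cohomology of the jet complex, not integration; the compact-support bookkeeping enters only later when one applies $q$ to obtain honest functionals.
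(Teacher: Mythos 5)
Your proposal is correct and coincides with the paper's own treatment: the paper gives no independent argument for this statement, presenting it directly as a citation of Olver's Theorem~5.80 (the ``algebraic Poincar\'e lemma''), having already remarked that $\llag^\bullet$ is precisely the bottom horizontal row of the variational bicomplex --- exactly the identification your proof makes explicit. The translation work you describe (passing to the colimit over jet order so that $\d_\llag$ is defined, trivializing $E$ over the star-convex open, matching $\d_\llag$ with Olver's total derivative via equation~\eqref{defofdlag}, and reading off $H^n$ as Lagrangian densities modulo total derivatives, i.e.\ $\loc(U)$) is the dictionary the paper leaves implicit, so your argument is the paper's proof with the details filled in.
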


The cohomology in degree~$0$ arises from the fact that a Lagrangian that is constant everywhere is annihilated by $\d_{\llag}$.
This theorem is a sophisticated cousin of the Poincar\'e lemma. See Section 5.4 of \cite{Olver} for a systematic treatment. 
(It is also sometimes called the {\em algebraic Poincar\'e lemma}; see \cite{H}[Section 2.3].)

There is a simple way to excise this unwanted cohomology.
There is a canonical cochain map $\iota$ from the de Rham complex $\Omega^\bullet_X$ to the complex~\eqref{olver complex} sending a $p$-form to itself, viewed as a Lagrangian $p$-form that does not depend on a field $\phi$.
Take the mapping cone of this inclusion map,
which is our key object.

\begin{df}\label{df:dL}
Let $\lloc^\bullet$ denote the cochain complex of sheaves
\[
\left( \llag^\bullet[n] \oplus \Omega^\bullet_X[n+1], \d_\lloc \right)
\]
concentrated in degrees $-n-1$ to 0,
where $\d_\lloc$ denotes the sum $\d_\llag \pm \d_{DR} + \iota$ 
with $\d_\llag$ the differential constructed in Lemma~\ref{lem: d makes sense}, with $\pm \d_{DR}$ the exterior derivative acting on the shifted de Rham complex, and with $\iota$ viewed now as a degree~1 map between the summands.

The compactly-supported sections $\lloc^\bullet_c$ of this complex determine the {\em local differential forms} on the fields~$\Ecal$.
\end{df} 

We have shifted the complexes so that Lagrangian densities appear in cohomological degree zero,
as they are the objects of central interest to us.
Indeed, we introduce this complex to find a soft resolution of local functionals.
In explicit terms
\[
\lloc^k = 
\begin{cases} 
\Omega^0_X & k = -n-1 \\
\Omega^{n+k+1}_X \oplus \llag^{n+k} & -n \leq k \leq -1 \\
\llag^n & k = 0 \\
0 & \text{else}
\end{cases}.
\]
Note that, by definition, $\llag^n$ is isomorphic to $\lag$ 
as sheaves.

There is thus a map $\int$ of sheaves of cochain complexes by the composite
\[
\lloc_c^\bullet \xto{\tau_{\geq 0}} \llag_c^n = \lag_c \xto{q} \loc_c
\]
where $\tau_{\geq 0}$ is the truncation to degree zero, which keeps only the top form part of an element of $\lloc^\bullet$.
This composite map extracts a traditional local functional from a local differential form.

As a corollary of Theorem~\ref{olver thm} and the properties of a mapping cone,
we have the following.

\begin{lemma}
On any star-convex open $U \subset \RR^n$, the map $\int_U: \lloc_c^\bullet(U) \to \loc_c(U)$ is a quasi-isomorphism.
\end{lemma}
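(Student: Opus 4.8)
The plan is to compute the cohomology of $\lloc_c^\bullet(U)$ directly from the mapping cone structure of Definition~\ref{df:dL} and then verify that $\int_U$ induces the resulting isomorphism. By construction $\lloc^\bullet = \mathrm{Cone}\bigl(\iota\colon \Omega^\bullet_X \to \llag^\bullet\bigr)[n]$, and since passing to compactly supported sections is exact and commutes with the cone, over the open $U$ we have $\lloc_c^\bullet(U)[-n] = \mathrm{Cone}\bigl(\iota_c\colon \Omega_c^\bullet(U) \to \llag_c^\bullet(U)\bigr)$. This gives a distinguished triangle
\[
\Omega_c^\bullet(U) \xto{\ \iota_c\ } \llag_c^\bullet(U) \to \lloc_c^\bullet(U)[-n] \to \Omega_c^\bullet(U)[1],
\]
whose long exact sequence reduces everything to the cohomologies of the two outer complexes and to the map $\iota_c$ induces between them.

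First I would assemble the two inputs. For the de Rham summand, $U$ is star-convex and hence contractible, so Poincar\'e duality (equivalently, the compactly supported Poincar\'e lemma) gives that $H^\bullet\bigl(\Omega_c^\bullet(U)\bigr)$ is $\RR$ concentrated in degree $n$, generated by the class of any compactly supported top form of unit integral. For the Lagrangian summand I would invoke the compactly supported refinement of Theorem~\ref{olver thm}: on such a $U$ the cohomology of $\llag_c^\bullet(U)$ is concentrated in degree $n$, where it is $\loc_c(U)$. Note that the degree-$0$ class of constant Lagrangians appearing in the non-compact statement is killed once compact support along the base is imposed, since a compactly supported locally constant function on the connected set $U$ vanishes.

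With these inputs the long exact sequence collapses to the single segment
\[
0 \to H^{n-1}\bigl(\mathrm{Cone}(\iota_c)\bigr) \to \RR \xto{\ \iota_c^{*}\ } \loc_c(U) \to H^{n}\bigl(\mathrm{Cone}(\iota_c)\bigr) \to 0,
\]
all other cohomology groups of the cone vanishing. The map $\iota_c^{*}$ sends the fundamental class to the class of a field-independent density of nonzero integral; since integrating against a fixed field detects this class while annihilating every total derivative, $\iota_c^{*}$ is injective, so $H^{n-1}\bigl(\mathrm{Cone}(\iota_c)\bigr)=0$ and the only surviving group is $H^{n}\bigl(\mathrm{Cone}(\iota_c)\bigr)$. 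After the shift by $[n]$ this says $\lloc_c^\bullet(U)$ has cohomology concentrated in degree $0$: the de Rham summand is calibrated exactly to delete the field-independent (topological) contribution, so that what survives is the space $\loc_c(U)$ of genuine local functionals. To conclude I would check that this surviving class is the one computed by $\int_U$, which follows from its definition as extracting the top-form component in $\llag_c^n = \lag_c$ and integrating via $q$; the induced map on $H^0$ is then the isomorphism delivered by the cone.

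The main obstacle is the compactly supported form of Theorem~\ref{olver thm}, since the theorem as cited concerns ordinary rather than compactly supported sections, and---exactly as for de Rham cohomology---passing to compact supports relocates the ``topological'' class from the bottom of the complex to the top. I would establish it either by re-running Olver's homotopy-operator argument with operators adapted to preserve compact support, or by a partition-of-unity comparison with the non-compact statement; in either route one must track that the constant-Lagrangian class disappears and that the field-independent densities of nonzero integral form exactly the class in the image of $\iota_c^{*}$, which is what guarantees that the de Rham summand excises only this class. The remaining bookkeeping---that the colimit over jet order $k$ and the relative tensor product defining $\llag^p$ are compatible with taking cohomology, and the exactness used to commute compact supports past the cone---is routine.
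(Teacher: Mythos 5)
Your overall route---realizing $\lloc_c^\bullet(U)$, up to shift, as $\mathrm{Cone}\bigl(\iota_c\colon \Omega^\bullet_c(U)\to\llag^\bullet_c(U)\bigr)$, feeding the compactly supported Poincar\'e lemma and a compactly supported refinement of Theorem~\ref{olver thm} into the long exact sequence of the cone---is exactly the argument the paper leaves implicit: its own proof is a one-line appeal to Theorem~\ref{olver thm} and ``the properties of a mapping cone.'' Your insistence that the cited theorem concerns non-compactly supported sections, so that a compact-support analogue must be established separately, is more careful than the paper's text on this point.

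However, the last step of your argument contradicts itself, and the contradiction is not cosmetic. In your exact sequence
\[
0 \to H^{n-1}\bigl(\mathrm{Cone}(\iota_c)\bigr) \to \RR \xto{\ \iota_c^{*}\ } \loc_c(U) \to H^{n}\bigl(\mathrm{Cone}(\iota_c)\bigr) \to 0
\]
the last group is by definition the \emph{cokernel} of $\iota_c^{*}$. You argue, correctly, that $\iota_c^{*}$ is injective: its image is the line of constant functionals $\phi\mapsto\int_U\mu$, and these are nonzero elements of $\loc_c(U)$ (take $\alpha=1$ in $F_{\alpha,\mu}$). Granting that, exactness forces $H^{0}\bigl(\lloc_c^\bullet(U)\bigr)\cong\mathrm{coker}(\iota_c^{*})=\loc_c(U)/\RR$, the quotient by the constants, and \emph{not} $\loc_c(U)$ as you conclude; you cannot simultaneously use injectivity of $\iota_c^{*}$ to kill $H^{n-1}$ and declare its cokernel to be all of $\loc_c(U)$. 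The same obstruction appears at the level of the map itself: for a degree $-1$ element $(\mu,\beta)\in\Omega^n_c(U)\oplus\llag^{n-1}_c(U)$ one has $\int_U\bigl(\d_\lloc(\mu,\beta)\bigr)=\pm\, q(\iota(\mu))$, the constant functional with value $\int_U\mu$, which need not vanish; so $\int_U$ does not annihilate exact elements and does not induce a well-defined map on $H^0$ that could ``deliver'' your isomorphism. What your argument actually proves, run to its correct end, is that the cohomology of $\lloc^\bullet_c(U)$ is concentrated in degree $0$ and equals $\loc_c(U)$ modulo the constant functionals (equivalently, the local functionals vanishing at $\phi=0$). This one-dimensional discrepancy is a genuine feature of the compactly supported setting: in the setting of Theorem~\ref{olver thm} the two ``unwanted'' classes both sit in degree $0$ and cancel against each other, whereas with compact supports they migrate to top degree and collide with the constants already present in $\loc_c(U)$. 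Your write-up surfaces this subtlety---which the paper's one-line proof glosses over entirely---but then buries it again with the unsupported phrase ``what survives is the space $\loc_c(U)$''; a complete proof must either handle the constants explicitly (quotient them out, or restrict to Lagrangians vanishing on the zero section) or concede that the statement holds only up to this line.
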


This lemma implies locally --- when $U$ is a ball --- there is no essential difference between the dg version of local functionals and the traditional version. 
On the other hand, for an open set $U$ with interesting topology (e.g., with nontrivial higher de Rham cohomology), 
the map $\int_U$ may {\em not} be a quasi-isomorphism.
Indeed, there may be some Lagrangian $p$-form that pairs with a nontrivial $p$-cycle in $U$ to produce an interesting observable. 
Extended operators, in the style of Wilson loops, provide examples.

\begin{prop}
\label{lem: lloc is homotopy cosheaf}
Consider the functor $\lloc^\bullet_c: \Opens(X) \to \Ch$ that assigns to an open subset $U$ of $X$, 
the cochain complex $\lloc^\bullet_c(U)$ of local differential forms with compact support in $U$.
It is a homotopy cosheaf on $X$ with values in cochain complexes (with quasi-isomorphisms as weak equivalences).
\end{prop}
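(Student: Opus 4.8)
The plan is to verify the homotopy cosheaf condition directly, by producing, for an arbitrary open $U \subseteq X$ and an arbitrary open cover $\mathcal U = \{U_i\}_{i\in I}$, a quasi-isomorphism between $\lloc^\bullet_c(U)$ and the homotopy colimit of $\lloc^\bullet_c$ over the \v{C}ech (simplicial) diagram of $\mathcal U$. Recall that for a diagram of cochain complexes indexed by the \v{C}ech nerve, this homotopy colimit is computed by the totalization of the \v{C}ech double complex
\[
\check C_{p,k} \;=\; \bigoplus_{i_0 < \cdots < i_p} \lloc^{k}_c\!\left(U_{i_0\cdots i_p}\right),
\qquad
U_{i_0\cdots i_p} := U_{i_0}\cap\cdots\cap U_{i_p},
\]
whose horizontal differential is the alternating sum of the extension-by-zero maps (these exist precisely because $\lloc^\bullet_c$ is covariant in opens) and whose vertical differential is the internal differential $\d_\lloc$. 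It therefore suffices to show that the augmentation from this totalization to $\lloc^\bullet_c(U)$ is a quasi-isomorphism, equivalently that the fully augmented \v{C}ech double complex is acyclic.

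The first ingredient is that each graded component $\lloc^k$ is a \emph{fine} sheaf of $C^\infty_X$-modules, so that its compactly supported sections form a strict cosheaf; this extends the warm-up Lemma from sections of a finite-rank bundle to the present setting. Indeed, $\Omega^p_X$ is fine; the pushforward $\pi^m_* C^\infty_{{\rm J}^m E}$ is a soft sheaf of $C^\infty_X$-modules, since it admits multiplication by pullbacks $(\pi^m)^*\chi$ of bump functions $\chi \in C^\infty_X$; the relative tensor product $\llag^p_{(m)} = \pi^m_* C^\infty_{{\rm J}^m E} \otimes_{C^\infty_X} \Omega^p_X$ is then again a fine sheaf of $C^\infty_X$-modules; and fineness is preserved by the filtered colimit over the order $m$ defining $\llag^p$. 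Consequently, for each fixed internal degree $k$, a partition of unity $\{\chi_j\}_{j\in I}$ subordinate to $\mathcal U$ furnishes the standard contracting homotopy $h$ of the augmented \v{C}ech complex of $\lloc^k_c$: on a \v{C}ech chain one sets $(h s)_{i_0\cdots i_{p-1}} = \sum_{j} \chi_j \cdot s_{j\, i_0\cdots i_{p-1}}$, where multiplication by $\chi_j$ shrinks the support into $U_j$ so that the result extends by zero to the required open. Since multiplication by $\chi_j$ preserves the internal degree and is $C^\infty_X$-linear, $h$ is a well-defined chain contraction for the horizontal differential in each fixed internal degree, so every row of the augmented double complex is exact (a resolution of $\lloc^k_c(U)$); this is exactly the partition-of-unity argument of \S A.5.4 of \cite{CoGw}.

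The main subtlety --- and the reason we cannot simply contract the total complex directly --- is that $h$ does \emph{not} commute with the vertical differential $\d_\lloc$: since $\d_\lloc$ is a differential operator (Lemma~\ref{lem: d makes sense}), the Leibniz rule obstructs any $C^\infty_X$-linear interaction between $h$ and $\d_\lloc$, so the rowwise contractions do not assemble into a contraction of the totalization. We circumvent this by a double-complex argument. Because the internal direction of $\lloc^\bullet$ is bounded --- it is concentrated in degrees $-n-1$ to $0$ (Definition~\ref{df:dL}) --- the acyclic assembly lemma applies to the fully augmented \v{C}ech double complex: having shown that every row is exact, we conclude that the totalization of the non-augmented \v{C}ech double complex is quasi-isomorphic to the augmentation column $(\lloc^\bullet_c(U), \d_\lloc)$. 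Equivalently, in the spectral sequence obtained by filtering the double complex by \v{C}ech degree, the $E_1$-page is the rowwise \v{C}ech cohomology, which vanishes away from the augmentation, and the spectral sequence converges because the boundedness in the internal direction makes the filtration finite in each total degree. This is precisely the homotopy cosheaf condition.

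Two remarks complete the argument. First, the identification of the homotopy colimit over the \v{C}ech nerve with the totalization above holds for an arbitrary open cover, so no hypothesis on the cover is needed beyond its being an open cover; moreover the filtered colimit over the order $m$ commutes with taking compactly supported sections and with totalization up to quasi-isomorphism, so passing to the colimit $\llag^p$ causes no difficulty. Second, the whole argument is the cochain-complex analogue of the observation, recorded in the remark following the warm-up Lemma, that $(\Ecal_c, Q)$ is a homotopy cosheaf whenever $Q$ is a differential operator: here $\lloc^\bullet_c$ is exactly the compactly supported sections of a fine graded sheaf equipped with the differential operator $\d_\lloc$.
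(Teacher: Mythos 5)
Your proposal is correct and takes essentially the same route as the paper's own proof: identify the homotopy colimit over the \v{C}ech nerve with the totalization of the \v{C}ech double complex, contract each row (fixed internal degree) via a partition of unity, and conclude with the spectral sequence filtering by \v{C}ech degree, which collapses since the internal direction of $\lloc^\bullet_c$ is bounded. The extra details you supply---fineness of the $C^\infty_X$-modules making up $\lloc^\bullet$, the explicit contracting homotopy, and the convergence check---are refinements of, not departures from, the paper's argument.
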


Before proving this result, 
we remark on an important consequence,
which is analogous to facts about the sheaf of de Rham complexes and the constant sheaf.
Recall that the de Rham complex is quasi-isomorphic to the constant sheaf on star-convex opens, but it is not quasi-isomorphic on non-contractible manifolds.
Thus, while the constant sheaf $\underline{\RR}$ is a strict sheaf, 
it is not a homotopy sheaf;
its {\em derived} global sections is the cohomology $H^*(X,\RR)$,
which can be computed via the de Rham complex.

Similarly, a (homotopy) cosheaf is determined by its behavior on a base for the topology.
As the  $\loc_c$ and $\lloc_c$ are quasi-isomorphic locally (i.e., on the base of star-convex opens),
they determine the same homotopy cosheaf.
We have seen that $\lloc_c$ gives an explicit description of this homotopy cosheaf on ``big'' opens (e.g., on opens that are not disjoint unions of star-convex opens),
but $\loc_c$ and $\lloc_c$ are {\em not} quasi-isomorphic on all opens.
Hence, we see the following.

\begin{cor}
The cosheaf $\loc_c$ is not a homotopy cosheaf.
\end{cor}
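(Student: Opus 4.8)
The plan is to argue by contradiction, exploiting the principle that a homotopy cosheaf is determined up to quasi-isomorphism by its restriction to a base for the topology, together with the two comparison facts already in hand: that the star-convex opens form such a base, and that on every star-convex open $U \subset \RR^n$ the integration map $\int_U \colon \lloc^\bullet_c(U) \to \loc_c(U)$ is a quasi-isomorphism. So suppose, for contradiction, that $\loc_c$ were a homotopy cosheaf.

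First I would fix an open set $U \subset \RR^n$ with interesting topology---concretely one whose (compactly supported) de Rham cohomology is nontrivial away from the top degree, such as $U = \RR^n \setminus \{0\}$ with $n \geq 2$, or a planar annulus---and choose a good cover $\{U_i\}$ of $U$ by convex opens. Finite intersections of convex sets are convex and hence star-convex, so every term of the simplicial diagram arising from the cover is star-convex. On that diagram $\int$ therefore restricts to a \emph{levelwise} quasi-isomorphism between the diagram built from $\lloc^\bullet_c$ and the diagram built from $\loc_c$, and since $\int$ is a map of precosheaves these quasi-isomorphisms are compatible with the structure maps. As homotopy colimits are invariant under levelwise quasi-isomorphisms, the two homotopy colimits agree:
\[
\operatorname{hocolim}\big(\loc_c|_{\{U_i\}}\big) \;\simeq\; \operatorname{hocolim}\big(\lloc^\bullet_c|_{\{U_i\}}\big).
\]

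Next I would feed in the two homotopy cosheaf statements. Proposition~\ref{lem: lloc is homotopy cosheaf} identifies the right-hand side with $\lloc^\bullet_c(U)$, while the (contradictory) hypothesis that $\loc_c$ is a homotopy cosheaf identifies the left-hand side with $\loc_c(U)$. Chaining these quasi-isomorphisms forces $\loc_c(U) \simeq \lloc^\bullet_c(U)$ for this particular $U$. But $\loc_c(U)$ carries the zero differential and so is concentrated in cohomological degree zero, whereas the mapping-cone construction of Definition~\ref{df:dL} shows that $\lloc^\bullet_c(U)$ acquires additional cohomology in the negative degrees $-n-1,\dots,-1$, governed by the de Rham cohomology of $U$---precisely the ``extended operator'' classes flagged after Proposition~\ref{lem: lloc is homotopy cosheaf}. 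For the $U$ chosen above these groups are nonzero in a degree where $\loc_c(U)$ vanishes, so $\loc_c(U) \not\simeq \lloc^\bullet_c(U)$, a contradiction. Hence $\loc_c$ is not a homotopy cosheaf.

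The hard part is the penultimate step: confirming rigorously that $\int_U$ fails to be a quasi-isomorphism on the chosen $U$, i.e.\ producing a class in $H^\bullet(\lloc^\bullet_c(U))$ in a nonzero degree not seen by $\loc_c(U)$. By the long exact sequence of the mapping cone together with Theorem~\ref{olver thm}, this reduces to a computation in (compactly supported) de Rham cohomology, so the only genuinely new content is exhibiting a single open on which the topology of $U$ manufactures such a class; everything else is the formal ``determined on a base'' argument for homotopy (co)sheaves.
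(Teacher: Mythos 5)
Your proof is correct and is essentially the paper's own argument: the paper deduces the corollary from exactly the chain you unwind --- a homotopy cosheaf is determined by its values on a base, $\int_U$ is a quasi-isomorphism on the base of star-convex opens, $\lloc^\bullet_c$ is a homotopy cosheaf, and $\loc_c$ and $\lloc^\bullet_c$ fail to be quasi-isomorphic on topologically nontrivial opens --- with your good-cover homotopy colimit computation just making the ``determined on a base'' principle explicit. The step you flag as the hard part is likewise left unverified in the paper, which only remarks that Lagrangian $p$-forms pairing with nontrivial $p$-cycles (Wilson-loop-style extended operators) produce the failure of $\int_U$ to be a quasi-isomorphism on such opens.
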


\begin{rem}
On the other hand, $\lloc_c$ is a strict cosheaf with values in cochain complexes, because it is constructed as compactly-supported sections of soft sheaves.
Again, compare with the de Rham complex.
\end{rem}

\begin{proof}[Proof of Proposition~\ref{lem: lloc is homotopy cosheaf}]
The proof is inspired by the standard argument to show that \v{C}ech and de Rham cohomology are isomorphic on a manifold.
In that argument, one picks a good cover of an open $U$ and then makes a double complex by taking the \v{C}ech complex of the de Rham complex.
In the spectral sequence for the double complex, if one takes the \v{C}ech differential first,
one finds that the next page of the spectral sequence is the de Rham complex of $U$,
which is a double complex concentrated in a single column (or row, depending on one's convention),
so that the sequence collapses on the next page.
This argument shows that the \v{C}ech cohomology of the de Rham complex agrees with the de Rham cohomology.

Consider now our situation. 
Pick a cover of the open $U$.
The homotopy colimit of $\lloc^\bullet_c$ over the \v{C}ech nerve of this cover can be computed as the  \v{C}ech complex for $\lloc^\bullet_c$.
(See \S C.5 of \cite{CoGw} for a proof and extended discussion in precisely this context, 
but it is a standard fact that homotopy colimits of simplicial diagrams of cochain complexes are modeled by totalizing well-known double complexes.)
The \v{C}ech complex for $\lloc^\bullet$ is, in fact, the totalization of a double complex, 
where one direction arises from the cover (lets call it the \v{C}ech direction) and the other direction arises from the differential~$\d_\lloc$.
We will then use a variant of the \v{C}ech-de Rham argument to show we have a homotopy cosheaf.

If we focus on the \v{C}ech direction for some fixed layer $\lloc_c^p$ of $\lloc_c^\bullet$,
we see the \v{C}ech complex for compactly supported sections of a vector bundle.
A partition of unity lets one construct a contracting homotopy for this \v{C}ech complex,
showing that $\lloc_c^p$ is a homotopy cosheaf.
(See \S A.5.4 of \cite{CoGw} for a detailed proof in  this setting.)

Consider now the spectral sequence for a double complex, encoding the \v{C}ech complex of all of $\lloc_c^\bullet$, that uses the \v{C}ech differential first.
We thus see that the next page of the sequence consists precisely of the ``naive'' global sections~$\lloc_c^\bullet(U)$.
It corresponds to a double complex concentrated along a single column (i.e., sitting in one degree of the \v{C}ech direction) with only the differential of $\lloc_c^\bullet(U)$.
Thus the spectral sequence collapses on the next page.
\end{proof}

We now have the ingredients to prove that $\loc_c$ is a cosheaf with values in graded vector spaces.

\begin{rem}
\label{proof of loc is cosheaf}
The crucial new ingredient is that $\loc_c$ is the cokernel of a map of cosheaves
\[
\llag_c^{n-1} \xto{\d_\llag} \llag_c^n
\] 
because two Lagrangian densities determine the same local functional precisely when they differ by a total differential.
A colimit of cosheaves agrees with the colimit of precosheaves (just as a limit of sheaves can be computed as a limit of presheaves), 
so $\loc_c$ is a cosheaf.
In addition to this high-brow argument, we will now give an explicit argument.
\end{rem}

\def\ext{{\rm ext}}

\begin{proof}[Proof of lemma~\ref{lem: loc is cosheaf}]
Let $U$ be an open set and let $\{U_i\}_{i \in I}$ be a cover of $U$.
Let $\ext_{i}: \loc_c(U_i) \to \loc_c(U)$ denote the extension map for $U_i \subset U$, and let $\ext_{ij,j}: \loc_c(U_i\cap U_j) \to \loc_c(U_j)$ denote the the extension map for $U_i \cap U_j \subset U_j$.
In concrete terms,
given $\phi \in \Ecal(U)$ and $F \in \loc_c(U_i)$, set
\[
(\ext_i F)(\phi) = F\left(\phi\big|_{U_i}\right).
\]
We need to show that $\loc_c(U)$ is the cokernel of the map
\[
\epsilon: \bigoplus_{j,k} \loc_c(U_j \cap U_k) \to \bigoplus_i \loc_c(U_i) 
\]
that sends an element $F_{jk} \in \loc_c(U_j \cap U_k)$ to $(\ext_{jk,j} F_{jk}, - \ext_{jk,k} F_{jk}) \in \loc_c(U_j) \oplus \loc_c(U_k)$.
Any element $F$ in the domain is a linear combination of such~$F_{ij}$.

Note that $\loc_c(U)$ receives a canonical map from $\bigoplus_i \loc_c(U_i)$: send $(F_i)_{i \in I} \in \bigoplus_i \loc_c(U_i)$ to the sum of the extensions $\sum_i \ext_i F_i$
This canonical map 
\[
c: \bigoplus_i \loc_c(U_i) \to \loc_c(U)
\]
factors through ${\rm coker}\,\epsilon$ because the composite $c \circ \epsilon$ is quickly seen to be zero.
Let $\bar{c}: {\rm coker}\,\epsilon \to \loc_c(U)$ denote the induced map.
We need to show $\bar{c}$ is an isomorphism.

It is helpful to keep in mind the relationship with Lagrangian differential forms.
Consider the commuting diagram
\[
\begin{tikzcd}
\bigoplus_{j,k} \llag^{n-1}_c(U_j \cap U_k) \ar[r, "\epsilon^{n-1}"] \ar[d, "\d_\llag"] & \bigoplus_i \llag^{n-1}_c(U_i) \ar[r, "c^{n-1}"] \ar[d, "\d_\llag"] & \llag^{n-1}_c(U) \ar[r] \ar[d, "\d_\llag"] & 0 \\
\bigoplus_{j,k} \llag^n_c(U_j \cap U_k) \ar[r, "\epsilon^n"] \ar[d, "\int"] & \bigoplus_i \llag^n_c(U_i) \ar[r, "c^n"] \ar[d, "\int"] & \llag^n_c(U) \ar[r] \ar[d, "\int"] & 0 \\
\bigoplus_{j,k} \loc_c(U_j \cap U_k) \ar[r, "\epsilon"] \ar[d] & \bigoplus_i \loc_c(U_i) \ar[r, "c"] \ar[d] & \loc_c(U) \ar[r] \ar[d] & 0 \\
0 & 0 & 0 & 
\end{tikzcd}
\]
Each column is exact by the definition of local functionals. 
The rows for $\llag^n$ and $\llag^{n-1}$ are exact as they are cosheaves.
(The horizontal arrows are extensions maps; 
they are similar to the maps $c$ and $\epsilon$ that we just defined.)
We wish to show the bottom row is exact as well,
which is equivalent to showing $\bar{c}$ is an isomorphism.

To start, we wish to show that $\bar{c}$ is surjective so we show $c$
\[
\bigoplus_i \loc_c(U_i) \to \loc_c(U)
\]
is surjective.
Given $F \in \loc_c(U)$,
it can be expressed as a finite linear combination 
\[
F = F_{\alpha_1, \mu_1} + \cdots + F_{\alpha_n, \mu_n}
\]
where the $\mu_i$ have compact support $K_i \subset U$.
Pick a partition of unity $\{ \psi_i \}_{i \in I}$ subordinate to the cover. 
Then $F_{\alpha_m, \psi_i \mu_m}$ is an element of $\loc_c(U_i)$.
Note that
\begin{align*}
\sum_{i \in I} F_{\alpha_m, \psi_i \mu_m} (\phi) 
&= \sum_{i \in I} \int_{U} \alpha_m(\jet_k \phi) \psi_i \mu_m \\
&= \int_U \alpha_m(\jet_k \phi) \left(\sum_{i \in I} \psi_i \right) \mu_m \\
&= \int_U \alpha_m(\jet_k \phi) \mu_m \\
&= F_{\alpha_m, \mu_m} (\phi).
\end{align*}
In short, we inherit the surjectivity from Lagrangian densities themselves.

This argument can be described efficiently in terms of the diagram.
The map $\int$ is surjective, so any element $F$ in $\loc_c(U)$ can be written as $\int \Lcal$ for some $\Lcal$ in $\llag^n_c(U)$.
But since $\llag^n_c$ is a cosheaf, 
$\Lcal$ is in the image of the horizontal map out of $\bigoplus_i \llag^n_c(U_i)$,
so pick $(\Lcal_i)_{i \in I}$ that maps to $\Lcal$ by the sum of extensions.
Then commutativity ensures that $c$ maps $(\int \Lcal_i)_{i \in I}$ to~$F$.

We now need to show that $\bar{c}$ is injective.
Let $(F_i)_{i \in I}$ be an element of $\bigoplus_i \loc_c(U_i)$ that $c$ sends to zero.
We need to show that this element is in the image of $\epsilon$. 
As $\int$ is surjective, pick an element $(\Lcal_i)_{i \in I}$ such that $\int \Lcal_i = F_i$ for every $i$.
We see that 
\[
\int \left( \sum_i \ext_i(\Lcal_i) \right) = 0
\]
by commutativity,
so there is some $\Fcal$ in $\llag^{n-1}_c(U)$ such that 
\[
\d_{\llag} \Fcal = \sum_i \ext_i(\Lcal_i)
\]
as the rightmost column is exact.
But this means we can replace the element $(\Lcal_i)_{i \in I}$ by an element $(\Lcal'_i)_{i \in I}$ such that $\sum_i \ext_i(\Lcal'_i) = 0$.
As the middle row is exact, we see that there is $(\Lcal_{jk})_{j,k}$ in $\bigoplus_{j,k} \loc_c(U_j \cap U_k)$ whose image under $\epsilon^n$ is $(\Lcal'_i)_{i \in I}$.
But commutativity of the diagram ensures that $\epsilon$ sends $(\int \Lcal_{jk})_{j,k}$ to~$(F_i)_{i}$.
\end{proof}

\subsubsection{A distributional variant}

It can be convenient to allow Lagrangians with values in distributional de Rham differential forms, rather than just smooth de Rham forms.
These are sometimes called de Rham~{\em currents}.

Let $\overline{\Omega}^*$ denote the distributional completion of the differential forms.
Concretely, it can be understood as
\[
\overline{\Omega}^* = \Dcal' \otimes_{C^\infty} \Omega^*,
\]
which is a sheaf and where here (for just this moment!) $\Dcal'$ denotes the continuous linear dual to $\Dcal = C^\infty_c$,
which is the classic notation from distribution theory.
Similarly, we can consider the compactly supported variant $\overline{\Omega}^*_c$, given by
\[
\overline{\Omega}^*_c = \Ecal' \otimes_{C^\infty_c} \Omega^*_c,
\]
which is a cosheaf and where here (for just this moment!) $\Ecal'$ denotes the continuous linear dual to~$\Ecal = C^\infty$.

Given a codimension $p$-current $\mu$ (i.e., a continuous linear functional on $n-p$-forms) and a Lagrangian $\alpha$,
there is a {\em Lagrangian $p$-current} that assigns to a field $\phi$, the codimension $p$-current $\alpha(\jet_k(\phi)) \mu$.
The space of Lagrangian $p$-currents $\overline{\llag}^p$ is the vector space spanned by such.
Our work above then immediately suggests the following definition.

\begin{df}
\label{df: dg local curr}
Let $\overline{\lloc}^\bullet$ denote the distributional completion of $\lloc^\bullet$. 
The compactly-supported sections $\overline{\lloc}^\bullet_c$ of this complex determine the {\em local differential currents} on the fields~$\Ecal$.
\end{df} 

Note that the zeroth cohomology encodes the distributional analogue of local functionals;
it is given by the distributional completion of Lagrangian densities up to total derivative.

\subsection{Fiberwise analytic local functionals}
\label{analytic variant}

Recall Remark~\ref{analyticity of loc} about working with fiberwise analytic Lagrangians. 
We now carefully explain what we mean and provide notations.

Given a vector bundle $\pi:V \to X$ on a smooth manifold, 
let ${\rm V}$ denote the total space of the bundle.
(Note the change from $V$ to ${\rm V}$.)
For any smooth function $f$ on ${\rm V}$,
its restriction $f|_{V_x}$ to a fiber $V_x = \pi^{-1}(x)$ is a function on a vector space,
and hence one can ask if it is a polynomial function or a real-analytic function.
We will say $f$ is {\em analytic along the fibers} (or fiberwise analytic) if the restriction to every fiber is real-analytic.
(Note that in a neighborhood $U$ of $x$ inside $X$ on which $V$ is trivialized, we can describe $f$ as a power series on the fiber $V_x$ whose coefficients vary smoothly over $U$.)
Let $C^{f\omega}({\rm V})$ denote the space of smooth functions on ${\rm V}$ that are fiberwise analytic.

In consequence an order $k$ Lagrangian $\alpha$ is {\em analytic along the fibers} if it lives in $C^{f\omega}({\rm J}^k E)$, where ${\rm J}^k E$ is the total space of the jet bundle $J^k E \to X$ ({\it cf.} definition~\ref{order k lag}).
Taking the colimit as $k$ goes to infinity, we obtain the fiberwise analytic Lagrangians.
Similarly, we obtain a special class of local functionals

\begin{df}
\label{def of fib analytic loc}
Let $\loc^{f\omega}_{(k)}$ denote the subsheaf of $\loc_{(k)}$ spanned by functionals $F_{\alpha,\mu}$ where
$\alpha$ is an order $k$, fiberwise analytic Lagrangian and $\mu$ is a smooth density.
Let $\loc^{f\omega}$ denote the colimit over $k$ of the~$\loc_{(k)}(X)$. 

Similarly, let $\loc^{f\omega}_c$ denote the subcosheaf of $\loc_c$ spanned by functionals $F_{\alpha,\mu}$ where $\alpha$ is a fiberwise analytic Lagrangian (of some finite but arbitrary order) and $\mu$ is a compactly-supported smooth density.
We call $\loc^{f\omega}_c(X)$ the {\em fiberwise analytic local functionals} on~$\Ecal(X)$.
\end{df}

We describe the dg version of fiberwise analytic local functionals in Section~\ref{fbrwsanalytic dg v} below.

An appealing feature of Definition~\ref{def of fib analytic loc} is that such functionals are analytic as functions on~$\Ecal(X)$.

\begin{prop}
Every fiberwise analytic local functional $F \in \loc^{f\omega}_c(X)$ is an analytic function on $\Ecal(X)$.
That is, $\loc^{f\omega}_c(X) \subset C^\omega(\Ecal(X))$.
\end{prop}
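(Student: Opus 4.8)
The plan is to reduce to a single generator and then exhibit, at each field, a convergent local power series, which is exactly what (Bastiani) real-analyticity demands. Since sums and scalar multiples of analytic functions are analytic and $\loc^{f\omega}_c(X)$ is spanned by the basic functionals $F_{\alpha,\mu}(\phi) = \int_X \alpha(\jet_k\phi)\,\mu$ with $\alpha$ fiberwise analytic of some order $k$ and $\mu$ a compactly supported density, it suffices to treat a single $F_{\alpha,\mu}$. First I would record that $F_{\alpha,\mu}$ is Bastiani-smooth, with derivatives computed by differentiating under the integral sign: because $\jet_k$ is linear, the $m$-th derivative at $\phi_0$ in directions $\psi_1,\dots,\psi_m$ is
\[
F^{(m)}_{\alpha,\mu}(\phi_0)(\psi_1,\dots,\psi_m)
= \int_X \sum \frac{\partial^m \alpha}{\partial(\partial^{\beta_1}\phi_{j_1})\cdots \partial(\partial^{\beta_m}\phi_{j_m})}\Big|_{\jet_k\phi_0(x)} \prod_{i=1}^m \partial^{\beta_i}(\psi_i)_{j_i}(x)\,\mu,
\]
the sum running over frame indices $j_i$ and multi-indices $|\beta_i|\le k$. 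Differentiation under the integral is justified by compactness of $\supp\mu$, and each $F^{(m)}_{\alpha,\mu}(\phi_0)$ is a continuous symmetric $m$-linear map, since it factors through the continuous jet evaluation $\psi\mapsto \jet_k\psi|_{\supp\mu}$.

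The heart of the argument is to show that the Taylor series $\sum_{m\ge 0} \tfrac{1}{m!} F^{(m)}_{\alpha,\mu}(\phi_0)(\psi^{\otimes m})$ converges to $F_{\alpha,\mu}(\phi_0+\psi)$ on a neighborhood of $\phi_0$. Fix $\phi_0$ and set $K = \supp\mu$, a compact set. Since $\alpha$ is fiberwise analytic, for each $x$ it admits a power series expansion in the fiber variables about $\jet_k\phi_0(x)$; writing $\xi = \jet_k\psi(x)$ for the fiber displacement, this reads $\alpha(\jet_k\phi_0(x)+\xi) = \sum_\beta c_\beta(x)\,\xi^\beta$. Because $\jet_k(\phi_0+\psi) = \jet_k\phi_0 + \jet_k\psi$ pointwise, controlling $F_{\alpha,\mu}(\phi_0+\psi)$ amounts to integrating this pointwise expansion against $\mu$ and interchanging sum and integral. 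The key input I would establish is uniformity over $K$: there are a radius $r>0$ and a constant $C$ with Cauchy-type bounds $\sup_{x\in K}|c_\beta(x)| \le C\,r^{-|\beta|}$, obtained by covering $K$ by finitely many charts on which fiberwise analyticity yields a local expansion of positive radius, using smoothness in the base point to extend each expansion to a base-neighborhood, and then passing to a finite subcover. Given such bounds, for $\psi$ with $\|\jet_k\psi\|_{C^0(K)} < r$ the series $\sum_\beta c_\beta(x)\xi^\beta$ converges absolutely and uniformly in $x\in K$, so integration against the finite measure $\mu$ and exchange of sum and integral are legitimate; collecting terms of total degree $m$ recovers $\tfrac{1}{m!} F^{(m)}_{\alpha,\mu}(\phi_0)(\psi^{\otimes m})$. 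Since $\{\psi : \|\jet_k\psi\|_{C^0(K)} < r\}$ is a neighborhood of $0$ in the Fréchet topology on $\Ecal(X)$, this produces the required convergent local expansion.

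The main obstacle is precisely this uniformity step: fiberwise analyticity is an a priori pointwise statement, allowing a shrinking radius of convergence and uncontrolled coefficient growth as $x$ ranges over $K$. The remedy is to use that $\alpha$ is jointly smooth on the total space of the jet bundle and that $K$, hence $\jet_k\phi_0(K)$, is compact, so a finite subcover yields a single positive radius and uniform Cauchy estimates; this is where the interplay between smooth base dependence and analytic fiber dependence in the definition of $C^{f\omega}$ is genuinely used. Once this is in hand, the interchange of summation and integration and the identification of the degree-$m$ term with the $m$-th derivative are routine, and one concludes $F_{\alpha,\mu}\in C^\omega(\Ecal(X))$, whence $\loc^{f\omega}_c(X)\subset C^\omega(\Ecal(X))$.
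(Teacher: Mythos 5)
There is a genuine gap, and it sits exactly at the step you yourself identify as the heart of the argument. The uniform Cauchy bounds $\sup_{x\in K}|c_\beta(x)|\le C\,r^{-|\beta|}$ do \emph{not} follow from joint smoothness plus fiberwise analyticity plus compactness of $K$: there is no principle that lets you ``use smoothness in the base point to extend each expansion to a base-neighborhood'' with a positive radius, because the fiberwise radius of convergence can collapse as the base point varies, and then no finite subcover helps. Concretely, take $X=\RR$, $E$ the trivial line bundle, $k=0$, and the order-zero Lagrangian
\[
\alpha(x,\phi)=\frac{x^{2}e^{-1/x^{2}}}{x^{2}+\phi^{2}}\ \ (x\neq 0),\qquad \alpha(0,\phi)=0 .
\]
This $\alpha$ is jointly smooth on $\RR^2$ (the factor $e^{-1/x^{2}}$ kills all derivatives as $x\to 0$), and its restriction to each fiber is real-analytic on all of $\RR$, so $\alpha\in C^{f\omega}({\rm J}^0E)$ in the paper's sense. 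Yet its fiber expansion along the zero section has coefficients $c_{2n}(x)=(-1)^{n}e^{-1/x^{2}}x^{-2n}$, with
\[
\sup_{0<|x|\le 1}|c_{2n}(x)|=n^{n}e^{-n}\sim \frac{n!}{\sqrt{2\pi n}},
\]
which is not $O(r^{-2n})$ for any $r>0$; indeed the radius of the fiber expansion at $\phi=0$ is exactly $|x|$ (poles at $\phi=\pm ix$), so it collapses near $x=0$. Worse, the gap cannot be patched under the literal pointwise definition of fiberwise analyticity: taking $\mu=\chi\,\d x$ with $\chi$ a bump function equal to $1$ near $0$, the restriction of $F_{\alpha,\mu}$ to the affine line of constant fields is
\[
t\mapsto\int_\RR \frac{x^{2}e^{-1/x^{2}}}{x^{2}+t^{2}}\,\chi(x)\,\d x ,
\]
whose Taylor coefficients at $t=0$ are, up to bounded corrections, $(-1)^n\Gamma(n-\tfrac12)$; this one-variable function is smooth but not real-analytic, so $F_{\alpha,\mu}$ fails to be analytic in any sense that restricts to classical analyticity on affine lines.

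For comparison, the paper's proof is different in kind: it factors $F_{\alpha,\mu}$ as the composite $\Ecal\to\Gamma(X,J^kE)\to C^\infty(X)\to\cDens_c(X)\to\CC$ and appeals to the Kriegl--Michor convenient calculus, where $C^\omega$ is tested along lines and curves and where real-analytic maps compose; it never attempts the stronger statement you are after, namely convergence of the Taylor series on a Fr\'echet neighborhood. But the example above shows that the nontrivial point in \emph{either} route --- your uniformity claim, or the paper's claim that pushforward by $\alpha$ is real-analytic --- needs more than the pointwise definition of $C^{f\omega}$: one must assume a locally uniform form of fiberwise analyticity (fiber expansions with smoothly varying coefficients and locally uniform radius, which is what the parenthetical gloss in the paper's definition implicitly has in mind). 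Under that strengthened hypothesis your compactness argument does go through; without it, the step fails, and with it the statement itself.
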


\begin{rem}
Later we will use this result to show that the S-matrix (or time-ordered exponential) is injective.
\end{rem}

\begin{proof}
If one views the functional
\[
F_{\alpha, \mu} (\phi) = \int_X \alpha(\jet_k \phi) \mu
\]
as a composition of maps
\[
\Ecal \xto{\jet_k} \Gamma(X, J^k E) \xto{\alpha} C^\infty(X) \xto{\mu \cdot -} \cDens_c(X) \xto{\int_X} \CC,
\]
then one verifies that each map is real-analytic (in fact, most are linear) and so the composition is real-analytic.
(See Chapter 11 of~\cite{Michor} for detailed discussion.)
\end{proof}

\subsubsection{The coproduct}
\label{fbrws cprdct}

Another appealing feature of fiberwise analytic functionals is a coproduct ${\vr \Delta}$ that we will use later.

First, note that for an analytic function $f$ on a vector space $V$,
we have 
\[
f(v + w) = \sum_{(f)} f_{(1)}(v) f_{(2)}(w) 
\]
where we use Sweedler notation in the sum.
This sum is typically infinite but admits a natural ordering by polynomial degree.
It may be helpful to unpack this formula in coordinates, so suppose $V \cong \RR^n$.
If we write the Taylor expansion of $f$ around $v$ as
\[
f(v+w) = \sum_{\mu \in \NN^n} a_\mu(v) w^\mu,
\]
where $\mu$ denotes a multinomial, 
we see that we can take $f_{(1)}(v) = a_\mu(v)$ and $f_{(2)}(w) = w^\mu$.
In this way there is a clear relationship between the Sweedler sum and the Taylor expansion.

Second, note that since a jet bundle $J^k E \to X$ is a vector bundle, 
we have fiberwise addition 
$${\rm add}: J^k E \oplus J^k E \to J^k E,$$ 
where the sum $\oplus$ here means the Whitney sum.
Hence any function $\alpha \in C^\infty({\rm J^k}E)$ pulls back to a function ${\rm add}^* \alpha \in C^\infty(J^k E \oplus J^k E)$.
The pullback of a fiberwise analytic function is again fiberwise analytic.  Note also that we can view 
$$C^\infty(J^k E \oplus J^k E)\cong C^\infty(J^k E)\,\widehat{\otimes}\, C^\infty(J^k E),$$ 
where the completion of the tensor product is done in the topology induced by the space on the left-hand side.

Define an operator ${\vr \Delta}_k:C^\infty( J^k E)\rightarrow C^\infty(J^k E \oplus J^k E)$ on an order $k$ fiberwise analytic Lagrangian $\alpha$ by
\[
{\vr \Delta}_k \alpha = {\rm add}^* \alpha
\]
so that when evaluated on prolongations of two fields, we have
\be\label{eq:coprod}
{\vr \Delta}_k \alpha\,(\jet_k\psi,\jet_k\phi)\doteq \alpha(\jet_k\psi+\jet_k \phi)=\sum_{(\alpha)} \alpha_{(1)}(\jet_k \psi)\alpha_{(2)}(\jet_k \phi)\,,
\ee
where we use Sweedler notation in the sum and which is explicitly given by the Taylor expansion.
These equations hold as functions on the manifold~$X$.
Note that if we view an order $k$ Lagrangian as an order $k+1$ Lagrangian, the behavior of this operator is unchanged:
\[
{\vr \Delta}_k\alpha = {\vr \Delta}_{k+1}\alpha.
\]
Hence we define an operator ${\vr \Delta}$ on all Lagrangians by setting ${\vr \Delta} \alpha\doteq {\vr \Delta}_k\alpha$ when $\alpha$ has order~$k$. 
(Note that we can use partitions of unity on $X$ to write $\alpha$ as a sum of Lagrangians, each of which has some bounded order.)
We call this operator the {\em coproduct} on Lagrangians.

Now consider a fiberwise analytic local functional of the form
\[
F(\phi) =\int_X \alpha(\jet_k \phi) \mu,
\]
where $\mu$ is a smooth density.
We see
\begin{align*}
F(\phi+\psi)
&= \int_X \alpha(\jet_k \phi+\jet_k \psi) \mu\\
&=\sum_{(\alpha)}\int_X \alpha_{(1)}(\jet_k\psi(x))\alpha_{(2)}(\jet_k \phi(x))\mu(x),
\end{align*}
thanks to the coproduct.

\subsubsection{The dg version}
\label{fbrwsanalytic dg v}

In all the definitions of Section~\ref{sec:local}, 
one can simply replace Lagrangians by fiberwise analytic Lagrangians,
and we offer the following modification of Definition~\ref{lag p-forms} as a representative example.

\begin{df}
The sheaf of order $k$ {\em fiberwise analytic \em Lagrangian $p$-forms} denotes 
\[
\llag^{f\omega ,\, p}_{(k)} = \pi^k_* C^{f\omega}_{{\rm J}^k E} \otimes_{C^\infty_X} \Omega^p_X,
\]
which $\Upsilon^{p}$ (recall ~\eqref{eqn Fp}) embeds into~${\rm Maps}(\Ecal, \Omega^p_X)$.

Let the {\em fiberwise analytic Lagrangian $p$-forms} $\llag^{f\omega,\,p}(X)$ denote the colimit over $k$ of the~$\llag_{(k)}^{f\omega, \, p}(X)$.
\end{df}

Note that this condition of fiberwise analyticity is local along the spacetime manifold,
and hence it does not affect local-to-global arguments (e.g., the fiberwise analytic Lagrangian $p$-forms form a sheaf on~$X$).

\section{Multilocal functionals}
\label{sec:mlocal}

We now define multilocal functionals, recapitulating the development in the preceding section.

\subsection{Multilocal functionals, the strict version}

In \cite{FR,FR3} a key class of observables is the {\em multilocal functionals},
which arise in a simple way.
Consider a scalar field theory on $\RR^n$, 
as in our example~\eqref{eq: rep of local} of a local functional.
A product of two such local functionals yields a concrete example of a quadratic multilocal functional:
\begin{align}
\label{eq: rep of multilocal}
(F \cdot G)(\phi) &= \int_{x \in \RR^n} \phi(x)^3  f(x)\, \d^n x \cdot \int_{y \in \RR^n}(\partial_1 \phi(y))^7 (\partial_1 \partial_n^2 \phi(y)) g(y)\, \d^n y \\
&= \int_{\RR^n \times \RR^n} \phi(x)^3 (\partial_1 \phi(y))^7 (\partial_1 \partial_n^2 \phi(y)) f(x) g(y)\, \d^n x \, \d^n y,
\end{align}
where $f$ and $g$ are compactly supported smooth functions.
Note that we rewrote the product of two integrals as an integral over the product space.

One way to encompass such examples is to take the symmetric algebra generated by the compactly supported local functionals,
where we mean the symmetric algebra in a purely algebraic sense:
one uses the algebraic tensor product, without any completions.
It is not obvious (to us) that this construction forms a factorization algebra.
Instead it seems more convenient to use a slight thickening that yields a factorization algebra but also works with Epstein-Glaser renormalization.

Our thickening of quadratic functionals amounts to replacing a product function $f(x) g(y)$ by an arbitrary smooth function $h(x,y)$ with compact support in $\RR^n \times \RR^n$.
Hence, an example of a quadratic multilocal functional is 
\[
H(\phi) = \int_{\RR^n \times \RR^n} \phi(x)^3 (\partial_1 \phi(y))^7 (\partial_1 \partial_n^2 \phi(y)) h(x,y)\, \d^n x \, \d^n y
\]
where $h(x,y)$ is smooth and has compact support in $\RR^n \times \RR^n$.
More generally, we complete $k$-fold products of Lagrangian densities by allowing multiplication by an element~$C^\infty_c((\RR^n)^k)$.

We now formalize this idea.

\begin{df}
\label{def:multilag}
Let $\lag_c^{[m]}$ denote the functor from $\Opens(X)$ to $\Vec$ where
\[
\lag_c^{[m]}(U) = \cinfty_c(U^m) \otimes_{\cinfty_c(U)^{\otimes m}} \lag_c(U)^{\otimes m}.
\]
(This tensor product is purely algebraic.) 
It encodes compactly supported {\em degree $m$ Lagrangian densities} on~$\Ecal(U)$.
\end{df}

This notion captures the integrands with which we work.
Given $\alpha \in \lag^{[m]}_c(U)$, we obtain a {\em degree $m$ multilocal functional} by
\begin{equation}
\label{eq: def of multiloc}
F_\alpha(\phi) = \int_{U^m} \alpha(\jet_\infty\phi(x_1), \ldots, \jet_\infty\phi(x_m)),
\end{equation}
that is, we evaluate the Lagrangian density on $m$ copies of the jet expansion of the field~$\phi$ and integrate over the $m$-fold product space.
In other words, there is a natural map
\begin{equation}
\label{eq: F multiloc}
q^{\langle m \rangle}: \lag_c^{[m]}(U) \to C^\infty(\Ecal(U))
\end{equation}
by~\eqref{eq: def of multiloc}, just as in Definition~\ref{def of loc}.
This notion is just the $m$-fold version of local functionals arising by integrating Lagrangian densities.

Note that total derivatives vanish along this map.
Note as well that this map factors through the quotient $\lag_c^{[m]}(U)/S_m$ by the symmetric group action.

\begin{df}
Let $\loc_c^{\langle m\rangle}$ denote the functor from $\Opens(X)$ to $\Vec$ that assigns {\em degree $m$ (or $m$-fold) multilocal functionals} on~$\Ecal(U)$ to each open set $U$.
It is the image of the map~\eqref{eq: F multiloc}.

Let $\mloc_c$ denote the functor from $\Opens(X)$ to $\Vec$ where
\[
\mloc_c(U) = \bigoplus_{m = 0}^\infty \loc_c^{{\langle m\rangle}}(U),
\]
which encodes {\em multilocal functionals} on~$\Ecal(U)$.
\end{df}

\begin{rem}
Given a functional $F \in \cinfty(\Ecal(X))$, 
it may not be easy to recognize if it is local or multilocal.
Developing recognition criteria is the central subject of \cite{BDGR};
see Theorem~I.2.
\end{rem}

It will be convenient to have an intermediary version of this construction.
Given $\alpha \in \lag^{[m]}_c(U)$, we obtain a functional on $m$-tuples of fields by
\begin{equation}
\label{eq: alt def of multiloc}
F_\alpha(\phi_1, \ldots, \phi_m) = \int_{U^m} \alpha(\jet_\infty\phi_1(x_1), \ldots, \jet_\infty\phi_m(x_m)).
\end{equation}
In other words, there is a natural map
\[
q^{[m]}: \lag_c^{[m]}(U) \to C^\infty(\Ecal(U)^m)
\]
by~\eqref{eq: alt def of multiloc}. 
We use $\loc^{[n]}_c(U)$ to denote the image of this map~$q^{[m]}$,
and we call such the {\em polarized} multilocal functionals of degree~$m$.

Observe that
\[
q^{\langle m \rangle}: \lag_c^{[m]}(U) \to \loc^{\langle m \rangle}_c(U) \subset C^\infty(\Ecal(U))
\]
factors as the composite
\[
\lag_c^{[m]}(U) \xto{q^{[m]}} \loc^{[m]}_c(U) \subset C^\infty(\Ecal(U)^m) \xto{\delta_m^*} C^\infty(\Ecal(U))
\]
where 
\[
\delta_m: \Ecal(U) \hookrightarrow \Ecal(U)^m
\]
sends a field $\phi$ to $m$ self-copies~$(\phi, \ldots, \phi)$.

\begin{lemma}\label{lem:multilocal}
The map $\loc_c^{[m]}/S_m \to \loc_c^{\langle m\rangle}$ is an isomorphism of cosheaves.
\end{lemma}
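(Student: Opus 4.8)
The plan is to show that $\delta_m^*$ exhibits $\loc_c^{\langle m\rangle}$ as the $S_m$-coinvariants of $\loc_c^{[m]}$. First I would record the formal features. By construction $\loc_c^{\langle m\rangle}(U)$ is the image of $q^{\langle m\rangle}=\delta_m^*\circ q^{[m]}$, and $q^{[m]}$ surjects onto $\loc_c^{[m]}(U)$, so $\delta_m^*\colon\loc_c^{[m]}(U)\to\loc_c^{\langle m\rangle}(U)$ is surjective. The group $S_m$ acts on $\lag_c^{[m]}(U)$ by permuting the $m$ tensor factors together with the coordinates of $U^m$; this descends to $\loc_c^{[m]}(U)$, and since the diagonal $\delta_m$ lands in the $S_m$-fixed locus, $\delta_m^*$ is $S_m$-invariant and factors through the coinvariants to give a surjection $\bar\delta\colon\loc_c^{[m]}(U)/S_m\twoheadrightarrow\loc_c^{\langle m\rangle}(U)$. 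All of these maps are natural in $U$ and compatible with extension by zero, so it suffices to prove $\bar\delta$ is a bijection on each open $U$; the cosheaf statement then follows formally. Working over a field of characteristic zero, I would identify the coinvariants with the invariants via the symmetrization projector $\mathrm{Sym}=\tfrac1{m!}\sum_{\sigma\in S_m}\sigma$. This reduces the claim to the following: if $F\in\loc_c^{[m]}(U)$ is $S_m$-invariant and $\delta_m^*F=0$, then $F=0$. Indeed, for a general $F$ with $\delta_m^*F=0$ one has $F-\mathrm{Sym}\,F\in\sum_\sigma\mathrm{im}(1-\sigma)$ and $\delta_m^*\mathrm{Sym}\,F=\delta_m^*F=0$, so injectivity on invariants yields $\mathrm{Sym}\,F=0$ and hence the class of $F$ vanishes.

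The heart of the argument is this injectivity on the invariant subspace, which I would prove by a polarization argument that exploits the separated, single-point structure of polarized functionals. Write $F$ as a finite sum of terms
\[
F(\phi_1,\dots,\phi_m)=\int_{U^m} h(x_1,\dots,x_m)\,\prod_{j=1}^m \beta_j\big(\jet_\infty\phi_j(x_j)\big),
\]
with $h\in\cinfty_c(U^m)$ and each $\beta_j$ a Lagrangian; the essential point is that $\phi_j$ enters only through its jet at the single point $x_j$. To recover the symmetric kernel of $F$ from $P:=\delta_m^*F$, I would probe $P$ with fields $\phi_0+\sum_{a=1}^m s_a\chi_a$, where the bumps $\chi_a$ have supports shrinking to $m$ pairwise distinct points $p_1,\dots,p_m$, and compute the mixed derivative $\partial_{s_1}\cdots\partial_{s_m}\big|_{s=0}$. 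Because each factor $\beta_j$ sees the field only at the single point $x_j$, and the $p_a$ are distinct, in the limit no factor can simultaneously register two different bumps: any ``collision'' term in the Leibniz expansion (in which one Lagrangian factor absorbs two of the bumps while another absorbs none) involves $\supp\chi_a\cap\supp\chi_b=\varnothing$ and therefore drops out. Hence only the bijective distributions of the bumps over the factors survive, and the derivative isolates exactly the $S_m$-symmetrization of $h$ paired with the jet-derivatives of the $\beta_j$, evaluated on the open configuration space $\mathrm{Conf}_m(U)=\{(x_1,\dots,x_m)\in U^m:\ x_i\ \text{pairwise distinct}\}$.

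Varying $\phi_0$ then recovers the full fibre dependence of each $\beta_j$ (not just its linearization), while varying the supports $p_a$ localizes the kernel near an arbitrary distinct configuration. I would conclude from $P=0$ that the symmetrized kernel underlying $F$ vanishes on $\mathrm{Conf}_m(U)$; since $h$ is smooth on all of $U^m$ and $\mathrm{Conf}_m(U)$ is dense, the symmetric kernel vanishes on $U^m$, and as $F$ is $S_m$-invariant this forces $F=0$ in $\loc_c^{[m]}(U)$, which would complete the proof.

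The main obstacle is precisely this last step: controlling the collision contributions. One must show that the parts supported on the partial diagonals of $U^m$ (where two or more evaluation points coincide) cannot conspire to cancel the genuinely separated part. The subtlety is that one cannot simply separate terms by polynomial degree, because the Lagrangians $\beta_j$ need not be homogeneous, nor even polynomial, along the fibres; a naive polarization (differentiating at a single field) only recovers the slot-wise linearization and loses the higher per-slot behaviour. The distinct-point probing above circumvents this by killing the collision terms through support considerations rather than a degree count, and I expect organizing this estimate cleanly — including the passage from $\mathrm{Conf}_m(U)$ to all of $U^m$ by smoothness — to be where the real work lies.
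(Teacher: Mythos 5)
Your formal reductions are correct and your probing device is, in spirit, exactly the one the paper relies on: the paper's own proof is essentially a citation of Theorem~3.1 of \cite{FR3}, which proves injectivity of $\Sym^m(\loc_c(U)) \to \cinfty(\Ecal(U))$ by studying $m$-fold functional derivatives together with a partition-of-unity trick, plus the remark that the argument carries over verbatim to the thickened space $\loc_c^{[m]}/S_m$. Your reduction to injectivity on $S_m$-invariants via the symmetrizer is fine, and your observation that disjoint supports kill all ``collision'' terms in the Leibniz expansion is also correct. The gap is in what you then extract from the probe. The mixed derivative $\partial_{s_1}\cdots\partial_{s_m}\big|_{s=0}$ computes only the \emph{fully mixed first} fiber derivative of the kernel, i.e.\ (summed over permutations) $h$ paired with $\partial_1\beta_1\otimes\cdots\otimes\partial_m\beta_m$. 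Varying $\phi_0$ only moves the fiber point at which this mixed derivative is evaluated; it does \emph{not} ``recover the full fibre dependence of each $\beta_j$.'' Any kernel that is independent of the fiber variables of at least one slot --- for instance the symmetric kernel $\alpha(\xi_1,\dots,\xi_m)=\sum_j \beta(\xi_j)$ multiplied by a symmetric $h\in\cinfty_c(U^m)$ --- satisfies $\partial_1\cdots\partial_m\alpha\equiv 0$ and is therefore invisible to your probe at \emph{every} background $\phi_0$ and \emph{every} configuration of distinct points, yet it defines a nonzero symmetric element of $\loc_c^{[m]}(U)$. So the inference ``$P=0$ implies the symmetrized kernel vanishes on the configuration space'' fails. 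Note that this is not the collision phenomenon you worry about in your last paragraph: these invisible summands are not concentrated on partial diagonals, they live over all of $U^m$, and no refinement of the support argument for a single top mixed derivative can detect them.

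What forces such slot-degenerate terms to vanish is precisely the information your probe discards: the Taylor coefficients of $\delta_m^*F$ in which some slots receive \emph{no} bump, down to the zeroth-order identity $\delta_m^*F(\phi_0)=0$ itself. (In the example above, it is the lower-order part of $\delta_m^*F=0$, combined with $S_m$-invariance, that kills the term --- not the top derivative.) A complete argument therefore needs either probes of all orders $k_a\ge 0$ in each bump (the same support argument kills collisions, and one then runs an induction on $m$ to absorb the summands missing a slot), or the gluing version of the partition-of-unity trick: on pairwise disjoint neighborhoods of distinct points $p_1,\dots,p_m$, an arbitrary tuple $(\phi_1,\dots,\phi_m)$ can be realized as restrictions of a single field, so the diagonal functional already determines the polarized functional near configuration points without differentiating, after which the near-diagonal contributions are handled inductively. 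This induction over degree is the actual content of the ``$m$-fold functional derivatives and a nifty trick with partitions of unity'' that the paper imports from \cite{FR3}; your plan stops one essential step short of it.
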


\begin{proof}
In theorem 3.1 of \cite{FR3}, it is shown that the map
\[
\Sym^m(\loc_c(U)) = \loc_c(U)^{\otimes m}/S_m \to \cinfty(\Ecal(U))
\]
is injective for any open set~$U$.
The argument carries over verbatim to the thickening $\loc_c^{[m]}/S_m$, 
as it amounts to studying $m$-fold functional derivatives and using a nifty trick with partitions of unity.
\end{proof}

With our basic ingredients in place, 
let us check that $\mloc_c$ forms a prefactorization algebra. 
As a first step, we need to describe the structure maps of higher arity.
First, consider the simplest situation.

\begin{lemma}
\label{lem: mloc has prefact maps}
For any pair of disjoint opens $U_0$ and $U_1$, there is a canonical map
\[
\iota_{U_0,U_1}: \mloc_c(U_0) \otimes \mloc_c(U_1) \to \mloc_c(U_0 \sqcup U_1).
\]
\end{lemma}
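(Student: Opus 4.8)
The plan is to produce the map $\iota_{U_0,U_1}$ at the level of Lagrangian densities first, and then descend to multilocal functionals by appealing to the image characterization. Recall that $\mloc_c(U) = \bigoplus_{m=0}^\infty \loc_c^{\langle m \rangle}(U)$, where each $\loc_c^{\langle m \rangle}(U)$ is the image of the integration map $q^{\langle m \rangle}: \lag_c^{[m]}(U) \to C^\infty(\Ecal(U))$. Since $\mloc_c$ carries a commutative-algebra structure (it is a symmetric algebra built from local functionals), the most natural candidate for $\iota_{U_0,U_1}$ is simply the multiplication map precomposed with the extension-by-zero maps: given $F \in \mloc_c(U_0)$ and $G \in \mloc_c(U_1)$, first extend each to $U_0 \sqcup U_1$ (using that a compactly supported functional on an open extends to a larger open by restricting fields, exactly as in Section~\ref{sec on obs as precosheaf}), and then take their product inside $C^\infty(\Ecal(U_0 \sqcup U_1))$.

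First I would verify that this product lands back in $\mloc_c(U_0 \sqcup U_1)$ and not merely in $C^\infty(\Ecal(U_0 \sqcup U_1))$. The key computation is the one already displayed in~\eqref{eq: rep of multilocal}: a product of an $m$-fold and an $m'$-fold multilocal functional is an $(m+m')$-fold multilocal functional, because the product of two integrals over $U_0^m$ and $U_1^{m'}$ rewrites as a single integral over the product space $U_0^m \times U_1^{m'} \subseteq (U_0 \sqcup U_1)^{m+m'}$. Concretely, if $F = q^{\langle m\rangle}(\alpha)$ with $\alpha \in \lag_c^{[m]}(U_0)$ and $G = q^{\langle m'\rangle}(\beta)$ with $\beta \in \lag_c^{[m']}(U_1)$, then the external product $\alpha \boxtimes \beta$ defines an element of $\lag_c^{[m+m']}(U_0 \sqcup U_1)$ whose image under $q^{\langle m+m'\rangle}$ is $(\ext F)(\ext G)$. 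This produces the map at the level of Lagrangian densities, and the image characterization of $\loc_c^{\langle m+m'\rangle}$ guarantees the product is genuinely multilocal.

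To make $\iota_{U_0,U_1}$ well-defined on the (algebraic) tensor product rather than merely bilinear, I would check bilinearity of the product pairing --- which is immediate, since multiplication in $C^\infty(\Ecal(U_0 \sqcup U_1))$ is bilinear and extension-by-zero is linear --- and hence it factors through $\mloc_c(U_0) \otimes \mloc_c(U_1)$. The disjointness of $U_0$ and $U_1$ is what makes this clean: a field on $U_0 \sqcup U_1$ restricts independently to $U_0$ and $U_1$, so $\ext F$ and $\ext G$ depend on disjoint sets of field data, and no interaction or overlap corrections arise. One should also record that the empty-degree ($m=0$) summand, consisting of constants, behaves as a unit, so the map respects the grading additively.

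The step I expect to require the most care is confirming that $\alpha \boxtimes \beta$ indeed defines a valid element of $\lag_c^{[m+m']}(U_0 \sqcup U_1)$ as defined in Definition~\ref{def:multilag}, i.e.\ that the external tensor product of the cutoff functions (elements of $\cinfty_c(U_0^m)$ and $\cinfty_c(U_1^{m'})$) lands in $\cinfty_c((U_0\sqcup U_1)^{m+m'})$ and is compatible with the relative tensor product over $\cinfty_c$ that appears in the definition. This is where the algebraic structure of the thickening must be handled honestly rather than glossed over; once it is in place, the rest is the bookkeeping of extension maps and the grading. The genuine content, the associativity and symmetry needed for a full prefactorization algebra structure, is deferred to the higher-arity structure maps treated subsequently, so for this lemma it suffices to exhibit the single binary map and its well-definedness.
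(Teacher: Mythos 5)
Your proposal is correct and takes essentially the same route as the paper's own proof: both construct the binary map via the external product of Lagrangian densities, using that $U_0^m \times U_1^{m'}$ sits as a component of $(U_0 \sqcup U_1)^{m+m'}$ together with the inclusion $\cinfty_c(U_0^m) \otimes \cinfty_c(U_1^{m'}) \hookrightarrow \cinfty_c(U_0^m \times U_1^{m'})$, and then descend to functionals by evaluating on fields. The step you flag as needing the most care --- compatibility of the cutoff factors with the relative tensor product over $\cinfty_c$ in Definition~\ref{def:multilag} --- is precisely what the paper's proof handles by its explicit component-by-component decomposition of $\lag_c^{[n]}(U_0 \sqcup U_1)$ over $\{0,1\}^n$.
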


\begin{proof}
For any pair of disjoint open sets $U_0$ and $U_1$, there is a canonical isomorphism
\[
\loc_c(U_0) \oplus \loc_c(U_1) \to \loc_c(U_0 \sqcup U_1)
\]
because $\loc_c$ is a precosheaf and because of the universal property of the direct sum.
Taking the symmetric algebra sends direct sums to tensor products so
we obtain a canonical map
\begin{equation}
\label{sym map}
\Sym(\loc_c(U_0)) \otimes \Sym(\loc_c(U_1)) \cong \Sym(\loc_c(U_0) \oplus \loc_c(U_1)) \to \Sym(\loc_c(U_0 \sqcup U_1)).
\end{equation}
(Here $\Sym$ means the symmetric algebra using the algebraic tensor product.)
This construction offers the essential algebraic reason that desired map exists, because $\mloc_c$ can be seen as a thickening of $\Sym(\loc_c)$.
(That symmetric algebra is, in fact, dense inside~$\mloc_c$.)

Let's start by unwinding what each summand $\loc_c^{\langle n \rangle}(U_0 \sqcup U_1)$ means.
By definition it is a quotient of $\lag_c^{[n]}(U_0 \sqcup U_1)$,
and
\[
\lag_c^{[n]}(U_0 \sqcup U_1) = \cinfty_c((U_0 \sqcup U_1)^n) \otimes_{\cinfty_c(U_0 \sqcup U_1)^{\otimes n}} \lag_c(U_0 \sqcup U_1)^{\otimes n}.
\]
Each factor can be further expanded.
For example,
\[
(U_0 \sqcup U_1)^n = U_0^n \sqcup (U_0^{n-1} \times U_1) \sqcup \cdots \sqcup U_1^n = \bigsqcup_{ (v_1,\ldots,v_n) \in \{0,1\}^n} \prod_{j=1}^n U_{v_j},
\]
so we find
\[
\cinfty_c((U_0 \sqcup U_1)^n) \cong \prod_{ (v_1,\ldots,v_n) \in \{0,1\}^n} C^\infty_c \left( \prod_{j=1}^n U_{v_j} \right)
\]
as algebras. Similarly, because 
\[
\cinfty_c(U_0 \sqcup U_1) \cong \cinfty_c(U_0) \times \cinfty_c( U_1),
\]
we find
\[
\cinfty_c(U_0 \sqcup U_1)^{\otimes n} \cong \prod_{ (v_1,\ldots,v_n) \in \{0,1\}^n}  \bigotimes_{j=1}^n \cinfty_c( U_{v_j})
\]
We mean here the algebraic tensor product.
Now recall that there is a natural inclusion
\[
\cinfty_c(M) \otimes \cinfty_c(N) \hookrightarrow \cinfty_c(M \times N)
\]
for any manifolds $M$ and $N$, so that we have
\[
\prod_{v = (v_1,\ldots,v_n) \in \{0,1\}^n}  \bigotimes_{j=1}^n \cinfty_c( U_{v_j}) \hookrightarrow 
\prod_{(v_1,\ldots,v_n) \in \{0,1\}^n} C^\infty_c \left( \prod_{j=1}^n U_{v_j} \right).
\]
Note that the inclusion is on each factor $v = (v_1,\ldots,v_n)$ separately.

The same argument shows
\[
\lag_c(U_0 \sqcup U_1)^{\otimes n} \cong \prod_{ (v_1,\ldots,v_n) \in \{0,1\}^n} \lag_c \left( \prod_{j=1}^n U_{v_j} \right).
\]
It is a module over $\prod_{v = (v_1,\ldots,v_n) \in \{0,1\}^n}  \bigotimes_{j=1}^n \cinfty_c( U_{v_j})$ where the $v$th factor of the algebra acts on the $v$th component of the module built from~$\lag_c$.
Thus, applying these observations to the tensor product, we find
\begin{equation}
\label{lagcunwound}
\lag_c^{[n]}(U_0 \sqcup U_1) = \prod_{ (v_1,\ldots,v_n) \in \{0,1\}^n} C^\infty_c \left( \prod_{j=1}^n U_{v_j} \right) \otimes_{\bigotimes_{j=1}^n \cinfty_c( U_{v_j})}\lag_c \left( \prod_{j=1}^n U_{v_j} \right).
\end{equation}
In other words, we can decompose $\lag_c^{[n]}(U_0 \sqcup U_1)$ into components associated to the disjoint opens of the form~$\prod_{j=1}^n U_{v_j}$.

There is then a natural map 
\begin{equation}
\label{lag binary map}
\lag_c^{[k]}(U_0) \otimes \lag_c^{[n-k]}(U_0) \to \lag_c^{[n]}(U_0 \sqcup U_1) 
\end{equation}
for every natural number $0 \leq k \leq n$, as follows.
On the left hand side we have the opens $(U_0)^k$ and $(U_1)^{n-k}$ while on the right hand side we have $(U_0)^k \times (U_1)^{n-k}$ as a component of $(U_0 \sqcup U_1)^n$.
Using the expression~\eqref{lagcunwound} for each appearance of $\lag_c^{[*]}$,
one obtains a map
\begin{align*}
( \cinfty_c(U_0^k) &\otimes_{\cinfty_c(U_0)^{\otimes k}} \lag_c(U_0)^{\otimes k} )
\otimes 
\left(\cinfty_c(U_1^{n-k}) \otimes_{\cinfty_c(U_1)^{\otimes n-k}} \lag_c(U_1)^{\otimes n-k}\right)\\
&\xto{\cong}
(\cinfty_c(U_0^k) \otimes \cinfty_c(U_1^{n-k})) 
\otimes_{\cinfty_c(U_0)^{\otimes k} \otimes \cinfty_c(U_1)^{\otimes n-k}} 
\left(\lag_c(U_0)^{\otimes k} \otimes \lag_c(U_1)^{\otimes n-k} \right)
\end{align*}
and the codomain is a component of $\lag_c^{[n]}(U_0 \sqcup U_1)$, as desired.

The map $\lag_c^{[m]} \to \loc_c^{[m]}$ is given by evaluating a Lagrangian density on fields
(i.e., looking at the associated function on fields).
This map~\eqref{lag binary map} induces the needed map 
\[
\loc_c^{[k]}(U_0) \otimes \loc_c^{[n-k]}(U_0) \to \loc_c^{[n]}(U_0 \sqcup U_1) 
\]
because evaluating the ``product of observables'' (the output of~\eqref{lag binary map}) on a field agrees with the product of each observable evaluated on the field.

Running over all $k$ and $n$, we obtain the desired map for~$\mloc_c$.
\end{proof}

We use this construction to define the structure map
\[
\iota_{\{U_i\};V}: \bigotimes_{i} \mloc_c(U_i) \to \mloc_c(V)
\]
for any finite tuple $U_1$, \dots, $U_k$ of pairwise disjoint opens contained in a common open $V$ as the composite
\[
\bigotimes_{i} \mloc_c(U_i) \to \mloc_c(\sqcup_i U_i) \to \mloc_c(V)
\]
where the first map arises from the lemma (just induct on the number of opens) and the second map is the map given by viewing $\mloc_c$ as a precosheaf.
This construction is manifestly equivariant in relabelings of the tuple, 
and compositions of the structure maps are associative on the nose.
(See Chapter 3 of \cite{CoGw} for a full treatment of this kind of construction.)
In consequence we have shown the following.

\begin{lemma}
The construction $\mloc_c$ determines a prefactorization algebra on~$X$.
\end{lemma}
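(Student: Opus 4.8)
The plan is to assemble the prefactorization algebra structure from the binary operation provided by Lemma~\ref{lem: mloc has prefact maps} and then verify the defining axioms of a prefactorization algebra (equivariance, associativity, and unit). Concretely, for a finite family $U_1, \dots, U_k$ of pairwise disjoint opens contained in a common open $V$, I would define the structure map $\iota_{\{U_i\};V}$ as the composite of the iterated binary inclusion $\bigotimes_i \mloc_c(U_i) \to \mloc_c(\sqcup_i U_i)$ with the precosheaf extension $\mloc_c(\sqcup_i U_i) \to \mloc_c(V)$, exactly as indicated just before the statement.

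First I would make precise the ``just induct on the number of opens'' clause: the binary map of the lemma is defined between disjoint pairs, so I would iterate it to produce a $k$-ary map into $\mloc_c(U_1 \sqcup \cdots \sqcup U_k)$. The essential thing to confirm here is that the result is \emph{independent of the bracketing and ordering} used in the iteration. This reduces, after unwinding through the decomposition \eqref{lagcunwound}, to the fact that the inclusions $\cinfty_c(M) \otimes \cinfty_c(N) \hookrightarrow \cinfty_c(M \times N)$ (and the analogous statement for $\lag_c$) are associative and symmetric in $M$ and $N$ up to the canonical identifications. Since $\mloc_c$ is built as a direct sum over $m$ of the images of these tensor constructions, the compatibility is inherited componentwise from the underlying manifolds, where $\prod_j U_{v_j}$ is genuinely associative and the product space carries an honest $S_k$-action on the factors.

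Next I would verify the two axioms. For equivariance, a relabeling $\sigma$ of the tuple permutes the factors $U_{\sigma(i)}$; under the decomposition into components indexed by $\{0,1\}^n$ (or more generally by functions into $\{1,\dots,k\}$), this permutation matches the corresponding permutation of coordinates on $(\sqcup_i U_i)^n$, so the structure map intertwines the symmetric group actions on the nose. For associativity, given a refinement where each $U_i$ is further subdivided, I would show that building up $\mloc_c(V)$ in one step versus two steps agrees; again this follows because the extension maps of the precosheaf $\mloc_c$ compose strictly (they are literally restriction-of-fields followed by integration) and because the tensor-then-include construction is strictly associative at the level of the underlying product manifolds. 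The unit axiom is immediate since $\loc_c^{\langle 0 \rangle} = \CC$ sits as the degree-zero summand.

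The main obstacle I anticipate is not any single deep computation but rather the bookkeeping required to check that the iterated binary maps are genuinely well-defined and strictly associative, given that each $\lag_c^{[n]}(U_0 \sqcup U_1)$ was identified only as a \emph{product} over $\{0,1\}^n$ of relative tensor products in \eqref{lagcunwound}. One must track that the $k$-ary map lands in the correct components (those indexed by tuples $(v_1,\dots,v_n)$ whose coordinates respect the assignment of each factor to one of the $U_i$) and that passing to the quotient defining $\loc_c^{\langle m \rangle}$ is compatible with all of this, using that evaluation on fields turns the formal product of Lagrangian densities into the actual product of functionals. Because everything is modeled on strict operations on function spaces over product manifolds, I expect all coherence to hold \emph{on the nose} rather than merely up to homotopy, so the construction yields a strict (not merely homotopy) prefactorization algebra, and I would cite Chapter~3 of \cite{CoGw} for the general formalism that packages these verifications.
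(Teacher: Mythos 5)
Your proposal is correct and takes essentially the same approach as the paper: the structure maps are defined as the composite of the iterated binary map (from the disjoint-pair lemma) with the precosheaf extension, and equivariance and strict associativity are observed to hold on the nose, with the general packaging deferred to Chapter~3 of \cite{CoGw}. The only difference is that you spell out the bracketing-independence and component bookkeeping that the paper treats as manifest.
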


We now turn to verifying $\mloc_c$ is a cosheaf with respect to the Weiss topology, 
which is the remaining condition to be a factorization algebra.

The crucial reduction step is to observe that the structure maps of the precosheaf $\mloc_c$ preserve the direct sum decomposition into ``symmetric degrees'' (i.e., into the components $\loc_c^{\langle m \rangle}$).
For $U \subset U'$, the extension sends an element in $\loc_c^{\langle m \rangle}(U)$ to an element in $\loc_c^{\langle m \rangle}(U')$.
Hence it suffices to check the gluing axiom for each summand $\loc_c^{\langle m \rangle}$ separately.
We will do this in stages.

Consider first the polarized multilocal functionals $\loc_c^{[m]}$,
because we can then make a computation on the product space~$X^m$.

\begin{lemma}
The functor $\loc_c^{[m]}$ from $\Opens(X^m)$ to graded vector spaces is a cosheaf.
\end{lemma}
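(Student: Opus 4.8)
The plan is to replay the proof of Lemma~\ref{lem: loc is cosheaf} one dimension up, working on the product manifold $X^m$ (of dimension $nm$) with the externally-tensored jet bundle in place of $J^\infty E$. First I would make the functor precise on all of $\Opens(X^m)$, not merely on product opens of the form $U^m$: for $W \subset X^m$ an element of $\lag_c^{[m]}(W)$ is a compactly supported section over $W$ of the sheaf $\pi_* C^\infty_{(J^\infty E)^{\boxtimes m}} \otimes_{C^\infty_{X^m}} \cDens_{X^m}$, where $\pi$ is the projection of the external product $(J^\infty E)^{\boxtimes m} \to X^m$ (with a colimit over the fiberwise order left implicit). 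Tracing through Definition~\ref{def:multilag} shows that on a product open this recovers the stated tensor product, since the relative tensor product over $C^\infty_c(U)^{\otimes m}$ precisely thickens products of densities on the factors to arbitrary densities on $U^m$. The map $q^{[m]}$ of~\eqref{eq: alt def of multiloc} then extends verbatim by integrating the decomposable section $(x_1,\dots,x_m) \mapsto (\jet_\infty\phi_1(x_1),\dots,\jet_\infty\phi_m(x_m))$ against such a Lagrangian density over $W$, and I set $\loc_c^{[m]}(W)$ to be its image.

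With the definitions in hand, I would first observe that $\lag_c^{[m]}$ is a cosheaf on $X^m$. This is the verbatim partition-of-unity argument used for the warm-up Lemma and for $\lag_c$: the functor assigns compactly supported sections of a sheaf of modules over $C^\infty_{X^m}$, and a partition of unity on $X^m$ subordinate to a cover decomposes any such section independently of the field. Next I would build the dg resolution: the Lagrangian $p$-forms $\llag^\bullet$ on $X^m$ attached to $(J^\infty E)^{\boxtimes m}$, with differential $\d_\llag$ induced by the exterior derivative exactly as in Lemma~\ref{lem: d makes sense}, again form cosheaves layer by layer, and the polarized functionals sit in the top cohomological degree $nm$.

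The crux is the identification of $\loc_c^{[m]}$ as the cokernel of $\d_\llag\colon \llag_c^{nm-1} \to \llag_c^{nm}$, that is, the statement that two degree-$m$ Lagrangian densities on $X^m$ produce the same polarized multilocal functional precisely when they differ by a total derivative. This is the multivariable analogue of Theorem~\ref{olver thm}, and it is the main obstacle, because the external jet bundle $(J^\infty E)^{\boxtimes m}$ is a proper sub-object of the full jet bundle of $E^{\boxtimes m}$: only derivatives within each factor occur. I would establish it by a K\"unneth-type argument: over a product of star-convex opens the Lagrangian-forms complex for the external jet bundle is modeled by the tensor product of the single-variable complexes, whose cohomology Theorem~\ref{olver thm} pins down ($\RR$ in degree $0$, local functionals in top degree). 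Concretely, freezing all but one field $\phi_i$ reduces the vanishing of $q^{[m]}(\alpha)$ to the single-variable theorem in the $x_i$-direction, and peeling off one variable at a time forces $\alpha$ to be a total derivative; the use of decomposable fields is harmless, since at a point of $X^m$ with pairwise distinct coordinates the jets of the $\phi_i$ can be prescribed independently. The genuine care here lies in assembling the variable-by-variable antiderivatives into a single total derivative whose dependence on the remaining fields is itself local, which is exactly what a clean variational-bicomplex statement for the external bundle would guarantee.

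Finally, with the columns exact (from the cokernel identification) and the rows exact (from the cosheaf property of the Lagrangian $p$-form layers), I would run the diagram chase of the proof of Lemma~\ref{lem: loc is cosheaf} verbatim on $X^m$: surjectivity of the gluing map $\bar c$ follows from surjectivity of $\int$ together with the cosheaf property of $\llag^{nm}_c$, and injectivity follows by lifting a class killed by $\int$ to a total derivative in $\llag^{nm-1}_c(W)$ and invoking exactness of the middle row. This yields that $\loc_c^{[m]}$ is a cosheaf on $X^m$.
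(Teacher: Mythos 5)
Your proposal is correct and takes essentially the same route as the paper: the paper's own proof is just two sentences --- first verify that $\lag_c^{[m]}$ is a cosheaf on $X^m$ by a partition-of-unity argument, then observe that $\loc_c^{[m]}$ is a quotient (colimit) of cosheaves and hence a cosheaf --- which is precisely your skeleton, including the diagram chase inherited from the local case. The only divergence is explicitness: the point you isolate as the main obstacle (that the kernel of $q^{[m]}$ is exactly the total derivatives, i.e.\ the multivariable Olver-type exactness for the external jet bundle $(J^\infty E)^{\boxtimes m}$, which is genuinely smaller than the full jet bundle of $E^{\boxtimes m}$) is exactly what the paper compresses into the phrase ``carries over with very minor modifications,'' so your K\"unneth/peeling sketch supplies detail the published argument omits rather than contradicting it.
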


The proof of lemma~\ref{lem: loc is cosheaf} carries over to $\loc_c^{[m]}$ with very minor modifications.
One first verifies that $\lag_c^{[m]}$ is a cosheaf on $X^m$ (by a partition of unity argument) 
and then verify it induces the desired behavior of~$\loc_c^{[m]}$ 
(because colimits commute, a colimit of cosheaves is the underlying colimit of precosheaves).

Note that as $\loc^{\langle m \rangle} = \loc_c^{[m]}/S_m$, and colimits of cosheaves are computed as precosheaves, we immediately obtain the following.

\begin{cor}
Because $\loc_c^{[m]}$ is a cosheaf on $X^m$, so is $\loc^{\langle m \rangle}$ is too.
\end{cor}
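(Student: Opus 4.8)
The plan is to realize $\loc^{\langle m \rangle}$ as a colimit of the cosheaf $\loc_c^{[m]}$ and then invoke the general principle that the inclusion of cosheaves into precosheaves creates colimits. By the identity $\loc^{\langle m \rangle} = \loc_c^{[m]}/S_m$ recorded above, the functor $\loc^{\langle m \rangle}$ is the $S_m$-coinvariants of $\loc_c^{[m]}$, and forming coinvariants is a colimit: it is the colimit of the one-object diagram $BS_m \to \Vec$ whose value is $\loc_c^{[m]}$ and whose morphisms are the permutation automorphisms (equivalently, the coequalizer of the $m!$ permutation maps). Everything then reduces to the assertion that this particular colimit, taken objectwise, preserves the cosheaf property.

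First I would verify that the $S_m$-action on $\loc_c^{[m]}$ is an action by \emph{morphisms of precosheaves}, so that the diagram $BS_m \to \Vec$ factors through cosheaves. Each $\sigma \in S_m$ acts by simultaneously permuting the $m$ tensor factors of $\lag_c^{[m]}$ and the corresponding coordinates of the product space, and these permutations commute with the extension maps along inclusions of opens, since those extension maps operate coordinatewise via extension-by-zero. I would also note that on a product open $U^m$ (the natural source of the functor on $\Opens(X)$) the action genuinely restricts to an automorphism of $\loc_c^{[m]}(U)$, precisely because $U^m$ is $S_m$-invariant; this is what makes the pointwise quotient well defined.

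Next I would apply the dual of the standard sheaf-theoretic fact. Just as limits of sheaves are computed as limits of presheaves because the sheaf condition is itself a limit condition, colimits of cosheaves are computed objectwise as colimits of precosheaves because the cosheaf condition is a colimit condition and colimits commute with colimits; see Appendix~A.5 of~\cite{CoGw}. Since $\loc_c^{[m]}$ is a cosheaf by the preceding lemma and the $S_m$-action is by cosheaf automorphisms, the colimit $\loc_c^{[m]}/S_m = \loc^{\langle m \rangle}$, formed objectwise, is again a cosheaf, which is exactly the corollary.

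The step I expect to require the most care is justifying that the coinvariant colimit interchanges with the gluing colimit rather than demanding a cosheafification. This is the general interchange of colimits, but in the present setting it is especially transparent: we work over a field of characteristic zero, so $S_m$-coinvariants are naturally a direct summand of $\loc_c^{[m]}(U)$ (isomorphic to the invariants), and the cosheaf gluing isomorphism for $\loc_c^{[m]}$ restricts to this summand. Thus the gluing axiom for $\loc^{\langle m \rangle}$ is inherited directly from that of $\loc_c^{[m]}$, making the corollary genuinely immediate.
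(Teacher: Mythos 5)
Your proposal is correct and takes essentially the same route as the paper: the paper's entire proof is the observation that $\loc^{\langle m\rangle} = \loc_c^{[m]}/S_m$ is a colimit (coinvariants) of the cosheaf $\loc_c^{[m]}$, together with the principle that colimits of cosheaves are computed as colimits of precosheaves---precisely the interchange-of-colimits argument you invoke. Your extra verifications (that the $S_m$-action is by morphisms of precosheaves, restricting to automorphisms on the $S_m$-invariant opens $U^m$, and the characteristic-zero splitting of coinvariants) are just added detail on top of the same idea.
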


We now have all the ingredients to swiftly prove our main goal for this section.

\begin{prop}
\label{prop: mloc is fact alg}
The construction $\mloc_c$ determines a factorization algebra on~$X$.
\end{prop}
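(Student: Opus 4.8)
The plan is to assemble the result from the structural pieces already established, rather than to prove anything essentially new. Proposition~\ref{prop: mloc is fact alg} asks us to show that $\mloc_c$ is a factorization algebra, and a factorization algebra is precisely a prefactorization algebra that additionally satisfies the cosheaf condition with respect to the Weiss topology. We have already shown, in the lemma preceding this statement, that $\mloc_c$ is a prefactorization algebra, so the entire content of the proof is to verify the Weiss-cosheaf (gluing) axiom. Thus the first step is to recall that reduction: by definition $\mloc_c(U) = \bigoplus_{m=0}^\infty \loc_c^{\langle m \rangle}(U)$, and the structure/extension maps of $\mloc_c$ respect this direct-sum decomposition by symmetric degree, as noted just above. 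Since a direct sum of cosheaves is a cosheaf and the Weiss-cosheaf condition may be checked summand by summand, it suffices to show that each $\loc_c^{\langle m \rangle}$ satisfies the Weiss gluing axiom.

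Next I would invoke the corollary immediately above: $\loc_c^{[m]}$ is a cosheaf on $X^m$, and hence so is its quotient $\loc^{\langle m \rangle} = \loc_c^{[m]}/S_m$, since colimits of cosheaves are computed as the underlying colimits of precosheaves. The one genuine subtlety is that the cosheaf statements just proved are for the \emph{ordinary} topology on $X^m$, whereas the factorization-algebra condition requires the \emph{Weiss} cosheaf condition on $X$ itself. The key observation bridging these is that a Weiss cover of an open $U \subset X$ induces, by taking $m$-fold products and intersections, an ordinary open cover of the relevant diagonal-supported region inside $U^m$; more precisely, the $m$-fold multilocal functionals are supported on $U^m$ and the Weiss condition on $X$ unwinds to the ordinary cosheaf condition for $\loc_c^{[m]}$ on $X^m$ (this is exactly the reason Costello--Gwilliam organize the argument on product spaces). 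I would therefore verify that a Weiss cover of $U$ in $X$ pulls back to an honest open cover of $U^m$ in $X^m$ adapted to $\loc_c^{[m]}$, so that the already-established ordinary-topology cosheaf property of $\loc_c^{[m]}$ delivers exactly the gluing needed.

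The main obstacle, and the step deserving the most care, is precisely this translation between the Weiss topology on $X$ and the ordinary topology on the product spaces $X^m$: one must check that the Weiss covers are \emph{sent} to genuine covers under the product construction, and that the $S_m$-quotient interacts correctly with the gluing so that descent survives passage to $\loc^{\langle m \rangle}$. Everything else is formal bookkeeping. Once the summand-wise Weiss-cosheaf condition is in hand, I would conclude by combining it with the prefactorization-algebra structure already proved: a prefactorization algebra satisfying the Weiss-cosheaf condition is by definition a factorization algebra, so $\mloc_c$ is a factorization algebra on $X$, completing the proof. I expect the write-up to be short precisely because each ingredient (prefactorization structure, the cosheaf property on $X^m$, the behavior of the $S_m$-quotient) has been isolated beforehand, and the role of this proposition is to knit them together.
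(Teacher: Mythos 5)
Your proposal follows essentially the same route as the paper's own proof: reduce by the symmetric-degree decomposition to each summand $\loc_c^{\langle m \rangle}$, invoke the cosheaf property of $\loc_c^{[m]}$ (and its $S_m$-quotient) on $X^m$, and bridge the Weiss topology on $X$ to the ordinary topology on $X^m$ by noting that a Weiss cover $\{U_i\}$ of $U$ yields an honest cover $\{U_i^m\}$ of $U^m$, since any $m$ points of $U$ lie in a single $U_j$. The one step you flag as delicate is exactly this last observation, and it is as immediate as you suspect --- no intersections or diagonal considerations are needed --- so your plan is correct and matches the paper.
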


We mean here a {\em strict}, not a homotopy, factorization algebra.

\begin{proof}
Let $\{U_i\}_i$ be a Weiss cover of the open subset $U \subset X$.
For any natural number~$m$,
observe that $\{U_i^m\}_i$ forms an ordinary cover for $U^m$:
for any point $(x_1,\ldots,x_m) \in U^m$, 
the finite subset $\{x_1,\ldots,x_m\} \subset U$ is contained in some $U_j$ by the Weiss condition,
so that we know $(x_1,\ldots,x_m) \in U^m_j$.
Thus, for each $m$, we know that $\loc_c^{\langle m \rangle}(U^m)$ can be recovered as the cokernel of the structure maps arising from the cover $\{U_i^m\}_i$ of $U^m$.
As $\mloc_c(U)$ is the direct sum (i.e., a colimit) of the $\loc_c^{\langle m \rangle}(U^m)$,
we see that it is cokernel of the structure maps arising from the Weiss cover~$\{U_i\}_i$.
\end{proof}

This result, however, is not ideal for our purposes, 
because it is not homotopical and hence does not play nicely with the BV formalism.

\subsection{Multilocal functionals, the dg version}

We swiftly combine the maneuvers of the preceding two subsections.

A degree $m$ Lagrangian $p$-form consists of a function on the jets of a field with values in $p$-forms on the product manifold~$X^m$ .
To formulate this notation precisely,
we note that there is a natural vector bundle
\[
{\rm J^k} E^{[m]} := \left( {\rm J}^k E \right)^{\boxtimes m} \to X^m
\]
given by the outer tensor product,
and that every $m$-tuple of fields $\phi_1, \ldots, \phi_m$ in $\Ecal(X)$ determines a section of this bundle by taking the tensor product of the prolongations
\[
\jet_k(\phi_1)(x_1) \boxtimes \cdots \boxtimes \jet_k(\phi_m)(x_m)),
\]
where $(x_1,\ldots,x_m) \in X^m$.

\begin{df}
The presheaf on $X$ of order $k$ {\em degree $m$ Lagrangian $p$-forms} assigns to an open set $U$,
the vector space
\[
\llag_{(k)}^{[m]p}(U) = \pi_* C^\infty_{{\rm J^k} E^{[m]}}(U^m) \otimes_{C^\infty(U^m)} \Omega^p(U^m).
\]
Let the {\em degree $m$ Lagrangian $p$-forms} $\llag^{[m]p}(U)$ denote the colimit over $k$ of the~$\llag_{(k)}^{[m]p}(U)$.
Let $\llag_c^{[m]p}(U)$ denote the degree $m$ Lagrangian $p$-forms with compact support.
\end{df}

The de Rham differential sends degree $m$ Lagrangian $p$-forms to degree $m$ Lagrangian $p+1$-forms
by the same argument as Lemma~\ref{lem: d makes sense}.
(It can increase the order from $k$ to~$k+1$.)
Let $\d_\llag$ denote this differential, with the degree $m$ not displayed in the notation.
Hence, just as in the local case, we obtain a cochain complex $\llag^{[m]\bullet}$ given by
\[
\llag_c^{[m]0} \xto{\d_\llag} \llag_c^{[m]1} \xto{\d_\llag} \cdots \xto{\d_\llag} \llag_c^{[m](mn)}
\]
and concentrated in degrees $-mn$ to 0,
where $\d_\llag$ denotes the differential induced by the exterior derivative as just explained.
As in the local case, we take a mapping cone to remove the ``constant'' terms.

\begin{df}
Let $\lloc_c^{[m]\bullet}$ denote the precosheaf of cochain complexes of {\em degree $m$ compactly supported local differential forms}
\[
\operatorname{Cone}\left(\Omega^\bullet_{X^m, c}[mn] \xto{\iota} \llag_c^{[m]} \right)
\]
where $\iota$ includes a degree $m$ Lagrangian $p$-form as itself.

We call $\lloc_c^{[m]\bullet}$ the {\em degree $m$ local differential forms} with compact support.
\end{df}

Note that $\llag^{[m](mn)}$ is precisely $\lag^{[m]}$.
Hence the composite map
\[
\lloc_c^{[m]\bullet}(U) \xto{\tau_{\geq 0}} \llag_c^{[m](mn)}(U) = \lag_c^{[m]}(U) \xto{q^{\langle m \rangle}} C^\infty(\Ecal(U))
\]
realizes a degree $m$ Lagrangian top form as a functional on fields.

We now assemble these to define a cochain complex of local differential forms.

\begin{df}
\label{df: dg multiloc}
Let $\mlloc_c$ denote the functor from $\Opens(X)$ to $\Ch$ that assigns to an open set $U$, 
the cochain complex
\[
\mlloc_c(U) = \bigoplus_{m=0}^\infty \lloc_c^{[m]\bullet}(U)
\]
given by the direct sum of the degree $m$ compactly supported local differential forms.
We call $\mlloc_c$ the {\em multilocal differential forms}.
\end{df}

We note that one can replace de Rham forms everywhere by de Rham currents to produce $\overline{\mlloc}_c$,
the {\em multilocal differential currents},
which are the multilocal analogue of Definition~\ref{df: dg local curr}.

We now discuss how to equip $\mlloc_c$ with a prefactorization structure,
and the argument follows the pattern used for~$\mloc_c$.
Consider the simplest case of two disjoint opens.

\begin{lemma}
For any pair of disjoint opens $U$ and $U'$, there is a canonical morphism
\[
\iota_{U,U'}: \mlloc_c(U) \otimes \mlloc_c(U') \to \mlloc_c(U \sqcup U').
\]
\end{lemma}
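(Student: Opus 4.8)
The plan is to mirror the construction in Lemma~\ref{lem: mloc has prefact maps}, replacing the pointwise multiplication of Lagrangian densities by the external (wedge) product of Lagrangian differential forms. As there, the first step is to unwind the target. Since $(U \sqcup U')^m = \bigsqcup_{v \in \{0,1\}^m} \prod_{j} U_{v_j}$, and since compactly supported de Rham forms, compactly supported smooth functions, and the pushed-forward jet sheaves all split over the (finitely many) components of a disjoint union, one obtains a decomposition
\[
\lloc_c^{[m]\bullet}(U \sqcup U') \;\cong\; \bigoplus_{v \in \{0,1\}^m} \lloc_c^{[m]\bullet}\Big(\prod_{j} U_{v_j}\Big),
\]
exactly parallel to~\eqref{lagcunwound}. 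It therefore suffices to produce, for each split $k + (m-k) = m$, a chain map landing in the component indexed by $U^k \times (U')^{m-k}$, and then to sum over $k$, over $m$, and over the components. Since $\mlloc_c(U) \otimes \mlloc_c(U') = \bigoplus_{k,\ell} \lloc_c^{[k]\bullet}(U) \otimes \lloc_c^{[\ell]\bullet}(U')$, summing these component maps assembles $\iota_{U,U'}$.

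The binary map on Lagrangian forms is the external product. Given a degree-$k$ Lagrangian $p$-form on $U^k$ and a degree-$(m-k)$ Lagrangian $q$-form on $(U')^{m-k}$, I form their outer product: on the jet side I use the canonical identification $({\rm J}^k E)^{\boxtimes k} \boxtimes ({\rm J}^k E)^{\boxtimes (m-k)} = ({\rm J}^k E)^{\boxtimes m}$ over $U^k \times (U')^{m-k}$, and on the form side I use $\Omega^p_c(U^k) \otimes \Omega^q_c((U')^{m-k}) \to \Omega^{p+q}_c(U^k \times (U')^{m-k})$. The result is a degree-$m$ Lagrangian $(p+q)$-form. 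The shifts match because $kn + (m-k)n = mn$, so in the shifted complexes this external product preserves total degree, and it is a chain map for $\d_\llag$ since $\d_\llag$ on a product is the sum of the two horizontal de Rham differentials and obeys the Leibniz rule with respect to the external product (cf.\ Lemma~\ref{lem: d makes sense}). Passing to top degree and integrating recovers the strict factorization product of Lemma~\ref{lem: mloc has prefact maps}, because evaluating an external product on a pair of fields equals the product of the evaluations.

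The one genuinely new point --- and the step I expect to be the main obstacle --- is extending this external product across the mapping cone of Definition~\ref{df: dg multiloc}, since the auxiliary de Rham summand sits one cohomological degree below the Lagrangian summand, so the naive external product is off by a degree on the cross terms and is not immediately degree-zero. The clean way to handle this is to record that the inclusion $\iota$ of Definition~\ref{df:dL}, which realizes a de Rham form as a field-independent (constant-coefficient) Lagrangian form, is itself multiplicative: the external product of field-independent forms is again field-independent. Consequently $\llag^\bullet$ splits, \emph{as a complex}, into the field-independent part $\Omega^\bullet$ and a complementary reduced subcomplex $\widetilde{\llag}^\bullet$ --- the differential $\d_\llag$ preserves each, since the total derivative of a jet-function vanishing on the zero section again vanishes there --- and $\widetilde{\llag}^\bullet$ is closed under the external product. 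Under this splitting the cone becomes a contractible summand (the cone of an identity map) together with the reduced complex $\widetilde{\llag}^\bullet$, and defining the product by the external product on the reduced summand and by zero on the contractible summand yields an honest degree-zero chain map of cones. Running over all $k$, all $m$, and all components $v \in \{0,1\}^m$ then assembles the morphism $\iota_{U,U'}$, with equivariance and strict associativity of the resulting structure maps following exactly as in Chapter~3 of~\cite{CoGw}, just as in the strict case.
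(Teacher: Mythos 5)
Your first two paragraphs reproduce, in slightly more explicit form, exactly what the paper does: the paper also takes an element $\alpha \otimes \alpha'$, with $\alpha$ a degree-$m$ Lagrangian $p$-form on $U^m$ and $\alpha'$ a degree-$m'$ Lagrangian $p'$-form on $(U')^{m'}$, forms the external wedge product on $U^m \times (U')^{m'}$, and pushes forward along the inclusion of that component into $(U \sqcup U')^{m+m'}$, with equivariance and associativity handled as in the strict case. The genuinely new content is your third paragraph, and it addresses a point on which the paper's proof is silent: the paper only specifies the map on the $\llag \otimes \llag$ summands and never says what happens on the auxiliary de Rham summands of the cone in Definition~\ref{df: dg multiloc}. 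You are right that this is a real issue and not bookkeeping. If one extends the wedge product by zero on those summands, the result is \emph{not} a chain map: for $(b,a)$ with $b$ in the Lagrangian summand and $a$ in the de Rham summand, the cone differential contributes $\iota(a) + \d_\llag b$ to the Lagrangian summand, so the commutator of the candidate map with the differentials produces the cross terms $\iota(a) \wedge b' \pm b \wedge \iota(a')$, which are nonzero whenever $b'$ (resp.\ $b$) is genuinely field-dependent; and a degree count shows the only natural candidate for a compensating component on the cross summands (namely $a \otimes b' \mapsto a \wedge z^*b'$ into the de Rham summand, $z$ the zero section) cannot cancel them while the full, unreduced wedge is kept on $\llag \otimes \llag$.

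Your repair is correct on all the points that need checking: the zero-section splitting $C^\infty_{{\rm J}^k E^{[m]}} = \pi^*\cinfty_{X^m} \oplus I$, with $I$ the functions vanishing on the zero section, is canonical and $\cinfty_{X^m}$-linear, hence descends to $\llag^{[m]}$; it is preserved by $\d_\llag$ (each term of the total derivative of a function in $I$ either carries a jet-coordinate factor or is a base derivative of the identically vanishing restriction, and the $\alpha\,\d\mu$ term is likewise reduced); it is compatible with the colimit over $k$ and with supports; it identifies the cone with $\operatorname{Cone}(\id_{\Omega^\bullet_c}) \oplus \widetilde{\llag}^\bullet_c$; and $\widetilde{\llag}^\bullet_c$ is closed under external products. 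So your map (external product on reduced summands, zero on the contractible ones) is an honest degree-zero chain map, and your argument in fact fills a gap in the paper's proof. One correction to your own write-up: the sentence claiming that passing to top degree and integrating recovers the strict product of Lemma~\ref{lem: mloc has prefact maps} is true of the unreduced wedge, but your final map sends $b \otimes b'$ to $\widetilde{b} \boxtimes \widetilde{b'}$, whose associated functional differs from the product of the functionals of $b$ and $b'$ by terms involving the constant (field-independent) parts. This is harmless --- the field-independent forms lie entirely in the contractible summand, so the projection onto $\widetilde{\llag}^\bullet_c$ is a homotopy equivalence and the discrepancy is invisible on cohomology --- but it should be stated, since it is exactly the price paid to make the cone multiplicative.
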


Note that this map is a quasi-isomorphism when $U$ and $U'$ are (diffeomorphic to) balls,
by combining Theorem~\ref{olver thm} and Lemma~\ref{lem: mloc has prefact maps}.

\begin{proof}
It suffices to define the map on an arbitrary element of the form $\alpha \otimes \alpha'$, 
where $\alpha$ is a degree $m$ Lagrangian $p$-form on $U$ and $\alpha'$ is a degree $m'$ Lagrangian $p'$-form on $U'$.
By definition, $\alpha$ lives on the manifold $U^m$ and $\alpha'$ lives on the manifold $(U')^{m'}$.
On the product space $U^m \times (U')^{m'}$, we can take the wedge product $\alpha \wedge \alpha'$,
where we mean here the pullback of $\alpha$ from $U^m$ to the product space (and likewise for $\alpha'$).
Note that $U^m \times (U')^{m'}$ is a component of $(U \sqcup U')^{m+m'}$,
and so we can pushforward this wedge product to obtain a $m+m'$ Lagrangian $p+p'$-form on~$U \sqcup U'$.

More concisely, the argument from Lemma~\ref{lem: mloc has prefact maps} can be mimicked to construct the desired map for~$\mlloc_c$.
\end{proof}

We use this construction to define the structure map
\[
\iota_{\{U_i\};V}: \bigotimes_{i} \mlloc_c(U_i) \to \mlloc_c(V)
\]
for any finite tuple $U_1$, \dots, $U_k$ of pairwise disjoint opens contained in a common open $V$ as the composite
\[
\bigotimes_{i} \mlloc_c(U_i) \to \mlloc_c(\sqcup_i U_i) \to \mlloc_c(V)
\]
where the first map arises from the lemma and the second map is the canonical map because $\mlloc_c$ is a functor.
This construction is manifestly equivariant in relabelings of the tuple, 
and compositions of the structure maps are associative on the nose.
(See Chapter 3 of \cite{CoGw} for a full treatment of this kind of construction.)
In consequence we have shown the following.

\begin{lemma}
The construction $\mlloc_c$ determines a prefactorization algebra on~$X$ with values in~$\Ch$.
\end{lemma}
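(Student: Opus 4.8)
The plan is to verify the axioms of a prefactorization algebra (Chapter 3 of \cite{CoGw}) directly, since all the constituent data are already in hand: $\mlloc_c$ is a functor on $\Opens(X)$, hence a precosheaf, and the symmetric structure maps $\iota_{\{U_i\};V}$ have been defined above as composites of the iterated binary maps with the precosheaf extension. What remains is to check (i) compatibility with the unit and the functoriality, (ii) equivariance under relabeling the tuple, and (iii) associativity of composition. First I would record that the iterated binary maps assembling $\bigotimes_i \mlloc_c(U_i) \to \mlloc_c(\sqcup_i U_i)$ are independent of the bracketing used to iterate; this follows from the associativity and graded-commutativity of the wedge product together with the functoriality of pushforward along the inclusions of the components of $(\sqcup_i U_i)^{m}$.

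For the unit compatibility, the structure map attached to a single open $U_1 \subset V$ degenerates by construction to the precosheaf extension $\mlloc_c(U_1) \to \mlloc_c(V)$, so this holds immediately. For equivariance, the essential new point compared with the strict case of $\mloc_c$ is that we now work in $\Ch$, whose symmetric monoidal structure carries the Koszul sign. Swapping two opens $U_i, U_j$ permutes the corresponding tensor factors, and at the level of integrands it replaces a wedge product $\alpha_i \wedge \alpha_j$ of a Lagrangian $p_i$-form and a Lagrangian $p_j$-form by $\alpha_j \wedge \alpha_i = (-1)^{p_i p_j}\alpha_i \wedge \alpha_j$; this sign is precisely the Koszul sign of the braiding in $\Ch$, so the maps are equivariant.

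For associativity, I would combine two functorialities: the wedge product and pushforward defining the binary maps are associative and compatible with the inclusions $U^m \times (U')^{m'} \hookrightarrow (U \sqcup U')^{m+m'}$, and the precosheaf extension maps compose functorially. A diagram chase identical in spirit to the one establishing that $\mloc_c$ is a prefactorization algebra then shows that the two ways of composing structure maps over a nested family of disjoint opens agree on the nose. The hard part will not be any single verification but rather keeping the Koszul-sign bookkeeping consistent throughout, since the shifts and the mapping-cone structure of $\lloc_c^{[m]\bullet}$ mean degrees must be tracked carefully; once one checks that $\d_\lloc$ is a derivation for the wedge product (so that the binary maps are genuine cochain maps, and not merely maps of graded vector spaces), the remaining coherence is formal and may be cited from Chapter 3 of \cite{CoGw}.
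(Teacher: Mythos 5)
Your overall strategy coincides with the paper's: the paper likewise defines the structure maps $\iota_{\{U_i\};V}$ as composites of the iterated binary maps $\mlloc_c(U)\otimes\mlloc_c(U')\to\mlloc_c(U\sqcup U')$ (pull back to $U^m\times (U')^{m'}$, wedge, push forward into the component of $(U\sqcup U')^{m+m'}$) with the precosheaf extension into $V$, and then simply asserts that the construction is equivariant under relabelings and associative on the nose, citing Chapter 3 of \cite{CoGw}. Your unit check, your bracketing-independence remark, your associativity argument via functoriality of wedge/pushforward/extension, and your observation that one must verify the binary maps are cochain maps (i.e., that $\d_\lloc$ is a derivation for the wedge product) are all consistent with that outline.

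The gap is in the equivariance step, which is the one place where you supply genuine content beyond the paper's assertion, and the claim as written fails. First, the equation $\alpha_j\wedge\alpha_i=(-1)^{p_ip_j}\alpha_i\wedge\alpha_j$ is not meaningful in this setting: $\alpha_i\wedge\alpha_j$ lives on the component $U_i^{m_i}\times U_j^{m_j}$ of $(U_i\sqcup U_j)^{m_i+m_j}$, while $\alpha_j\wedge\alpha_i$ lives on the \emph{different} component $U_j^{m_j}\times U_i^{m_i}$; extended by zero, the two elements of $\lloc_c^{[m_i+m_j]\bullet}(U_i\sqcup U_j)$ have disjoint supports, so they cannot be proportional. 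Any comparison must go through pullback along the block-swap diffeomorphism of $(U_i\sqcup U_j)^{m_i+m_j}$. Second, even after that identification, the sign from swapping wedge factors is $(-1)^{p_ip_j}$, computed in de Rham degrees, whereas the Koszul sign of the braiding in $\Ch$ is computed in cohomological degrees; since $\lloc_c^{[m]\bullet}$ is concentrated in degrees $-mn$ to $0$, a degree-$m$ Lagrangian $p$-form has cohomological degree $p-mn$, so the Koszul sign is $(-1)^{(p_i-m_in)(p_j-m_jn)}$. These two signs agree modulo $2$ only when $n=\dim X$ is even; for odd $n$ they differ (take $m_i=m_j=1$, $p_i=p_j=0$: the Koszul sign is $(-1)^{n^2}=-1$ while the wedge sign is $+1$). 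So ``this sign is precisely the Koszul sign of the braiding'' is false in general, and it is exactly the shift you yourself flag as delicate that breaks it. In fairness, the paper's own proof says only that equivariance is ``manifest'' and does not confront either issue — a strict treatment would need symmetrization (or $S_m$-coinvariants with sign conventions accounting for the shift) built into the definition — but because your proposal rests its equivariance claim on an identity that is both ill-formed (different components) and sign-incorrect (for odd $n$), this step of your argument does not go through as stated.
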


It remains to show that $\mlloc_c$ is a homotopy cosheaf on $X$ with respect to the Weiss topology.
The first step is to observe that the degree $m$ local differential forms $\lloc^{[m]\bullet}$ are a homotopy cosheaf on~$X^m$,
by directly mimicking the proof of Lemma~\ref{lem: lloc is homotopy cosheaf}.
The second step is to notice that a Weiss cover $\{U_i\}_i$ of an open subset $U \subset X$ determines an open cover $\{U_i^m\}_i$ for each space~$X^m$.
As $\lloc_c^{[m]\bullet}$ satisfies homotopy codescent on $U^m$ with respect to this cover,
and as $\mlloc_c$ is the direct sum (i.e., colimit) of the $\lloc_c^{[m]\bullet}$,
we obtain the following.

\begin{prop}
The construction $\mlloc_c$ determines a homotopy factorization algebra on~$X$.
\end{prop}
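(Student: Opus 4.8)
The plan is to reduce the homotopy codescent condition for the Weiss topology on $X$ to the ordinary homotopy cosheaf condition on the powers $X^m$, precisely as the strict statement of Proposition~\ref{prop: mloc is fact alg} was reduced to a statement about covers of $X^m$. First, since $\mlloc_c = \bigoplus_{m \geq 0} \lloc_c^{[m]\bullet}$ and the \v{C}ech complex (which models the homotopy colimit over the \v{C}ech nerve of a cover) is assembled from direct sums and a coproduct totalization, it commutes with the external direct sum over $m$; moreover a direct sum of quasi-isomorphisms is again a quasi-isomorphism. Hence it suffices to prove that, for each fixed $m$, the functor $U \mapsto \lloc_c^{[m]\bullet}(U)$ satisfies homotopy codescent for every Weiss cover $\{U_i\}_i$ of every open $U \subseteq X$.

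Second, I would exploit that the functor $\lloc_c^{[m]\bullet}$ only sees an open $V$ through its power $V^m$, together with the compatibility of powering with intersections, $(U_{i_0} \cap \cdots \cap U_{i_k})^m = U_{i_0}^m \cap \cdots \cap U_{i_k}^m$ inside $X^m$. Consequently the \v{C}ech complex of the Weiss cover $\{U_i\}_i$ for the functor $\lloc_c^{[m]\bullet}$ is literally identical to the \v{C}ech complex of the family $\{U_i^m\}_i$ for $\lloc_c^{[m]\bullet}$ regarded as a precosheaf on $X^m$. By the Weiss property --- for any $(x_1,\dots,x_m) \in U^m$ the finite set $\{x_1,\dots,x_m\}$ lies in some $U_j$, so $(x_1,\dots,x_m)\in U_j^m$ --- the family $\{U_i^m\}_i$ is an ordinary open cover of $U^m$, the same device used in Proposition~\ref{prop: mloc is fact alg}. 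Thus the Weiss codescent statement on $X$ becomes the ordinary codescent statement for the cover $\{U_i^m\}_i$ of $U^m$.

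Third, I would establish that $\lloc_c^{[m]\bullet}$ is a homotopy cosheaf on $X^m$ (for ordinary open covers) by mimicking the proof of Proposition~\ref{lem: lloc is homotopy cosheaf} verbatim on the product manifold. One forms the double complex obtained as the \v{C}ech complex (in the cover direction) of the cochain complex $(\lloc_c^{[m]\bullet}, \d_\llag)$; each fixed layer $\lloc_c^{[m]p}$ is the compactly supported sections of a vector bundle on $X^m$, so a partition of unity subordinate to $\{U_i^m\}_i$ yields a contracting homotopy in the \v{C}ech direction. Running the spectral sequence of the double complex with the \v{C}ech differential taken first then collapses onto a single column, namely the naive global sections $\lloc_c^{[m]\bullet}(U^m)$ with their differential, which is exactly homotopy codescent. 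Combining the three steps gives the proposition.

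The main obstacle I expect is the bookkeeping in the second step: one must check the identification of the two \v{C}ech complexes on the nose --- matching indexing nerves, differentials, and totalization conventions --- since this is what legitimizes transporting the $X^m$ result back to a Weiss statement on $X$. The remaining analytic input, namely the partition-of-unity contraction and the spectral-sequence collapse, is routine once the earlier local results (Theorem~\ref{olver thm} and Proposition~\ref{lem: lloc is homotopy cosheaf}) are in hand.
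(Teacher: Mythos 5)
Your proposal is correct and follows essentially the same route as the paper: the paper likewise first establishes that each $\lloc_c^{[m]\bullet}$ is a homotopy cosheaf on $X^m$ by mimicking the proof of Proposition~\ref{lem: lloc is homotopy cosheaf}, then converts a Weiss cover $\{U_i\}_i$ of $U$ into the ordinary open cover $\{U_i^m\}_i$ of $U^m$, and concludes by taking the direct sum over $m$. Your explicit check that $(U_{i_0}\cap\cdots\cap U_{i_k})^m = U_{i_0}^m\cap\cdots\cap U_{i_k}^m$, which identifies the Weiss \v{C}ech complex on $X$ with the ordinary \v{C}ech complex on $X^m$, is a bookkeeping point the paper leaves implicit.
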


\begin{rem}
This proposition is an analog of Theorem 5.2.1 of Chapter 6 of \cite{CoGw},
which uses a different model of observables but is similar in spirit.
\end{rem}

We posit that this dg version $\mlloc_c$ of multilocal functionals is typically better to use than $\mloc_c$,
as we have shown that $\mlloc_c$ forms a factorization algebra in the homotopically correct sense.
It is also convenient for many physical purposes because it naturally contains Lagrangian $p$-forms,
and hence can efficiently encode currents and their generalizations.

\subsection{Fiberwise analyticity}
\label{analyticity for mloc}

We now remark upon a variant of the multilocal functionals and multilocal different forms:
restrict to those that are analytic functions along the fibers of jet bundles.
In practice this restriction does not affect most physical questions, 
as most observables of interest to physicists satisfy this condition,
and it is essential for the renormalization procedure we use for quantization.

We have already defined carefully fiberwise analytic local functionals and local differential forms in sections~\ref{analyticity of loc} and~\ref{fbrwsanalytic dg v}.
The multilocal version admits an immediate analog.

\subsubsection{The coproduct}
\label{fbrws cprdct ml}

The coproduct introduced in section \ref{fbrws cprdct} extends to the multilocal case. 
Consider the vector bundle 
\[
J^{k_1} E \boxplus\dots\boxplus J^{k_m} E \to X^m.
\]
To make the notation more compact, let ${\vr k}\equiv (k_1,\dots,k_m)$ and let $J^{\vr k} E^{\boxplus m} \equiv J^{k_1} E \boxplus\dots\boxplus J^{k_m} E$.
Again we have the fiberwise addition
$${\rm add}: (J^{\vr k} E^{\boxplus m}\oplus J^{\vr k} E^{\boxplus m}) \to J^{\vr k} E^{\boxplus m}\,,$$
where the sum $\oplus$ here means the Whitney sum over~$X^m$.
Define an operator 
$${\vr \Delta}_{\vr k}:C^\infty(J^{\vr k} E^{\boxplus m}) \rightarrow C^\infty(J^{\vr k} E^{\boxplus m}\oplus J^{\vr k} E^{\boxplus m})$$ 
by ${\vr \Delta}_{\vr k} \alpha = {\rm add}^* \alpha.$
This map manifestly induces an operator ${\vr \Delta}_{\vr k}$ on $\lag_{(k_1)}\otimes \dots\otimes  \lag_{(k_m)}$,
and so we also obtain an operator $\vr \Delta$ on~$\lag^{[m]}_c$,
analogously to the local case.

For a fiberwise analytic functional on $m$-tuples of fields of the form
\[
F_\alpha(\phi_1, \ldots, \phi_m) = \int_{X^m} \alpha(\jet_{k_1}\phi_1(x_1), \ldots, \jet_{k_m}\phi_m(x_m))\,,
\]
where $\alpha\in \lag^{[m]}_c$ is fiberwise analytic, we can write
\begin{align*}
F_\alpha(\phi_1+\psi_1, \ldots, \phi_m+\psi_m)
&= \int_{X^m} \alpha(\jet_{k_1} \phi_1+\jet_{k_1} \psi_1,\dots, \jet_{k_m} \phi_m+\jet_{k_m} \psi_m) \\
&=\sum_{(\alpha)}\int_{X^m} \alpha_{(1)}(\jet_{k_1}\psi_1(x_1),\dots, \jet_{k_m}\psi_m(x_m))\alpha_{(2)}(\jet_{k_1} \phi_1(x_1),\dots, \jet_{k_m} \phi_m(x_m)))
\end{align*}
using the coproduct.

\section{Renormalization}\label{sec:ren}

In this section we construct the renormalized time-ordering map $\TT: \mloc_c \to \overline{\mloc}_c$.
This construction plays a crucial role in constructing the quantum observables in Section~\ref{sec obsq}.

\begin{NB}
{\it In this section we work with fiberwise analytic observables}. 
We drop the superscript $f \omega$ in this section; it is left implicit.
\end{NB}

Our approach to $\TT$ follows the standard pattern in the recent literature on perturbative algebraic quantum field theory (with~\cite{FR3} as an explicit representative).
As a first step, observe that any multilocal functional can be written as a sum of $n$-fold multilocal functionals, so it suffices to specify  \textit{$n$-fold time-ordered products} of multi-local functionals
\[
\TT_n: \loc_c^{[n]} \to \overline{\mloc}_c.
\]
The direct sum of these operations provides $\TT$.
(Typically $\TT_n$ is defined on the $n$-fold product of local functions, 
so in this paper we must explain how to extend to the thickened multilocal functionals we just introduced.)
When constructing these maps,
one constructs, in fact, a map 
\[
\Tb_n: \lloc_c^{[n]} \to \overline{\mlloc}_c
\]
at the level of multilocal differential forms.
This feature of the construction is manifest from a close reading of the literature (we mainly follow \cite{BDF} and \cite{H}),
but it is not examined so explicitly in it,
so we foreground that aspect here.

Before we construct the time-ordered product, 
we need to recall some  ingredients from the quantization of free theories.

\subsection{The pAQFT approach to free quantum theories}
\label{sec:freequantumtheory}

Following \cite{BDF, Book}, we assume that equations of motion are of the form
\[
P\ph=0
\] 
where $P$ is a Green hyperbolic operator~\cite{GreenBear}.
That means, that $P:\Ecal\rightarrow \Ecal^!$ 
is a hyperbolic differential operator from the fields $\Ecal$  to ``antifields''\footnote{More on $\Ecal^!$ and interpretation as antifields can be found in Section~\ref{sec: symplectic}}
$\Ecal^! = \Gamma(X, E^* \otimes \Dens)$ and it has unique advanced and retarded Green functions $\Delta^{\rm A/R}$. In section \ref{sec:classobs} we will explain how to obtain such and operator in gauge theories using the BV formalism.

We define the {\em Pauli-Jordan function} by
\[
\Delta=\Delta^{\rm R}-\Delta^{\rm A}\,.
\]
It determines the {\em Peierls bracket} on multilocal functions~by
\begin{equation}\label{eq:Peierls}
\Pei{F}{G} =\left<\delta F,\Delta \delta G\right>
\end{equation}
where $\delta F$ denotes the differential of $F$, i.e., a 1-form on $\Ecal$ (see Section~\ref{sec: dynamics} for more discussion).
This functional $\Pei{F}{G}$ may  not be multilocal, however,
so we will impose conditions to obtain a well-behaved Poisson algebra.
The key point is to control the singularity structure of functional derivatives,
so we impose conditions on the wavefront sets of the multilocal functionals.

Consider the following subset of~$T^* X^n$:
\[
\Xi_n^c \doteq \left\{ (x_1,\dots,x_n;k_1,\dots,k_n)\, \Big|\, (k_1,\dots,k_n)\in (\overline{V}_+^n \cup \overline{V}_-^n)_{(x_1,\dots,x_n)}
\right\}\,,
\]
where $(\overline{V}_{\pm})_x$ denotes the closed future (+) or past (-) lightcone, understood as a conic subset of each tangent space~$T^*_xX$.
This region is where singularities of, say, the various Green functions live
(and which thus are sources of divergences in naive computation of Feynman integrals).
Hence we pick out a subclass of functionals to avoid issues along this subset.

\begin{df}
\label{def: microcausal}
An $n$-fold distributional multilocal functional $$F\in \overline{\loc}_c^{[n]}(X) \subset \Ci(\Ecal(X)^n,\RR)$$ is {\em microcausal} if 
\be\label{mlsc}
\WF(F)\subset \Xi_n \doteq T^* X^n \setminus \Xi_n^c
\ee
where $\WF$ denotes the wavefront set.

Let $\overline{\loc}^{[n]}_{\mu c}(X)$ denote the subspace of microcausal $n$-fold multilocal functionals.
\end{df}

Note that $\loc^{[n]}_c(X) \subset \overline{\loc}^{[n]}_{\mu c}(X)$,
as multilocal functionals have smooth integrands.

Microcausality is a microlocal condition, and hence it plays nicely with support.
One sees immediately, for instance,
that there is a precosheaf  $\overline{\loc}^{[n]}_{\mu c}$ by taking this subspace on each open in~$X^n$.
With a little care, one can show it is a cosheaf.
Hence, by repeating the constructions of Section~\ref{sec:mlocal},
one can show the following.

\begin{lemma}
The microcausal functionals $\overline{\mloc}_{\mu c} \subset \overline{\mloc}_{c}$ are a factorization algebra on~$X$.
\end{lemma}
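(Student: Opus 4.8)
The plan is to mirror the proof of Proposition~\ref{prop: mloc is fact alg}, treating the wavefront-set condition as an extra constraint that is stable under every operation used there. Since $\overline{\mloc}_{\mu c}$ is the direct sum over $m$ of the microcausal $m$-fold functionals, and since all structure maps respect this direct-sum decomposition into symmetric degrees, it suffices to work degree by degree and then reduce a Weiss cover on $X$ to an ordinary cover on each $X^m$, exactly as before. Here I would pass through the polarized version $\overline{\loc}^{[m]}_{\mu c}$ on $X^m$, using the analogue of Lemma~\ref{lem:multilocal} to identify the symmetric-degree summand; this identification restricts to the microcausal class because $\Xi_m^c$ is invariant under the permutation action (it only records whether \emph{all} covectors lie in $\overline{V}_+$ or \emph{all} in $\overline{V}_-$).

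First I would verify that $\overline{\loc}^{[m]}_{\mu c}$ is a cosheaf on $X^m$. The underlying distributional functionals $\overline{\loc}^{[m]}_c$ form a cosheaf by the partition-of-unity argument of Lemma~\ref{lem: loc is cosheaf}, carried out at the level of distributional kernels (multiplication of a compactly supported kernel by a smooth bump function is well defined). The one new point is that this argument restricts to the microcausal subspace: if $\{\psi_i\}$ is a partition of unity subordinate to a cover and $F$ satisfies $\WF(F)\subset\Xi_m$, then each piece $\psi_i F$ again satisfies $\WF(\psi_i F)\subset\WF(F)\subset\Xi_m$, since multiplication by a smooth function never enlarges the wavefront set. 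Thus the surjectivity and kernel computations of Lemma~\ref{lem: loc is cosheaf} go through verbatim inside $\overline{\loc}^{[m]}_{\mu c}$.

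Next I would supply the prefactorization structure by copying Lemma~\ref{lem: mloc has prefact maps}: for disjoint opens the binary product is the external (box) tensor product of distributional kernels. The essential check is that this product lands in the microcausal subspace, which follows from H\"ormander's estimate
\[
\WF(u\boxtimes v)\subset \big(\WF(u)\times\WF(v)\big)\cup\big((\supp u\times\{0\})\times\WF(v)\big)\cup\big(\WF(u)\times(\supp v\times\{0\})\big)
\]
together with the shape of $\Xi^c$: a tuple lies in $\Xi_{m+m'}^c$ precisely when all of its covectors lie in $\overline{V}_+$, or all lie in $\overline{V}_-$. In each of the three terms at least one block of covectors is forced by microcausality — of $u$ in the first and third, of $v$ in the second, via Definition~\ref{def: microcausal} — to be neither entirely future- nor entirely past-pointing, while the zero covectors contributed by the other block lie in both closed cones and hence cannot drag the whole tuple into $\Xi_{m+m'}^c$. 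Thus the product is again microcausal. The precosheaf (extension-by-zero) maps preserve microcausality trivially, as they leave the distributional kernel unchanged, and assembling the binary maps as before yields the prefactorization structure on $\overline{\mloc}_{\mu c}$.

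Finally, the Weiss codescent axiom follows as in Proposition~\ref{prop: mloc is fact alg}: a Weiss cover $\{U_i\}$ of $U\subset X$ induces the ordinary cover $\{U_i^m\}$ of $U^m$, and the cosheaf property of $\overline{\loc}^{[m]}_{\mu c}$ just established gives codescent for each $m$; taking the direct sum recovers $\overline{\mloc}_{\mu c}(U)$ as the cokernel of the Weiss structure maps. The main obstacle is the single genuinely new ingredient — confirming that the factorization product remains inside the microcausal class — which is exactly the cone bookkeeping via H\"ormander's tensor-product theorem sketched above; everything else is a direct transcription of the earlier strict-factorization-algebra arguments, restricted to the wavefront-set-constrained subspace.
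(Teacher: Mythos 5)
Your proposal is correct and takes essentially the same route as the paper, whose own ``proof'' is only a sketch (microcausality is a microlocal condition, the precosheaf structure is immediate, ``with a little care'' it is a cosheaf, and then one repeats the constructions of Section~\ref{sec:mlocal}). The wavefront-set bookkeeping you supply --- stability of $\WF$ under multiplication by smooth cutoffs for the partition-of-unity/cosheaf step, and H\"ormander's tensor-product estimate together with the cone analysis (including the role of the zero covectors) for the prefactorization structure maps --- is exactly the ``little care'' the paper leaves implicit, and you carry it out correctly.
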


The microcausal functionals are closed under fieldwise multiplication (meaning $(FG)(\phi) = F(\phi)G(\phi)$) and the Peierls bracket \cite{BFR}.
It is natural to try to quantize by defining a $\star$-product as a deformation of the Poisson algebra of microcausal functionals.
Following the literature \cite{H,FR3}, we define
\begin{equation*}
F\star G\doteq m\circ \exp({i\hbar D_W})(F\otimes G),
\end{equation*}
where $m$ is the fieldswise multiplication operator and  $D_W$ is the differential operator (on functionals) defined by 
\begin{equation*}
D_W\doteq \frac{1}{2} \sum_{\al, \beta} \left<{W}^{\al\beta},\frac{\delta^l}{\delta\ph^\al} \otimes \frac{\delta^r}{\delta\ph^\beta}\right>\,.
\end{equation*}
Here $W$ denotes the \textit{2-point function of a Hadamard state}: 
we require that $W$ is positive definite, 
it satisfies \textit{$W=\frac{i}{2}\Delta+H$} with $H$ a symmetric bisolution for~$P$, and its wavefront set has the form\
\[
\WF(W)=\left\{(x,k;x',-k')\in\dot{T}^*X^2
\, \big| \,(x,k)\sim(x',k'), k\in (\overline{V}_+)_x\right\}\,,
\]
where $\dot{T}^*X^2$ means the complement of the zero section in $T^*X^2$ and  $(x,k)\sim(x',k')$ means that there exists a null geodesic connecting $x$ and $x'$ and vectors $k$ and $k'$ are cotangent to it. In the definition of $D_W$, we distinguish between the left and right derivative since $\Ecal$ could be graded. 



\subsection{Time-ordered products}
\label{sec:loc:ren}
From the early work of Dyson, the perturbative construction of interacting theories involves the use of a time-ordered product of operators,
which appears in constructing the S-matrix.
We now review how this idea is implemented rigorously in pAQFT.
In the literature one finds a set of axioms of the renormalized $n$-fold time-ordered products,
but the construction (guaranteeing the existence of such products) is at the level of distributions on~$X^n$.
We state and motivate the usual axioms,
explain parallel axioms at the level of multilocal differential forms,
and then explain how to work at the level of distributions. This review is based mainly on \cite{BDF,HW,H}.

\subsubsection{Time-ordered products of local functionals}

The existence of renormalized $n$-fold time-ordered products has an inductive proof:
given the existence of $k$-fold time-ordered products $\TT_k$ for $k<n$, 
one shows the existence of $\TT_n$.
At each stage we impose certain useful and natural requirements that we now describe and motivate.
These conditions guarantee that the $\TT_n$s assemble into the kind of time-ordered product $\T$ expected in QFT.

To get the induction started, we fix the initial two steps with the following axiom.

\begin{enumerate}[{\bf (T 1)}] 
\item{\bf Starting element}:  $\TT_0=1$ and $\TT_1=\id$.
\end{enumerate} 	
(The condition on $\TT_1$ really means that it is the obvious inclusion $\loc_c \hookrightarrow \overline{\loc}_c$.)

As the time-ordered product should be related to the $\star$-product,
we will restrict their image to land in microcausal functionals. 
(Here we use the formulation found in \cite{H}, but the reader is encouraged to look at older formulations, e.g.,~\cite{BF0,HW01}.)

\begin{enumerate}[{\bf (T 1)}] 
\addtocounter{enumi}{1}
\item{\bf Microlocal spectrum condition}: We have $\TT_n: \loc_c^{[n]} \to \overline{\mloc}_{\mu c}$ for all~$n$.
\end{enumerate}

We now formulate a causality relation,
which is an important requirement on how the time-ordered product~$\T$ should relate to the non-commutative star product~$\star$ of the free quantum theory.

\begin{df}
Given subsets $U,V\subset X$,
then $U$ is {\em not later than} $V$, denoted
$U \preceq V$,
if and only if $U$ does not intersect the future of $V$, i.e. $U\cap J^+(V)=\emptyset$. 
\end{df}

We define $\preceq$ on local functionals by setting $F\preceq G$ if and only if $\supp(F)\preceq \supp (G)$.

Then the desired {\em casuality relation} for time ordered products is
\be\label{eq:causality}
F\T G=\left\{ \begin{array}{ll}
F\star G&\textrm{ if }G\preceq F\\
G\star F&\textrm{ if }F\preceq G
\end{array} \right.\,.
\ee
For the goals of this paper, this relation is crucial for reconstructing the net of observables $\fA$ from the factorization algebra, which we will obtain using~$\T$. 

More generally, one can extend the relation $\preceq$ to~$\loc_c^{[n]}$. 

\begin{df}
For $F\in \loc_c^{[m]}$ and $G\in \loc_c^{[k]}$,
the functional $F$ is {\em not later than G}, written $F\preceq G$, if and only if for all $i=1,\dots,m$, $j=1,\dots, k$, 
we have  $$\pi_i(\supp(F))\preceq \pi_j(\supp(G)),$$ 
where $\pi_i$, $\pi_j$ denote the projections from $X^m$ and $X^k$, respectively, to the $i$th and $j$th factors of~$X$.
\end{df} 

We denote the space of such causally-ordered pairs of multilocal functionals by
\[
\loc_c^{[k]}\preceq \loc_c^{[n-k]}\equiv \left\{(F,G)\in \loc_c^{[k]}\times \loc_c^{[n-k]} |\, F\preceq G\right\}\,.
\]
Relation~\eqref{eq:causality} suggests we impose the following axiom.

\begin{enumerate}[{\bf (T 1)}]
\setcounter{enumi}{2}
\item{\bf Causal factorization property}: There is a commuting diagram
\[
\begin{tikzcd}[column sep=large] \loc_c^{[k]}\preceq \loc_c^{[n-k]}\arrow[r,"\Tcal_{n-k}\times\Tcal_{k}"]\arrow[d,"\cdot"] &\overline{\mloc}_{\mu c} \times \overline{\mloc}_{\mu c}\arrow[d,"\star"]\\ \loc_c^{[n]}\arrow[r,"\Tcal_n"] & \overline{\mloc}_{\mu c}
\end{tikzcd}
\]
for all $k < n$.
In formulas,
\[
\TT_n(GF)=\TT_{n-k}(G)\star \TT_{k}(F)\,.
\]
given $(F,G)\in \loc_c^{[k]}\preceq \loc_c^{[n-k]}$.
\end{enumerate}	

This causal factorization property fixes the time-ordered product $\TT_n$ away from the small diagonal, inductively.

The permutation action of the symmetric group $S_n$ on $X^n$ induces an $S_n$-action $\rho$ on $\loc_c^{[n]}$. 
To ensure the graded commutativity of~$\T$, we impose the following condition.

\begin{enumerate}[{\bf (T 1)}] 
\setcounter{enumi}{3}
\item {\bf Symmetry}: For all $n\in\NN$ and $\sigma\in S_n$,
we have a commuting diagram
\[
\begin{tikzcd}[column sep=large] \loc_c^{[n]}\arrow[rr,"\rho(\sigma)^*"]\arrow[dr,"\TT_n"] &&\loc_c^{[n]} \arrow[dl,"\TT_n"]\\ 
& \overline{\mloc}_{\mu c} &
\end{tikzcd}
\]
In formulas, we require  $\rho(\sigma)^*\Tcal_n=\Tcal_n$.
\end{enumerate}	

Note that this axiom explains how to define the time-ordered product $\TT_n$ on $\loc_c^{\langle n \rangle}$.
Recall from Lemma~\ref{lem:multilocal} that we have a canonical isomorphism
\[
\loc_c^{[n]}/S_n \xto{\cong} \loc_c^{\langle n \rangle},
\]
so if we apply the inverse and then use the symmetrization map
\[
\loc_c^{[n]}/S_n \to \loc_c^{[n]},
\]
we can apply $\TT_n$ to $\loc_c^{\langle n \rangle}$.
We will denote this map by $\TT_n$ as well,
as it is unambiguous.
Thus we can introduce the time-ordering map.

\begin{df}
\label{df: T map}
The \emph{time-ordering map}
\[
\TT: \mloc_c \to \overline{\mloc}_{\mu c}
\]
is the direct sum of the maps $\TT_n$ over all~$n$.
\end{df}

Next we discuss how the maps $\TT_n$ should depend on fields. The space of fields $\Ecal$ is a vector space, and hence it acts on itself by translation.
For any function $F$ on $\Ecal$ and a field $\psi \in \Ecal$, 
let $F^\psi$ denote the functional
\[
F^\psi(\phi) = F(\phi + \psi),
\]
i.e., the pullback of $F$ along translation by $\psi$. Similarly, for $F\in \loc^{[n]}_c$, we define 
\[
F^\psi (\ph,\dots,\ph)\doteq F(\ph+\psi,\dots, \ph+\psi).
\]
We then require the following.

\begin{enumerate}[{\bf (T 1)}] 
\setcounter{enumi}{4}
\item {\bf Field independence}: There is a commuting diagram
\[
\begin{tikzcd}[column sep=large] \loc_c^{[n]}\arrow[r,"(\bullet )^{\psi}"]\arrow[d,"\Tcal_n"] &\loc^{[n]}_c \arrow[d,"\Tcal_n"]\\ 
\overline{\mloc}_{\mu c}\arrow[r,"(\bullet )^{\psi}"] & \overline{\mloc}_{\mu c} 
\end{tikzcd}
\]
for every~$n$.
In formulas, 
\begin{equation}
\label{FiedIndep}
\TT_{n}(F)^\psi = \TT_{n}(F^\psi)\,
\end{equation}
for any $F\in \loc^{[n]}_c$ and any field $\psi\in \Ecal$.
\end{enumerate}	

This axiom has consequences for derivatives. 
Let $D_{\phi_i}$ denote the functional derivative with respect to the variable $\phi_i$, $i=1,\dots,n$ of $F\in \loc_c^{[n]}$. 
The field independence axiom implies that
\[
\delta_\ph \left(\TT_{n
}(F) \right)=\sum_{i=1}^n\TT_{n}(\delta_{\phi_i}F).
\]
One can iteratively apply this fact to describe higher-order derivatives and hence a Taylor expansion. 

The next axiom guarantees that at any given order in $\hbar$, every $\Tcal_n$ depends only on finitely many terms in the Taylor series expansion of its argument. 
Let $F^{[N]}$ denote the truncation of the Taylor series of $F$ at order $N$ for each of the variables.

\begin{enumerate}[{\bf (T 1)}] 
\setcounter{enumi}{5}
\item {\bf $\ph$-Locality}: For all natural numbers $n$ and $N$, we have $$\TT_{n}(F)=\TT_{n}(F^{[N]})+\Ocal(\hbar^{N+1})$$
for every $F \in \loc_c^{[n]}$.
\label{PhLoc}
\end{enumerate}

We also want $\TT$ to be natural (i.e., to intertwine with inclusion of open subsets) 
so we impose the following.

\begin{enumerate}[{\bf (T 1)}] 
\setcounter{enumi}{6}
\item {\bf Covariance}: $\TT_n$ is a natural transformation from the functor $\loc_c^{[n]}$ to $\overline{\mloc}_{\mu c}$ for all~$n$.
\end{enumerate}	

\begin{rem}
Note that in contrast to \cite{BFV}, here $\loc_c^{[n]}$ and $\overline{\mloc}_{\mu c}$ are functors from the category of \textit{all opens in X} with the usual morphisms, rather than the category of causally convex subsets with the causality-preserving embeddings. 
(This category is typically denoted by $\loc$, which we realize might lead to collision of notation.)
\end{rem}

A final axiom is typically included to guarantee that the S-matrix be unitary,
which is a natural physical requirement.
In the constructive existence proofs,
it is found that one can guarantee such unitarity, 
but it is a feature orthogonal to the main focus of this paper, 
so we will not include this axiom here.

These axioms capture what we want from the $\TT_n$, 
but we need to show such time-ordered products {\em exist}. 
To do this, we posit a special form for the $\TT_n$ in terms of maps on Lagrangian densities, 
as is usual in the literature \cite{EG,Due19,HW01,HW02}. 
Following Hollands \cite{H}, it is natural to extend these maps to local differential forms.

\subsubsection{Time-ordered products of local differential forms}

We aim to construct a degree-preserving map
\[
\mathbb{T}_n: \lloc_c^{[n]}\rightarrow  \overline{\mlloc}_c\,
\]
for every natural number~$n$.
These maps satisfy axioms analogous to {\bf (T1) -- (T6)}, which can be found in \cite{H}, but for completeness of our review we spell them out here. 

\begin{enumerate}[{\bf (TT 1)}] 
\item{\bf Starting element}:  $\Tb_0=1$ and $\Tb_1=\id$.
\end{enumerate}

The analog of microcausal functionals is a bit subtle.
It is inspired by the concrete construction,
which uses Feynman diagrammatic methods.
Thus, following \cite{BFK96,H} we will formulate the microlocal spectrum condition using diagrams, as follows.  
Let $\Gamma$ be a finite graph embedded in $X$ where the vertices $V(\Gamma)$ are points $x_1,\dots, x_n\in X$ and where the edges $e$ are oriented null-geodesic curves in~$X$. 
Each such null geodesic is equipped with a coparallel, cotangent covectorfield $p_e$. 
If $e$ is an edge in $\Gamma$ connecting the points $x_i$ and $x_j$ with $i < j$, then its source is $s(e) = i$ and its target is $t(e)= j$. 
We require that $p_e$ is future (past, respectively) directed if $x_{s(e)} \notin J^+(x_{t(e)})$ (or not in $J^+(x_{t(e)})$, respectively). 
We define a subset $\widetilde{\Xi}_n$ of $T^*X^n$ by
\be\label{def: CT}
\widetilde{\Xi}_n \doteq 
\left\{
(x_1, k_1;\dots ; x_n, k_n) \in \dot{T}^*X^n \, \Big|\, 
\substack{
\exists\, \Gamma\ \textrm{with}\ V(\Gamma)=\{x_1,\dots, x_n\}\ \textrm{such that} \\ 
 \forall i\,\ k_i = \sum_{e:s(e)=i} p_e  - \sum_{e:t(e)=i} p_e
}
\right\}.
\ee
(Hollands uses the notation $C_T$ for this subspace.)
This subspace arises naturally by looking at the integrands of Feynman diagrams, where the edges have the Hadamard state associated to each edge.
The axiom ends up being guaranteed by construction.
(An alternative characterization would be nice.)

We now formulate objects of key interest.

\begin{df}
A distributional $n$-fold multilocal differential form $\alpha \in \overline{\lloc}^{[n]}_c$ is {\em microcausal} if $\mathrm{WF}(\alpha)\subset \widetilde{\Xi}_n$.
\end{df}

The star product on microcausal differential forms is defined analogously to the one on microcausal functionals. 
For $\alpha,\beta\in \overline{\mlloc}_{\mu c}$, we define
\begin{equation*}
\alpha \star \beta\doteq m\circ \exp({i\hbar D_W})(\alpha\otimes \beta),
\end{equation*}
where $m$ is the pointwise --- in field space $\Ecal$ --- product of Lagrangian forms. 

\begin{enumerate}[{\bf (TT 1)}] 
\addtocounter{enumi}{1}
\item{\bf Microlocal spectrum condition}: We require $\Tb_n: \lloc_c^{[n]} \to \overline{\mlloc}_{\mu c}$ for all~$n$.
\end{enumerate}

\begin{enumerate}[{\bf (TT 1)}] 
\addtocounter{enumi}{2}
\item{\bf Causal factorization property}: There is a commuting diagram
\[
\begin{tikzcd}[column sep=large] \lloc_c^{[k]}\preceq \lloc_c^{[n-k]}\arrow[r,"\Tb_{n-k}\times\Tb_{k}"]\arrow[d,"\cdot"] &\overline{\mlloc}_{\mu c} \times \overline{\mlloc}_{\mu c}\arrow[d,"\star"]\\ \lloc^{[n]}\arrow[r,"\Tb_n"] & \overline{\mlloc}_{\mu c}
\end{tikzcd}
\]
where 
\[
\lloc_c^{[k]}\preceq \lloc_c^{[n-k]}\equiv \left\{(\alpha,\beta)\in \lloc_c^{[k]}\times \lloc_c^{[n-k]} |\, \alpha \preceq \beta\right\}\,,
\]
and $\alpha \preceq \beta$ means that $\supp \alpha \preceq \supp \beta$.

\item {\bf Symmetry}: 
For all $n\in\NN$, the map $\Tb_n$ intertwines the $S_n$-actions on the (co)domains arising from the permutation action on~$X^n$.
(In other words, $\Tb_n$ is $S_n$-equivariant.)

\item {\bf Field independence}: There is a commuting diagram
\[
\begin{tikzcd}[column sep=large] \lloc_c^{[n]}\arrow[r,"(\bullet )^{\psi}"]\arrow[d,"\Tb_n"] &\lloc^{[n]}_c \arrow[d,"\Tb_n"]\\ 
\overline{\mlloc}_{\mu c}\arrow[r,"(\bullet )^{\psi}"] & \overline{\mlloc}_{\mu c}
\end{tikzcd}
\]
for all $n$.

\item {\bf $\ph$-Locality}: For all $n$ and $N$, $\Tb_{n}(F)=\Tb_{n}(F^{[N]})+\Ocal(\hbar^{N+1})$ for all~$F$.

\item {\bf Covariance} $\Tb_n$ is a natural transformation from the functor $\lloc_c^{[n]}$ to $\overline{\mlloc}_{\mu c}$ for all~$n$.
\end{enumerate} 

In order to use the maps $\Tb_n$ to construct the maps $\TT_n$, we need to make sure that $\Tb_n$s commute with derivatives. 
This is formulated using the condition below, first proposed in~\cite{DF05,H}.
(The relation with Lie derivative of vector fields is via Cartan's formula.)

\begin{enumerate}[{\bf (TT 1)}] 
\addtocounter{enumi}{8}
\item{\bf Action Ward Identity}
For every $n$, the map $\Tb_n$ is a cochain map.
In formulas,
\[
\d_\mlloc(\Tb_n(\alpha))= \Tb_n(\d_\mlloc \alpha)\,
\]
for every $\alpha \in \lloc_c^{[n]}$.
\end{enumerate} 

Given a family of maps $\Tb_n$ satisfying \textbf{(TT 1) -- (TT 9)}, we obtain maps $\TT_n$ satisfying \textbf{(T 1) -- (T 7)} by the ansatz 
\begin{equation}
\label{ansatz}
{\Tcal}_{n}(F_{\alpha})\doteq \int_{X^n} \Tb_n(\alpha)\,
\end{equation}
for the multilocal functional
\[
F_\alpha(\phi_1, \ldots, \phi_m)= \int_{X^n} \alpha(\jet_{k_1}\phi_1(x_1), \ldots, \jet_{k_m}\phi_m(x_m))
\]
associated to an $n$-fold Lagrangian density~$\alpha$.

\subsubsection{The existence theorem}

For the construction of the maps $\Tb_n$, 
we use the by-now standard arguments of \cite{H,BDF} (see also \cite{Due19} for review). 
Firstly, properties \textbf{(TT 5)} and \textbf{(TT 6)} let us obtain a Taylor series expansion of $\Tb_n(\alpha)$, where at each order in $\hbar$ only finitely many term in this expansion contribute. 
We have that
\be\label{eq: TT from t}
\Tb_n(\alpha)(\jet_\infty \phi_1,\dots,\jet_\infty \phi_n)=\sum_{(\al)}\Tb_n(\alpha_{(1)})(0)
\alpha_{(2)}(\jet_\infty \phi_1,\dots,\jet_\infty \phi_n)\,,
\ee
using the coproduct of Section~\ref{fbrws cprdct ml} to expand $\alpha$ into its Taylor series around the zero section. 
This literature calls this the \emph{Wick expansion}, 
and it relates the problem of constructing the time-ordered products to constructing a  de Rham current (or distributional de Rham form)
\[
\Tb_n(\alpha)(0)\equiv t_n(\alpha)
\]
for every fiberwise analytic Lagrangians~$\al$. 
In other words, we want to have a degree-preserving map
\[
t_n: \lloc_c^{[n]}(X^n) \to \overline{\Omega}^*_{X^n,c}(X^n)
\]
for every~$n$.
These maps should satisfy axioms compatible with the axioms for the maps~$\Tb_n$.

Note that axioms ({\bf TT~1}) and ({\bf TT~3}) fix each $t_n(\alpha)$ everywhere outside the total diagonal in~$X^n$.

Axiom ({\bf TT~2}) is guaranteed by the following.

\begin{enumerate}[{\bf (t 1)}]
\item{\bf Microlocal spectrum condition}:  
For every $\alpha$, 
$$\mathrm{WF}(t_n(\alpha))\big|_{X_\Delta} \subset 
\left\{x,k_1;\dots;x,k_n \, \Big| \,\sum_{i=1}^n k_i=0\right\}$$ 
where $X_\Delta$ denotes the small (or total) diagonal in~$X^n$.
 \end{enumerate}
 
Axioms ({\bf TT~4}) is guaranteed by the following.  

\begin{enumerate}[{\bf (t 1)}]
\addtocounter{enumi}{1}
\item{\bf Symmetry}: For every $n$,
the map $t_n$ is $S_n$-equivariant.
\end{enumerate}

Axioms ({\bf TT~5}) and ({\bf TT~6}) do not impose any properties on the $t_n$. 
Axiom ({\bf TT~7}) motivates the following.

\begin{enumerate}[{\bf (t 1)}]
\addtocounter{enumi}{2}
\item{\bf Covariance}: For every $n$, the map $t_n$ is map of cosheaves. 
\end{enumerate}

In the construction, this feature is guaranteed because the map $t_n$ depends locally and covariantly on the metric on~$X$.

Axiom ({\bf TT~9}) is guaranteed by the following.

\begin{enumerate}[{\bf (t 1)}]
\addtocounter{enumi}{3}
\item{\bf Action Ward Identity}: 
The map $t_n$ is a cochain map. That is, $\d(t_n(\alpha))=t_n(\d_{\mlloc}(\alpha))$ for every~$\alpha$.
 \end{enumerate}

There are three more axioms in section 3.3 of~\cite{HW}, with strong physical and technical motivation, but they are not relevant for our work here so we mention them without explaining them.
\begin{enumerate}[{\bf (t 1)}]
\setcounter{enumi}{4}
\item{\bf Almost homogeneous scaling}.
\item{\bf Smoothness}.
\item{\bf Analyticity}.
\end{enumerate}
The first guarantees that each ${t}_{n}(\alpha)$ has desirable scaling properties. 
The other two characterize the dependence on the metric. 

The key result is the following.

\begin{thm}[after \cite{H}]
There exist a family of maps $t_{n}$ satisfying the axioms \em{({\bf t~1})-({\bf t~7})}, and they determine maps $\Tb_n$ that satisfy axioms ({\bf TT~6}) -- ({\bf TT~7}) and ({\bf TT~9}) by means of formula~\eqref{eq: TT from t}. 
\end{thm}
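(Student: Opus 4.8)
The plan is to build the $t_n$ by induction on $n$ following the Epstein--Glaser strategy of \cite{H,BDF}, and then to read off the $\Tb_n$ from the Wick expansion~\eqref{eq: TT from t}. The base cases are forced by ({\bf TT~1}): set $t_0 = 1$ and let $t_1$ send a Lagrangian $p$-form to its associated current. For the inductive step, assume the $t_k$ have been constructed for all $k < n$ and satisfy ({\bf t~1})--({\bf t~7}). Causal factorization ({\bf TT~3}) then determines $t_n$ on the complement $X^n \setminus X_\Delta$ of the total diagonal: over each causal ordering one sets $t_n$ equal to the appropriate $\star$-product of lower maps, and the usual argument shows these local prescriptions glue, since the $\star$-product is associative and the $t_k$ already obey ({\bf t~2}). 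By the inductive hypothesis and the wavefront calculus for the Hadamard $\star$-product, the current $t_n(\alpha)$ defined this way satisfies the microlocal spectrum bound ({\bf t~1}), symmetry ({\bf t~2}), and covariance ({\bf t~3}) away from the diagonal; and because $\star$ is built from bidifferential operators while $\d$ is a derivation, $t_n$ is a cochain map there as well.

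The analytic heart of the argument is the extension of $t_n(\alpha)$ across $X_\Delta$. Using covariance we may work locally, passing to coordinates adapted to the diagonal and reducing to the extension of a compactly supported, form-valued current on the transverse space $\RR^{(n-1)\dim X}$, away from the origin that records total coincidence. The scaling degree of $t_n(\alpha)$ about the diagonal is finite and controlled by ({\bf t~5}), so the extension exists by the Steinmann--Epstein--Glaser procedure: below the critical scaling degree it is unique, and otherwise it is ambiguous only up to a finite-dimensional space of derivatives of delta-currents supported on $X_\Delta$ --- precisely the renormalization freedom. One then checks that the extension can be chosen to respect the wavefront bound ({\bf t~1}) (the new terms live on the diagonal with momenta summing to zero), the $S_n$-symmetry ({\bf t~2}) (symmetrize the chosen extension), and the scaling, smoothness, and analyticity properties ({\bf t~5})--({\bf t~7}), exactly as in \cite{HW,H}.

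The delicate point, and the one we wish to foreground, is the Action Ward Identity ({\bf t~4}): the extended $t_n$ must still satisfy $\d\, t_n(\alpha) = t_n(\d_\mlloc \alpha)$. This already holds off the diagonal, so the obstruction is a current supported on $X_\Delta$. The crucial observation is that $\d_\mlloc$ raises de Rham degree by one and interacts with the Steinmann scaling in a controlled way, so that the de Rham differential is compatible with the extension procedure and one may extend representatives compatibly with the de Rham filtration. The mapping-cone structure of $\lloc_c^{[n]}$ together with the algebraic Poincar\'e lemma (Theorem~\ref{olver thm}), which concentrates the relevant cohomology, then ensures that any diagonal-supported discrepancy between $\d\, t_n(\alpha)$ and $t_n(\d_\mlloc\alpha)$ is itself exact and can be absorbed into an admissible choice of the renormalization freedom. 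I expect this to be the main obstacle: showing that the finite renormalization ambiguity can always be fixed so that $t_n$ commutes with $\d$ uniformly through the induction is exactly the content that is implicit, but not made explicit, in \cite{H}.

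Finally, define $\Tb_n$ from the $t_n$ by the Wick expansion~\eqref{eq: TT from t}, expanding $\alpha$ about the zero section via the coproduct of Section~\ref{fbrws cprdct ml}. Then ({\bf TT~5}) and ({\bf TT~6}) follow from the structure of the Taylor expansion, ({\bf TT~7}) follows from ({\bf t~3}), and ({\bf TT~9}) follows from ({\bf t~4}), since the Wick expansion intertwines $\d$ on currents with $\d_\mlloc$ on Lagrangian forms. This produces the desired maps $\Tb_n$.
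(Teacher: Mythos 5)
Your proposal follows essentially the same route as the paper's own proof: an induction on $n$ in which the Wick expansion~\eqref{eq: TT from t} reduces everything to constructing the distributional coefficients $t_n(\alpha)$, causal factorization fixes $t_n$ off the small diagonal in terms of the $t_k$ with $k<n$, the Steinmann scaling-degree extension across the diagonal supplies existence, and the finite renormalization freedom is then used to enforce the remaining axioms (including the Action Ward Identity), with the hard analytic content deferred to \cite{H,BDF}. The paper's proof is in fact only this same sketch plus citations, so your write-up is, if anything, more detailed than the original; just be aware that your specific mechanism for the AWI step (exactness of the diagonal-supported discrepancy via the mapping cone and Theorem~\ref{olver thm}) is your own gloss and is not spelled out in the paper or verified here.
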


\begin{proof}
For the proof, see \cite[section 3.3]{H}, as well as \cite[section 4.1]{BDF}, \cite[section 4]{BF0} and \cite[section 3.3]{H}. 
(A more robust and mathematically cleaner method for constructing such distributional extensions is offered in~\cite{Viet}.) 
The main idea is that the formula~\eqref{eq: TT from t} guarantees that all we need to do is to construct the ``coefficients'' $t_n(\alpha)$, 
which are just distributional de Rham forms, not functions on~$\Ecal$. 
The process is inductive on $n$.
Crucially, {\bf (TT 3)} means that outside the small diagonal, 
the form $t_n(\alpha)$ is completely determined by the $t_k(\alpha)$, for $k<n$. 
We also know that the other desired properties are fulfilled everywhere outside the diagonal (because the $t_k(\alpha)$, $k<n$ and corresponding $\Tb_k$s satisfy all the axioms, by hypothesis). 
Hence at each step of the induction one needs to show that distributional extensions to the small diagonal exist. 
These are not unique and some of the remaining freedom is used to fulfill all the remaining axioms for $t_n$s and~$\Tb_n$s. 
\end{proof}

One obtains $\TT_n$s from $\Tb_n$s using the ansatz~\eqref{ansatz}. Note that the theorem guarantees existence, but not uniqueness.


\subsection{The main theorem of renormalization}

There is some ambiguity in the definition of time-ordered products of local functionals,
so there are many possible time-ordered products.
This variety of choices is controlled by the St\"uckelberg-Petermann renormalization group \cite{SP53,SP82} or rather its modern generalization~\cite{BDF,HW02,H}.
The essential point, stated below as the {\em main theorem}, is that the action of the group is transitive and thus relates all choices of time-ordered product.

\begin{df}
A {\em formal map} $\Zcal:\loc[[\lambda]]\rightarrow \loc [[\lambda]]$ is a formal power series (in $\lambda$) of the form $\Zcal(F)=\sum_{n=0}^\infty \lambda^n Z_n(F^{\otimes n})$, and such maps compose as power series.
A {\em formal diffeomorphism} is a formal map that has an inverse among formal maps.

The {\em renormalization group} $\Rcal$ is the group of formal diffeomorphisms $\Zcal:\loc_c[[\lambda]]\rightarrow \loc_c[[\lambda]]$ that satisfy
\begin{enumerate}[{\bf Z1}]
	\item {\bf Identity preservation}: $\Zcal(0)=0$.\label{Z:ident}
	\item {\bf Causality}:\label{Z:loc} $\Zcal$ 
		satisfies the Hammerstein property, i.e. $F_1\preceq F_2$ implies that
		\[
		\Zcal(F_1+F+F_2)=	\Zcal(F_1+F)-\Zcal(F)+	\Zcal(F_2+F)\,.
		\]
	
	\item {\bf Symmetry}: The maps $Z_n$ are graded symmetric under permutations of arguments. 
		
	\item {\bf Field independence}: For any field $\psi\in\Ecal$, we have
 \[
\Zcal(F^\psi)=\Zcal(F)^\psi\,.
 \]	

 	\item {\bf $\ph$-Locality}: $\Zcal_{n}(F_1,\ldots,F_n)=\Zcal_{n}(F_1^{[N]},\ldots,F_n^{[N]})+\Ocal(\hbar^{N+1})$, where  $F_i^{[N]}$ is the Taylor series expansion of the functional $F_i$ up to the $N$-th order.
	\item {\bf Covariance} $\Zcal$ is a natural transformation between the functors $\loc_c$ and $\loc_c$.
    \item {\bf Unitarity: }	$\overline{\Zcal}(-V)+\Zcal(V)=0$.
	\end{enumerate}
\end{df}

Note the close relation of this list with the list for time-ordered products.
To see the relevance of  this notion of the renormalization group in our context,
we introduce the notion of a formal S-matrix.

\begin{df}
Let $\TT$ be some time-ordering operator construction~$\TT$ (as given in Definition~\ref{df: T map}).
For $F\in \loc_c$,
its {\em formal S-matrix} $\Scal_\TT(F)$~is
\[
\Scal_\TT(\lambda F) \doteq \TT e^{i\lambda F/\hbar} 
\]
where $\lambda$ is a formal parameter and where we apply the time-ordering to the formal power series of the exponential.
This S-matrix thus takes values in~$\overline{\mloc}_{\mu c}[[\lambda]]$.
\end{df}

The ambiguity of defining time-ordered products of local functionals translates to ambiguity in defining an S-matrix for a given $F$. 
This ambiguity, however, is completely characterized by the action of the renormalization group, as shown below.

\begin{thm}[Main theorem of renormalization]
\label{thm:renorm}

The renormalization group $\Rcal$ acts on $\loc[[\lambda]]$
and it induces an action on formal S-matrices by {\em precomposition}:
for any $\Zcal\in\Rcal$ and for any time-ordered product $\TT$,
the composite $\Scal_{\TT} \circ \Zcal$ is also a local S-matrix.
In other words, there exists a time-ordered product $\tilde{\Tcal}$ such that 
\[
\Scal_{\TT} \circ \Zcal(F)=\Scal_{\tilde{\Tcal}}.
\]
Moreover, this induced action on local S-matrices is transitive.
\end{thm}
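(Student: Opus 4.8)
\emph{Strategy.} The governing principle is that passing from a family $\{\Tcal_n\}$ satisfying \textbf{(T1)}--\textbf{(T7)} to its formal S-matrix $\Scal_\TT$ is a bijection (one recovers $\Tcal_n$ by polarizing the exponential), and that under this correspondence the causal factorization axiom \textbf{(T3)} exponentiates to the \emph{Bogoliubov factorization}
\[
\Scal(F+G)=\Scal(F)\star\Scal(G)\qquad\text{whenever } G\preceq F,
\]
while \textbf{(T1)}, \textbf{(T4)}--\textbf{(T7)} become the normalization $\Scal(0)=1$, symmetry, field independence, $\ph$-locality, and covariance of $\Scal$. Thus it suffices to work with S-matrices throughout: to show that such a map, precomposed with an element of $\Rcal$, is again of this form, and to produce the appropriate $\Zcal$ for the transitivity statement. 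The action of $\Rcal$ on $\loc_c[[\lambda]]$ is the tautological one: $\Rcal$ is a group under composition of formal diffeomorphisms by definition and acts by evaluation, with associativity automatic.

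\emph{Precomposition.} Fixing $\Zcal\in\Rcal$ and a time-ordered product $\TT$, I would verify that $\Scal_\TT\circ\Zcal$ satisfies Bogoliubov factorization. The key input is the Hammerstein property \textbf{(Z2)}: taking the middle functional to be zero and using \textbf{(Z1)} gives $\Zcal(F+G)=\Zcal(F)+\Zcal(G)$ whenever $G\preceq F$. Covariance \textbf{(Z6)} forces $\supp\Zcal(H)\subseteq\supp H$, so $G\preceq F$ implies $\Zcal(G)\preceq\Zcal(F)$, and therefore
\[
(\Scal_\TT\circ\Zcal)(F+G)=\Scal_\TT\bigl(\Zcal(F)+\Zcal(G)\bigr)=\Scal_\TT(\Zcal(F))\star\Scal_\TT(\Zcal(G)).
\]
Together with $\Scal_\TT(\Zcal(0))=1$ and the transfer of symmetry, field independence, $\ph$-locality, and covariance from the matching axioms for $\TT$ and for $\Zcal$, this shows $\Scal_\TT\circ\Zcal$ is a formal S-matrix, hence equals $\Scal_{\tilde{\Tcal}}$ for a unique time-ordered product $\tilde{\Tcal}$.

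\emph{Transitivity.} Given two time-ordered products $\TT_1,\TT_2$, I would solve $\Scal_{\TT_1}\circ\Zcal=\Scal_{\TT_2}$ recursively in powers of $\lambda$. Because both S-matrices begin as $1+\tfrac{i}{\hbar}\lambda F+\Ocal(\lambda^2)$, at each order $n$ the equation determines the coefficient $Z_n(F^{\otimes n})$ uniquely as a microcausal functional in terms of strictly lower-order data; this produces a formal map $\Zcal$ with $\Scal_{\TT_1}\circ\Zcal=\Scal_{\TT_2}$ by construction, and the properties \textbf{(Z1)}, \textbf{(Z3)}--\textbf{(Z6)} transfer directly from the corresponding axioms shared by $\TT_1$ and $\TT_2$.

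\emph{The main obstacle.} The heart of the matter --- and the step I expect to be genuinely hard --- is proving \emph{locality}: that each $Z_n$ lands in $\loc_c$, equivalently that $Z_n$ is supported on the total diagonal of $X^n$, so that $\Zcal$ is a formal diffeomorphism of $\loc_c[[\lambda]]$ and satisfies the Hammerstein property \textbf{(Z2)}, placing it in $\Rcal$. I would argue this by induction on $n$ using \textbf{(T3)}: away from the small diagonal both $\TT_1$ and $\TT_2$ are completely fixed by their lower-order terms, so after applying the already-constructed lower-order part of $\Zcal$ the two S-matrices agree off the diagonal; hence the order-$n$ discrepancy carried by $Z_n$ is supported on the diagonal and defines a local functional, while the same factorization structure yields additivity of $\Zcal$ on causally separated supports, that is \textbf{(Z2)}. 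Making the diagonal-support claim precise requires controlling the wavefront sets of the distributional extensions and checking that the renormalization freedom is exactly the local one, for which I would invoke the existence and uniqueness analysis of Section~\ref{sec:loc:ren} together with the treatments in \cite{BDF,H}.
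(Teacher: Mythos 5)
Your proposal is correct and takes essentially the same route as the paper's own proof, which is likewise only an outline: the easy direction is exactly the short calculation you give (Hammerstein property \textbf{(Z1)}--\textbf{(Z2)} plus support preservation from covariance yielding Bogoliubov factorization), and transitivity is obtained by inductively constructing the $Z_n$ as differences of appropriate $n$-fold time-ordered products, with the causality/locality property \textbf{(Z2)} correctly singled out as the non-obvious step. The paper defers that step to the proof of Theorem~4.7 of \cite{Rej19}, just as you defer the diagonal-support argument to the causal-factorization induction and to \cite{BDF,H}, so your sketch is, if anything, more explicit than the paper's.
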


Instead of working just with the space of local functionals, one can also start with an arbitrary additive group equipped with \textit{causality relation}, as shown in \cite{Rej19}. 
This approach applies to our local differential forms $\lloc_c$, 
where the causality relation is induced by the $\preceq$ relation for their supports. However, we postpone this discussion to our later work; 
see the remark below.

\begin{proof}[Proof outline]
One direction is straightforward. 
A short calculation shows that for $\Zcal\in\Rcal$, the composite $\Scal\circ \Zcal$  is a local S-matrix. 
To go the other way, the non-obvious step is to show ${\bf Z2}$. 
One does this by construction, and 
the construction has been spelled out in sufficient generality in the proof of Theorem~4.7 in \cite{Rej19}. 
The remaining properties follow by direct inspection, since the $\Zcal_{n}$s are constructed inductively as differences of appropriate $n$-fold time-ordered products, 
so they inherit all their properties.
\end{proof}

\begin{rem}
In the proof one shows that $\Zcal_n$ is local by showing that it is supported on the small diagonal. 
When working with distributional multilocal differential forms, 
one cannot identify a multilocal form supported on the diagonal with a local one.
However, if one works on the Ran space of $X$ (i.e., the space of finite subsets of $X$, which identifies $X$ with the small diagonal in every $X^n$),
this happens naturally.
Hence, in order to generalize the above main theorem to $\lloc_c$, 
it seems natural to work on the Ran space. 
We hope to address this in our future work.
\end{rem}


\section{The classical observables}
\label{sec:classobs}

We return now from renormalization to examining the observables of a classical field theory.
Everything so far is essentially kinematic; 
we have not imposed the equations of motion to obtain classical observables.
To encode these dynamics,
we use a differential.
This differential is a vector field on the graded space $\Ecal$ of field configurations,
albeit of cohomological degree one,
and its kernel consists of functions on the fixed points of the vector field.

In the BV formalism for classical field theory, this differential appears as a kind of Hamiltonian vector field. 
In particular, the local functionals possess a shifted Poisson bracket --- known as the antibracket --- so that the differential is the Hamiltonian vector field $\{S,-\}$, namely the antibracket with the action functional~$S$.
We review the {\em shifted} symplectic geometry relevant to our context
before turning to constructing a differential on our version of multilocal 
differential forms (and multilocal functionals),
which allows us to describe our factorization algebra of classical observables.
This construction is borrowed from~\cite{BarHen, H, Get21}.

Until the final subsection~\ref{analyticity of obscl},
we do not require fiberwise analyticitiy.

\subsection{A shifted symplectic structure on fields}\label{sec: symplectic}

In the BV formalism, we require the following structure on the field configurations.

\begin{df}
Let $E \to X$ be a graded vector bundle on a manifold~$X$.
A {\em -1-shifted symplectic bundle pairing} $\omega$ is a map of graded vector bundles
\[
\omega: E \times_X E \to \Dens_X
\]
that is fiberwise nondegenerate, has degree -1, and is skew-symmetric in the sense that 
\[
\omega( e, e') = -(-1)^{(|e|+1)(|e'|+1)} \omega( e', e)
\]
for any point $x \in X$ and for any elements $e, e' \in E_x$, where $E_x$ denotes the fiber of the bundle $E$ at~$x$.

Such a pairing determines a {\em -1-shifted local symplectic pairing} $\omega_\Ecal$ on the graded space $\Ecal$ of field configurations 
\[ 
\omega_\Ecal: \Ecal \times \Ecal \to \cDens
\]
where the density $\omega_\Ecal(\phi, \phi')$ satisfies
\[
\omega_\Ecal( \phi, \phi')(x) = \omega(\phi(x), \phi'(x))
\]
for any fields $\phi, \phi' \in \Ecal$.
\end{df}

\def\ev{{\rm ev}}

\begin{rem}
In typical examples of the BV formalism, there is a graded vector bundle $F \to X$ whose degree 0 sections provide the fundamental fields for the theory (e.g., connection 1-forms) and whose degree -1 sections provide the ghosts, if those are needed.
The full graded bundle $E \to X$ arises by adjoining ``antifields'' so $E^0 = F^0$, $E^{-1} = F^{-1}$, but $E^{1} = (F^0)^* \otimes \Dens$ and $E^2 = (F^{-1})^* \otimes \Dens$.
By construction, there is then a canonical $\omega$ on the vector bundle $E$ by leveraging the evaluation pairing. 
For instance, we have the map
\[
E^0 \times_X E^1 \cong F^0 \times_X (F^0)^* \otimes \Dens \xto{\ev_{F^0} \otimes \id_\Dens} \Dens
\]
and similarly between $E^{-1}$ and $E^{2}$ (i.e., the ghosts and their antifields). As 
\end{rem}

Consider the vector bundle 
\[
E^! = \Hom(E, \Dens) \cong E^* \otimes \Dens
\]
where $E^*$ denotes the dual bundle to $E$.
Let $\Ecal^!$ denote smooth sections of $E^!$,
and similarly let $\Ecal^!_c$ denote compactly supported smooth sections of~$E^!$.

Note that our pairing $\omega$ on $E$ induces an isomorphism $\omega^\flat: E \xto{\cong} E^!$.
We thus obtain an isomorphism $\omega^\flat_\Ecal: \Ecal \xto{\cong} \Ecal^!$ induced by that isomorphism of vector bundles.
Let $\omega^\sharp: E^! \to E$ denote the inverse to $\omega^\flat$, 
and likewise we write $\omega^\sharp_{\Ecal}: \Ecal^! \to \Ecal$ for the induced isomorphism.
We thus obtain a pairing
\[
\omega_{\Ecal^!}: \Ecal^! \times \Ecal^! \to \cDens
\]
by
\[
\omega_{\Ecal^!} = \omega_\Ecal \circ (\omega^\sharp_{\Ecal} \times \omega^\sharp_{\Ecal} ).
\]
Note that this pairing is canonical but it has two features that might look strange in comparison to the traditional construction.
First, the continuous linear dual $\Ecal'$ of $\Ecal$ consists of {\it distributional} sections of $E^!$, 
whereas this pairing is only defined on $\Ecal^!$, the {\it smooth} sections.
Second, the pairing takes values in $\cDens$.
It is, however, the natural codomain for our setting, 
as we now explain by examining the differentials of Lagrangian top forms.
(Recall that we implicitly assume $X$ is orientable, so that densities are top forms, and leave it to the reader to make the small modifications in the unoriented case.)

\subsection{A reminder on generalised Lagrangians}
\label{sec:genL}

In pAQFT one uses the notion of a  {\em generalised Lagrangian}, as introduced in~\cite{BDF}. 
The issue is that since we work on globally hyperbolic and hence non-compact manifolds,  the integral of a non-compactly supported Lagrangian density would diverge. 
The Lagrangian densities used in action functionals are not compactly supported, 
but they can be made into such by multiplying with a test function (i.e. compactly supported function on $X$). 
In general, the dependence on test functions can be more complicated and this freedom is actually needed in order to treat physically interesting examples like Yang-Mills (see Section~\ref{sec: examples}). 
With this in mind, we introduce the following definition.

\begin{df}\label{Lagr}
Let $\euD\doteq \cinfty_c(X,\RR^N)$ be the space of test functions (the choice of $N$ depends on the model). A {\em generalised Lagrangian} on $X$ is a map $L$ from $\euD$ to local functionals on $\Ecal$ that is
\begin{itemize}
\item {\em additive}: $L(f+g+h)=L(f+g)-L(g)+L(g+h)$ for all $f,g,h\in\euD$ with $\supp\,f\cap\supp\,h=\varnothing$, and
\item {\em preserves support}: $\supp(L(f))\subseteq \supp(f)$ 
\end{itemize}
\end{df}

Any Lagrangian density $\Lcal$ produces a generalized Lagrangian~$L$,
even if this Lagrangian density $\Lcal$ has {\em noncompact} support.
For example, let $\Lcal(\phi)=\alpha(\jet \phi) \mu$ where $\mu$ is a volume form on $X$ and $\alpha$ is an order $k$ Lagrangian. 
Then we define
\[
L(f)(\phi)=\int_X f\Lcal(\phi)= \int_X f\alpha(\jet_k \phi) \mu\,
\] 
where $\phi$ is an arbitrary field and $f\in\Ci_c(X)$.
Another useful natural option is to multiply the fields by $f$: define
\[
L(f)(\phi)=\int_X \Lcal(f\phi)= \int_X \alpha(\jet_k (f\phi)) \mu
\] 
if $\alpha$ is at least linear in fields.


A generalized Lagrangian $L$  encodes the dynamics of a theory, 
by means of it {\em generalized action},  
often called the {\em action functional} of the theory. 
The action is an equivalence class of generalized Lagrangians
under the relation $L_1\sim L_2$~if
\be\label{equ}
\supp (L_{1}(f)-L_{2}(f))\subset\supp\, \d f\,, 
\ee
for any $f\in\euD$.
See \cite{BDF} for further development of this approach.

\subsection{The cohomological vector field and dynamics}
\label{sec: dynamics}

Via the construction \eqref{eqn Fp}, a local functional $F = \Upsilon^n(\alpha, \mu)$ is an element of ${\rm Maps}(\Ecal, \CC)$,
where 
\[
F(\phi) = \int_X \alpha(\jet_k(\phi))\mu.
\]
This map is actually {\em smooth} and so it has a differential $\delta F$.
(We use $\delta F$ rather than $\d F$ as $\d$ will be reserved for the exterior derivative on the spacetime manifold~$X$.)
As $\Ecal$ is a vector space, its tangent bundle admits a natural trivialization by translation, so for any $\phi \in \Ecal$,  there is an isomorphism $T_\phi \Ecal \cong \Ecal$. 
Similarly, 
\[
T^*_\phi \Ecal \cong \Hom_{cts}(\Ecal, \CC) = \Ecal',
\]
and $\delta F|_\phi$ is a ``covector'' in $T^*_\phi \Ecal$.
Thus, given a tangent vector $\psi \in T_\phi \Ecal \cong \Ecal$,
it pairs with the differential $\delta F|_\phi \in T^*_\phi \Ecal$ to give the directional derivative
\[
(\psi, \delta F|_\phi)= D_\psi F(\phi) = \lim_{\epsilon \to 0} \frac{F(\phi + \epsilon \psi) - F(\phi)}{\epsilon},
\]
as usual.
Varying over all input fields, the differential $\delta F$ can also be seen as a smooth map from $\Ecal$ to~$\Ecal' = \Hom_{cts}(\Ecal, \CC)$.

There is a crucial feature of such local functionals. 

\begin{lemma}[\cite{BDGR}]
For any local functional $F$ and for any field $\phi \in \Ecal$, the derivative $\delta F|_\phi$ is a {\em smooth} functional of the tangent direction $\psi$. 
That is, $\delta F|_\phi$ is an element of~$\Ecal^!_c$.
\end{lemma}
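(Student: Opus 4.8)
The plan is to reduce the statement to an explicit computation in a coordinate patch and to observe that a single integration by parts converts a superficially distributional expression into one represented by a smooth compactly supported section; this is the mechanism underlying the cited result~\cite{BDGR}.

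First I would use linearity of $F \mapsto \delta F$ to assume $F = F_{\alpha,\mu}$ for a single order $k$ Lagrangian $\alpha$ and a compactly supported density $\mu$. Since smoothness and compact support of a section of $E^! = E^* \otimes \Dens_X$ are local conditions, I would then pick a partition of unity subordinate to a cover of $\supp\mu$ by coordinate patches trivializing $E$, write $\mu = \sum_i \psi_i\mu$, and reduce to the case where $\mu = g\,\d^n x$ is supported in one such patch, with $g$ smooth and compactly supported. There a field decomposes with respect to the frame as $\phi = (\phi_1,\dots,\phi_r)$ (with $r$ the rank of $E$), with jet coordinates $\partial^\nu\phi_j$ for $|\nu|\le k$, and the chain rule gives
\[
D_\psi F(\phi) = \int_X \sum_{j}\sum_{|\nu|\le k} \frac{\partial\alpha}{\partial(\partial^\nu\phi_j)}\Big|_{\jet_k\phi(x)}\,\partial^\nu\psi_j(x)\,g(x)\,\d^n x.
\]
A priori this is only a distribution of order at most $k$ in $\psi$, since it differentiates $\psi$.

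Next I would integrate by parts. Because $g$ has compact support there are no boundary terms, so each derivative transfers onto the smooth coefficients, producing
\[
D_\psi F(\phi) = \int_X \sum_j \Big( \sum_{|\nu|\le k}(-1)^{|\nu|}\,\partial^\nu\big[\tfrac{\partial\alpha}{\partial(\partial^\nu\phi_j)}\big|_{\jet_k\phi}\,g\big]\Big)(x)\,\psi_j(x)\,\d^n x.
\]
The bracketed quantity is the $j$-th component of a \emph{smooth} section of $E^*\otimes\Dens_X$ whose support lies in $\supp\mu$; this is the desired representative $\delta F|_\phi$, so that $D_\psi F(\phi) = \int_X \langle \delta F|_\phi,\psi\rangle$ for the canonical pairing $\Ecal^!\times\Ecal\to\cDens$ given by fiberwise evaluation. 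Reassembling the finitely many partition-of-unity pieces yields a global smooth compactly supported section, and uniqueness of smooth representatives (smooth densities inject into distributions) makes this assembly independent of the choices. Finally, as $\psi\mapsto D_\psi F(\phi)$ is linear and pairs against a fixed element of $\Ecal^!_c$, it is continuous and hence smooth, giving $\delta F|_\phi\in\Ecal^!_c$.

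The hard part --- indeed the only real content --- is the integration-by-parts step: it exhibits the naively order-$k$ distribution $\delta F|_\phi$ as having order zero, which is precisely the smoothness assertion. The compact support of $\mu$ is what both licenses the integration by parts and forces the representative to be compactly supported; without it the derivative would land in $\Ecal^!$ rather than $\Ecal^!_c$.
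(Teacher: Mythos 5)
Your proof is correct, but it takes a genuinely different route from the one the paper sketches (the paper defers the full details to the cited reference). The paper's argument is a duality argument: it observes that the variation formula for $D_\psi F(\phi)$ --- in which derivatives fall on $\psi$ but every other factor is smooth and the density is compactly supported --- still makes sense, continuously, when $\psi$ is an arbitrary \emph{distributional} section; hence $\delta F|_\phi$ extends to an element of $\Hom_{cts}(\overline{\Ecal},\CC)$, and one then invokes the identification $\Hom_{cts}(\overline{\Ecal},\CC) = \Ecal^!_c$ (the continuous dual of distributional sections of $E$ is the compactly supported smooth sections of $E^!$) to conclude that the kernel is smooth. You instead exhibit the smooth kernel directly: after localizing with a partition of unity, the classical integration by parts turns the order-$k$ pairing into a pairing of $\psi$ against the Euler--Lagrange-type expression $\sum_{\nu}(-1)^{|\nu|}\partial^\nu\bigl[\tfrac{\partial\alpha}{\partial(\partial^\nu\phi_j)}\big|_{\jet_k\phi}\,g\bigr]$, which is manifestly a smooth compactly supported section of $E^*\otimes\Dens_X$. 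Note that your intermediate formula, before integrating by parts, is exactly the paper's key observation (smooth compactly supported coefficients paired against $\partial^\nu\psi_j$); where the paper then appeals to the functional-analytic duality/reflexivity fact, you complete the computation by hand. What each buys: the paper's route is coordinate-free and very short, at the price of importing the dual-space identification; yours is elementary and self-contained, needing only injectivity of smooth densities into distributions, and it has the side benefit of identifying the kernel explicitly as the Euler--Lagrange derivative --- precisely the object the paper later denotes $\delta^{\ell}_{EL}\Lcal$ in its remark on the equations of motion.
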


To unpack this statement, if $\overline{\Ecal}$ denotes the distributional completion of $\Ecal$ so that $\Ecal \subset \overline{\Ecal}$, then
\[
\delta F|_\phi \in \Hom_{cts}( \overline{\Ecal}, \CC) \subset \Hom_{cts}(\Ecal, \CC) = \Ecal'.
\]
However, $\Ecal^!_c$ is precisely~$\Hom_{cts}(\overline{\Ecal}, \CC)$.
(Given any distributional section of $E$, it pairs with any compactly supported smooth section of $E^!$ to produce a number,
and conversely any continuous linear functional on such distributional sections is of this form.)

Concretely, this lemma reflects that for a functional of the form
\[
F(\phi) = \int_X D_1 \phi \cdots D_k \phi \,\mu
\]
where the $D_j$ are differential operators and $\mu$ is a density,
its variation
\[
D_\psi F(\phi) = \sum_{j = 1}^k \int_X D_1 \phi \cdots D_j \psi \cdots D_k \phi \,\mu
\]
is well-defined even if $\psi$ is a distribution.
In this sense, the derivative $\delta F$ --- computed by variations against smooth fields --- can be pulled back from a linear functional on distributional fields.

Thanks to this lemma, the differential $\delta F$ of a local functional $F$ is a smooth map from fields $\Ecal$ to $\Ecal^!_c$, 
as one can pair $\delta F(\phi)$ with an arbitrary distributional section $\psi$ of~$E$.
By postcomposing with the isomorphism $\omega^\flat_\Ecal: \Ecal^! \to\Ecal$, 
we obtain a smooth map from fields $\Ecal$ to $\Ecal$ itself, now viewed as the tangent bundle.
We can view this map as a vector field $\Qcal_S$ on the graded manifold~$\Ecal$.

We now turn to the (noncompactly supported) Lagrangian density $\Lcal$ that encodes the dynamics of the theory.
It determines a generalized action, as discussed in section~\ref{sec:genL},
which we will call the {\em action functional} $S$ of the theory.
The dynamics of the theory is given in terms of a 1-form $\delta S$ on $\Ecal$, defined by
\[
(\psi, \delta S|_\phi):=(\psi, \delta L(f)|_\phi)\,,
\] 
where $\psi \in \overline{\Ecal}_c$ and on the support $\supp \psi$, $f\equiv 1$. 
It follows from locality that $\delta S$ defined this way does not depend on the choice of the cutoff $f$, 
and so it is a 1-form on $\Ecal$ and defines a vector field $\Qcal_S$, as explained above.

\begin{df}
A {\em classical BV theory} consists of a graded space $\Ecal$ of fields equipped with the -1-shifted local symplectic pairing $\omega_\Ecal$ and an  action functional $S$ such that the induced degree 1 vector field $\Qcal_S$ 
satisfies 
\begin{equation}
\label{eqn:cme}
    \Qcal^2_S = 0,
\end{equation}
a condition known as the {\em classical master equation (CME)}. 
\end{df}

As a vector field $\Qcal_S$ acts as a derivation on the ring of smooth functions on $\Ecal$:
for $F$ a smooth function,
\[
\Qcal_S(F) = L_{\Qcal_S}(F) = \iota_{\Qcal_S}(\delta F),
\]
where $L_{\Qcal_S}$ denotes the Lie derivative and $\iota_{\Qcal_S}$ denotes contraction.
We are interested, however, in local functionals and eventually local differential forms.
As one might hope, $\Qcal_S$ preserves the subspace of local functionals.

\begin{lemma}
For a local functional $F \in \loc_c$, the functional $\Qcal_S(F)$ is local.
Thus, $(\loc_c(X), \Qcal_S)$ is a cochain complex.
\end{lemma}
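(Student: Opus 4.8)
The plan is to produce an explicit local integrand for $\Qcal_S(F)$ and then read off both locality and compact support directly from that formula. First I would unwind the definition of $\Qcal_S$ as a Hamiltonian vector field acting on functions. For any field $\phi \in \Ecal$ its value is the tangent vector $\Qcal_S|_\phi = \omega^\sharp_\Ecal(\delta S|_\phi) \in \Ecal$, and contracting the one-form $\delta F$ against it gives
\[
\Qcal_S(F)(\phi) = \iota_{\Qcal_S}(\delta F)(\phi) = \langle \delta F|_\phi,\, \omega^\sharp_\Ecal(\delta S|_\phi)\rangle,
\]
where $\langle\cdot,\cdot\rangle$ denotes the integral over $X$ of the fiberwise evaluation pairing $E^! \times E \to \Dens$.

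Next I would use the lemma of \cite{BDGR} quoted above to control the two differentials appearing here. It guarantees that $\delta F|_\phi \in \Ecal^!_c$ and $\delta S|_\phi \in \Ecal^!$ are \emph{smooth} sections; more importantly for us, each is computed pointwise from the jet of $\phi$. Concretely, integrating by parts in the coordinate expression for the variation of an order-$k$ Lagrangian exhibits a fiberwise-smooth bundle map $J^m E \to E^!$ (the Euler--Lagrange derivative) whose value on $\jet_m\phi(x)$ equals $\delta F|_\phi(x)$, with $m \le 2k$, and similarly for $S$ with some order $m'$. Since locality is a statement over $X$, I would check this in a coordinate patch with a trivializing frame and glue with a partition of unity, exactly as in Lemma~\ref{lem: d makes sense}, reducing to the explicit Euler--Lagrange formula on $\RR^n$.

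Then I would assemble the integrand. The map $\omega^\sharp : E^! \to E$ is a fiberwise, order-zero bundle isomorphism, so $\omega^\sharp(\delta S|_\phi(x))$ is again a jet-local $E$-valued expression in $\jet_{m'}\phi(x)$; pairing it fiberwise with $\delta F|_\phi(x)$ through $E^! \times E \to \Dens$ yields a density whose value at $x$ depends only on $\jet_M\phi(x)$ with $M = \max(m,m')$. In other words the integrand has the form $\beta(\jet_M\phi)$ for a single Lagrangian $\beta$, so $\Qcal_S(F)(\phi) = \int_X \beta(\jet_M\phi)$ is a local functional. The one point that requires genuine care --- and the main obstacle --- is compact support: the action $S$, and hence $\delta S$, is \emph{not} compactly supported, but $\delta F|_\phi$ has support inside $\supp F$ because $F \in \loc_c$, and the fiberwise pairing vanishes wherever $\delta F|_\phi$ does. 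Thus the integrand is compactly supported and $\Qcal_S$ carries $\loc_c(X)$ into itself.

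Finally, $\Qcal_S$ has cohomological degree $+1$ and satisfies $\Qcal_S^2 = 0$ by the classical master equation~\eqref{eqn:cme}. Together with the previous paragraph, this exhibits $(\loc_c(X), \Qcal_S)$ as a cochain complex, as claimed.
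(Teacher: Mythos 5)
Your proof is correct and follows essentially the same route as the paper's: unraveling the definition of $\Qcal_S(F)(\phi)$ into the pairing $\int_X \omega_{\Ecal^!}(\delta S|_\phi, \delta F|_\phi)$ and observing that this integrand is local. The paper simply declares this formula ``manifestly'' a local functional, while you spell out the justification (jet-locality of the Euler--Lagrange derivatives, the fiberwise order-zero map $\omega^\sharp$) and also make explicit the compact-support and $\Qcal_S^2=0$ points that the paper leaves implicit; this is a fleshed-out version of the same argument, not a different one.
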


\begin{proof}
Unraveling the definitions, one finds that
\[
\Qcal_S(F)(\phi) = \int_X \omega_{\Ecal^!}(\delta S|_\phi, \delta F|_\phi).
\]
This formula is manifestly a local functional.
\end{proof}

\begin{rem}
The reader already familiar with the BV formalism will recognize that $\Qcal_S(F) = \{S,F\}$, where $\{-,-\}$ denotes the {\em antibracket}.
Indeed, the antibracket is defined on local functionals by
\begin{equation}
\label{lag 0 pairing}
\{ F, G \}(\phi) = \int_X \omega_{\Ecal^!}(\delta F|_\phi, \delta G|_\phi),
\end{equation}
which is precisely the formula we used in the preceding proof.
\end{rem}

\subsection{Equations of motion}

Let $L^0(f)=\int_X f\Lcal^0$ be the term in the Lagrangian $L$ with no antifields. Let $S^0$ denote the corresponding generalized action. The equation of motion of the theory is the condition
\be\label{EOM}
\delta S^0(\phi)\equiv 0\,,
\ee
and fields $\phi\in\Ecal$ satisfying this condition are \textit{solutions to the equation of motion}. 

The remaining term $\theta$ of the generalized Lagrangian encodes symmetries and potentially higher order data (e.g. depenedencies between symmetries). Physically, we are interested in the space of \textit{gauge-invariant on-shell functionals}, i.e. functionals on the space of solutions to \eqref{EOM} (i.e. the zero locus of $\delta S^0$) that are invariant under the symmetries. 

\begin{rem}
One can express \eqref{EOM} as a condition upon the jets of $\phi$, i.e., a system of partial differential equations, as follows. 
Every local functional is defined by a Lagrangian density,
but many Lagrangian densities determine the same local functional.
Indeed, if two Lagrangian densities differ by an exact term with respect to the differential $\d_\lloc$ on $\lloc$,
then they determine the same local functional.
This property has consequences for the differential $\delta F$ of a local functional $F$.
In particular, if $F(\phi) = \int_X \Lcal(\phi)$, then there exists a Lagrangian density $\delta^{r/\ell}_{EL} \Lcal$ (where the superscript denotes ``right/left'' and the subscript denotes ``Euler-Lagrange'')
such that the directional derivative satisfies
\[
D_\psi F(\phi) = \int_X \delta^{\ell}_{EL} \Lcal(\phi) \, \psi = \int_X \psi \, \delta^{r}_{EL} \Lcal(\phi).
\]
There are two Euler-Lagrange derivatives to account for signs that arise from fields with odd cohomological degree.
These Euler-Lagrange derivatives exist because one can use integration by parts to move any derivatives off the term $\psi$, which is a tangent vector on the space of fields, and onto the rest of the density.
For simplicity, we work exclusively with $\delta^{\ell}_{EL}$ from hereon and use $\delta^{\ell}_{EL}$, dropping the superscript.
	
We can now express \eqref{EOM} as:
\[
\delta^{\ell}_{EL}\Lcal^0=0\,.
\]
These are precisely the usual equations of motion for the field theory.
\end{rem}

\subsection{The factorization algebra of classical observables}

Another important property of $\Qcal_S$ is that it preserves supports of functionals, 
a property needed to obtain a factorization algebra.
Apply the notion of support of an observable from section~\ref{sec on obs as precosheaf}: a local functional $F \in \loc_c(X) \subset \cinfty(\Ecal(X))$ has support in $K \subset X$ if $K$ is the smallest closed set for which the injection $\loc_c(K) \hookrightarrow \loc_c(X)$ has $F$ in its image.

\begin{lemma}
The cohomological vector field $\Qcal_S$ preserves support of local functionals.
\end{lemma}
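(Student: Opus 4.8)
The plan is to work directly with the explicit formula for the cohomological vector field established in the preceding lemma, namely
\[
\Qcal_S(F)(\phi) = \int_X \omega_{\Ecal^!}(\delta S|_\phi, \delta F|_\phi),
\]
and to show that the integrand is a density supported inside $\supp(F)$ that depends only on the jets of $\phi$ over $\supp(F)$. Granting this, the functional $\Qcal_S(F)$ depends only on the restriction of $\phi$ to $\supp(F)$, so by Definition~\ref{df_supp} it lies in the image of the injection $\loc_c(\supp(F)) \hookrightarrow \loc_c(X)$, whence $\supp(\Qcal_S(F)) \subseteq \supp(F)$, as desired.

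First I would record that $\supp(\delta F|_\phi) \subseteq \supp(F)$ for every field $\phi$. Write $K = \supp(F)$. By Definition~\ref{df_supp}, $F$ is pulled back from a functional on $\Ecal(K)$, so $F(\phi)$ depends only on the restriction of $\phi$ to $K$. Consequently, for any $\psi \in \overline{\Ecal}_c$ whose support is disjoint from $K$, the translate $\phi + t\psi$ agrees with $\phi$ on $K$ for all $t$, so $F(\phi + t\psi) = F(\phi)$ and the directional derivative $D_\psi F(\phi) = (\psi, \delta F|_\phi)$ vanishes. Since $\delta F|_\phi$ annihilates every such $\psi$ supported in the open set $X \setminus K$, its support as a section of $E^!$ is contained in $K = \supp(F)$.

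Next I would exploit that $\omega_{\Ecal^!}$ is a \emph{pointwise} (fiberwise) pairing valued in densities: the value of $\omega_{\Ecal^!}(\delta S|_\phi, \delta F|_\phi)$ at a point $x$ is $\omega(\delta S|_\phi(x), \delta F|_\phi(x))$, which vanishes wherever $\delta F|_\phi(x) = 0$. Hence this density is supported in $\supp(\delta F|_\phi) \subseteq \supp(F)$, and the integral may be taken over $\supp(F)$ alone. Moreover, both $\delta S|_\phi$ and $\delta F|_\phi$ are given by Euler--Lagrange-type derivatives, which are differential operators applied to the jet prolongation of $\phi$, so their value at $x$ depends only on the jet of $\phi$ at $x$. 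The integrand is thus a local expression supported in $\supp(F)$, and $\Qcal_S(F)(\phi)$ depends only on the jets of $\phi$ over $\supp(F)$, completing the argument as outlined.

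The main obstacle is the careful bookkeeping around the definition of support: one must match the intuitive statement that a functional supported in a closed set $K$ ``sees only the field near $K$'' with the formal definition via the image of $\loc_c(K) \hookrightarrow \loc_c(X)$ and the softness of $\Ecal$, paying particular attention to jets at boundary points of $K$, where evaluating the integrand uses the germ of $\phi$ rather than merely its values on $K$. The pointwise nature of $\omega$ and the locality of the Euler--Lagrange derivatives are precisely what prevent the support from spreading; the remaining work is purely a matter of making this support bookkeeping precise.
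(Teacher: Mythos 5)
Your proposal is correct and follows essentially the same route as the paper: both arguments rest on the explicit formula $\Qcal_S(F)(\phi) = \int_X \omega_{\Ecal^!}(\delta S|_\phi, \delta F|_\phi)$, the containment $\supp(\delta F) \subseteq \supp(F)$, and the fact that $\omega_{\Ecal^!}$ is a fiberwise (vector bundle) pairing and hence respects supports. The paper's proof is terser—it simply asserts these three points and concludes $\supp(\Qcal_S(F)) \subseteq \supp(S) \cap \supp(F)$—whereas you additionally spell out the directional-derivative argument for $\supp(\delta F|_\phi) \subseteq \supp(F)$ and the bookkeeping back to Definition~\ref{df_supp}, which is a welcome but not essentially different elaboration.
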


\begin{proof}
If a local functional $F$ has support in $K \subset X$, then $\delta F$ also has support in $K$,
where again we mean $\delta F(\phi)$ has support contained in $K$ for any field $\phi$.
Hence ${\rm supp}(\delta F) \subset {\rm supp}(F)$.
As $\omega_{\Ecal^!}$ arises from a vector bundle pairing, 
it also respects supports.

Let $S$ be the action functional and let $F$ be a local functional. 
Then 
\[
\Qcal_S(F) = L_{\Qcal_S}(F) = \iota_{\Qcal_S}(\delta F),
\]
where $L_{\Qcal_S}$ denotes the Lie derivative and $\iota_{\Qcal_S}$ denotes contraction.
We thus see that
\[
{\rm supp}(\Qcal_S(F)) \subset {\rm supp}(\delta S) \cap {\rm supp}(\delta F) \subset {\rm supp}(S) \cap {\rm supp}(F)   
\]
as claimed.
\end{proof}

As support is preserved, we get a (strict) cosheaf of cochain complexes of local functionals.

\begin{cor}
The functor from $\Opens(X)$ to $\Ch$ assigning the cochain complex $(\loc_c(U), \Qcal_S)$ to each open set~$U$ is a cosheaf of cochain complexes.
\end{cor}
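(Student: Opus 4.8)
The plan is to bootstrap from the two facts already in hand: that $\loc_c$ is a strict cosheaf valued in graded vector spaces (Lemma~\ref{lem: loc is cosheaf}), and that $\Qcal_S$ is a degree-$1$, square-zero operator that preserves the support of local functionals (the two preceding lemmas). The only genuinely new content is to check that attaching the differential $\Qcal_S$ is compatible with the cosheaf structure; once this is arranged, upgrading a graded-vector-space cosheaf to a cochain-complex cosheaf is formal, because the strict cosheaf condition is a cokernel condition and cokernels in $\Ch$ are computed one cohomological degree at a time.

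First I would verify that $U \mapsto (\loc_c(U), \Qcal_S)$ is a precosheaf with values in $\Ch$, i.e.\ that the extension maps are cochain maps. For an inclusion $U \subset V$, the extension map $\loc_c(U) \to \loc_c(V)$ is the inclusion of those functionals whose (compact) support lies in $U$. Given $F \in \loc_c(U)$, the support-preservation lemma gives $\supp(\Qcal_S F) \subset \supp(F) \subset U$ (the factor $\supp(\delta S)$ in that estimate is harmless once it is intersected with the compact set $\supp F$). Hence $\Qcal_S$ restricts to an endomorphism of each $\loc_c(U)$ and commutes on the nose with every extension map. This makes each extension a cochain map and realizes $\Qcal_S$ as an endomorphism of the cosheaf $\loc_c$.

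Then I would read off the cosheaf condition. For any open cover $\{U_i\}$ of $U$, Lemma~\ref{lem: loc is cosheaf} says that $\loc_c(U)$ is the cokernel of the difference-of-extensions map
\[
\epsilon \colon \bigoplus_{j,k} \loc_c(U_j \cap U_k) \longrightarrow \bigoplus_i \loc_c(U_i)
\]
in the category of graded vector spaces --- equivalently, this cokernel statement holds in each cohomological degree separately. By the previous step every arrow in this diagram is a cochain map, and the cokernel of a map of cochain complexes is computed degreewise with the induced differential. Therefore the very same $\epsilon$ exhibits $(\loc_c(U), \Qcal_S)$ as the cokernel in $\Ch$, and the differential across the cokernel recovers precisely $\Qcal_S$ on $\loc_c(U)$. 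This is exactly the (strict) cosheaf condition with values in cochain complexes.

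I expect the only real obstacle to be the first step --- the locality of $\Qcal_S$, i.e.\ that it does not enlarge supports --- since this is what guarantees the differential is compatible with corestriction and hence descends through the cokernel. This is already secured by the support estimate $\supp(\Qcal_S F) \subset \supp(\delta S) \cap \supp(\delta F)$, which in turn rests on $\omega_{\Ecal^!}$ being a pointwise bundle pairing and on $\delta F$ being supported inside $\supp F$. Everything after that is the formal observation that strict (co)limits of cochain complexes are taken degreewise, so no homotopical subtleties enter; this is consistent with the earlier remark that $\loc_c$ is a strict but not a homotopy cosheaf.
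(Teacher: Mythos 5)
Your proof is correct and follows essentially the same route as the paper: the paper deduces the corollary from the support-preservation lemma together with Lemma~\ref{lem: loc is cosheaf}, noting (as you do) that strict colimits of cochain complexes are computed degreewise, so a support-preserving differential upgrades the graded-vector-space cosheaf to a cosheaf in $\Ch$. Your write-up merely spells out the details that the paper leaves implicit in its one-line remark ``As support is preserved, we get a (strict) cosheaf of cochain complexes.''
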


\begin{rem}
This target category is acceptable, but these vector spaces have additional structure.
For instance, one might view them as topological or bornological vector spaces (or some close cousin).
Tracking this additional structure is useful in many contexts but we will not focus on it here.
\end{rem}

This functor is not a homotopy cosheaf,
as $\loc_c$ is not a homotopy cosheaf.
Indeed, one reason to replace $\loc_c$ by $\lloc_c$ is to obtain a construction that satisfies a local-to-global principle that is compatible with the dg structure (i.e., works up to quasi-isomorphism).

The differential $\Qcal_S$ extends naturally to multilocal functionals,
as shown in \cite{FR};
those arguments extend immediately to our fattened multilocal functionals, as done, for example in section 2.7 of \cite{BFRej13}.
We record the result here.

\begin{lemma}
If the action functional $S$ satisfies the classical master equation, 
then $\Qcal_S$ extends to a differential on $\mloc_c$, 
and this differential is a derivation with respect to the commutative product on~$\mloc_c$.
\end{lemma}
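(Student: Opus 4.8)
The plan is to use that $\Qcal_S$ is already a globally defined, degree one vector field on the graded manifold $\Ecal$, hence a graded derivation of the full algebra $\cinfty(\Ecal)$ of smooth functionals, and then to check that it restricts to the subspace $\mloc_c$ while remaining support preserving. Recall that for any smooth functional $F$ one has $\Qcal_S(F) = \iota_{\Qcal_S}(\delta F)$, and that the preceding lemmas show both that $\Qcal_S$ carries $\loc_c$ into itself and that it preserves supports of local functionals. Since $\mloc_c$ is the thickening of the commutative algebra generated by $\loc_c$ under fieldwise multiplication, and a derivation that preserves a generating set preserves the algebra it generates, the substantive task is to make this precise for the thickened integrands of Definition~\ref{def:multilag}.

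First I would pin down the action on a degree $m$ functional $F_\alpha$ with $\alpha \in \lag_c^{[m]}$. Writing $\Qcal_S|_\phi = \omega^\sharp_\Ecal(\delta S|_\phi)$ for the value of the cohomological vector field at a field $\phi$, one finds
\[
\Qcal_S(F_\alpha)(\phi) = \sum_{i=1}^m \int_{X^m} \big(\partial_i\alpha\big)\big(\jet_\infty\phi(x_1),\dots,\jet_\infty\phi(x_m)\big)\cdot \jet_\infty\!\big(\omega^\sharp_\Ecal\,\delta S|_\phi\big)(x_i),
\]
where $\partial_i\alpha$ denotes the fiber derivative of $\alpha$ along its $i$th jet slot, contracted with the prolongation of the field $\omega^\sharp_\Ecal\,\delta S|_\phi$ inserted at the point $x_i$. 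The key observation is that $\delta S|_\phi$ is the Euler--Lagrange derivative, a local (jet dependent) expression in $\phi$, so inserting it into the $i$th slot yields a new degree $m$ Lagrangian density in $\lag_c^{[m]}$, of possibly higher jet order because of the derivatives in the Euler--Lagrange operator. Hence each summand is again a degree $m$ multilocal functional, so $\Qcal_S$ preserves the grading by symmetric degree and restricts to $\mloc_c$. Compact support and compatibility with the extension-by-zero structure maps are inherited from the local case, since $\supp(\Qcal_S(F_\alpha)) \subset \supp(\delta S)\cap\supp(F_\alpha)$ exactly as in the support-preservation lemma.

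The derivation property and nilpotency are then essentially formal. The product on $\mloc_c$ is fieldwise multiplication $(F\cdot G)(\phi) = F(\phi)G(\phi)$, and the Lie derivative along any graded vector field is a graded derivation of the pointwise product of functions, so
\[
\Qcal_S(F\cdot G) = \Qcal_S(F)\cdot G + (-1)^{|F|}\,F\cdot\Qcal_S(G),
\]
the sign reflecting that $\Qcal_S$ has cohomological degree one. That $\Qcal_S^2 = 0$ on all of $\cinfty(\Ecal)$ is precisely the classical master equation assumed in the hypothesis, so its restriction to $\mloc_c$ squares to zero; this is the only place the CME is used.

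The hard part will be the bookkeeping in the restriction step: verifying that the insertion of the smooth one-form $\delta S|_\phi$ (smooth by the $\loc_c$ lemma) into a single slot of a \emph{thickened} integrand $\alpha \in \cinfty_c(U^m)\otimes_{\cinfty_c(U)^{\otimes m}}\lag_c(U)^{\otimes m}$ again lands in that relative tensor product rather than in some larger completion. This is exactly where the arguments of \cite{FR} for the symmetric-algebra model must be upgraded to the thickening, following the functional-derivative computation in section~2.7 of \cite{BFRej13}; once that is in hand, gradedness, $S_m$-equivariance, and naturality in opens follow by inspection.
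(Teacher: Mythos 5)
Your proposal is correct and takes essentially the same approach as the paper: the paper's own proof consists of citing \cite{FR} for the extension of $\Qcal_S$ to multilocal functionals and section~2.7 of \cite{BFRej13} for the upgrade to the thickened integrands, which is exactly the step you single out as ``the hard part'' and defer to those same references. Your explicit chain-rule formula, the locality of the Euler--Lagrange derivative as the reason degree-$m$ multilocality is preserved, the formal derivation property of a vector field, and the use of the CME solely for nilpotency are precisely the content implicit in that citation.
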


We thus know that $(\mloc_c,\Qcal_S)$ is a prefactorization algebra with values in $\Ch$.
(In fact, it takes values in dg commutative algebras,
as it assigns a dg commutative algebra to each open set and the structure maps all respect the dg commutative algebra structures.)
On the other hand, by Proposition~\ref{prop: mloc is fact alg}, we know that $\mloc_c$ (with zero differential) is a strict factorization algebra with values in $\Ch$.
Note, however, that a strict colimit of a diagram of cochain complexes is computed degreewise (i.e., in each degree, take the colimit of vector spaces and similarly for the differential).
Hence we immediately deduce the following.

\begin{prop}
\label{prop: obscl is a fact alg}
Moreover, there is a factorization algebra on $X$ with values in $\Ch$ that assigns the cochain complex $(\mloc_c(U), \Qcal_S)$ to each open set~$U$.
\end{prop}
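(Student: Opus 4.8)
The plan is to leverage Proposition~\ref{prop: mloc is fact alg}, which already establishes that $\mloc_c$ with zero differential is a strict factorization algebra, and then to observe that equipping it with the differential $\Qcal_S$ changes nothing at the level of the codescent condition, because strict colimits of cochain complexes are computed degreewise and therefore do not see the differential.

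First I would record that $(\mloc_c, \Qcal_S)$ is genuinely a \emph{prefactorization} algebra with values in $\Ch$, i.e.\ that $\Qcal_S$ commutes with all the structure maps $\iota_{\{U_i\};V}$. These maps are assembled from the binary maps of Lemma~\ref{lem: mloc has prefact maps} together with the precosheaf extension maps. Since $\Qcal_S$ is a derivation for the commutative product (by the lemma asserting that $\Qcal_S$ extends to $\mloc_c$) and preserves supports (by the lemma that $\Qcal_S$ preserves support of local functionals), it commutes with extension-by-zero and with the multiplication-based binary maps. One also notes that $\Qcal_S$ preserves the symmetric-degree grading $\mloc_c = \bigoplus_m \loc_c^{\langle m\rangle}$, since applying $\{S,-\}$ to an $m$-fold product yields an $m$-fold product by the Leibniz rule. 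Thus each structure map is a cochain map, and $(\mloc_c,\Qcal_S)$ is a prefactorization algebra in~$\Ch$.

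It then remains to verify the Weiss codescent condition. Fixing a Weiss cover $\{U_i\}$ of an open $U$, Proposition~\ref{prop: mloc is fact alg} tells us that the underlying graded vector space $\mloc_c(U)$ is the cokernel of the map $\bigoplus_{j,k}\mloc_c(U_j\cap U_k)\to\bigoplus_i\mloc_c(U_i)$, where one decomposes along the $m$-grading into ordinary covers $\{U_i^m\}$ of the spaces $U^m$ (using $U_i^m\cap U_j^m=(U_i\cap U_j)^m$). The key observation is that this is now a cokernel \emph{in $\Ch$} of a diagram whose arrows are cochain maps, by the previous paragraph; and strict colimits in $\Ch$ are computed degreewise. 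Hence in each cohomological degree the cokernel agrees with the graded vector space produced by Proposition~\ref{prop: mloc is fact alg}, namely $\mloc_c(U)$ in that degree, and the differential inherited on the cokernel from the $\Qcal_S$ on the diagram terms is exactly $\Qcal_S$ on $\mloc_c(U)$. Therefore the colimit in $\Ch$ recovers $(\mloc_c(U),\Qcal_S)$, establishing the strict cosheaf condition.

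The main obstacle---really the only point requiring genuine care---is the first step: checking that $\Qcal_S$ commutes with the prefactorization structure maps, in particular with the binary multiplication maps, which rests on the derivation property together with support preservation. Everything after that is formal: attaching a differential compatible with all structure maps to a pre-existing \emph{strict} factorization algebra yields a strict factorization algebra in $\Ch$, precisely because strict cokernels (unlike homotopy colimits) are computed degreewise.
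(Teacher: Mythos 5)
Your proposal is correct and follows essentially the same route as the paper: the paper likewise deduces the result by combining the prefactorization structure of $(\mloc_c,\Qcal_S)$ (via the derivation and support-preservation lemmas) with Proposition~\ref{prop: mloc is fact alg} and the observation that strict colimits in $\Ch$ are computed degreewise, so the codescent condition is insensitive to the differential. Your write-up merely makes explicit the compatibility of $\Qcal_S$ with the structure maps and the degreewise cokernel computation, which the paper leaves implicit.
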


This object is of central interest for this paper.

\begin{df}
We call $(\mloc_c, \Qcal_S)$ the {\em algebra of classical observables} given by multilocal functionals.
\end{df}

This construction recovers the net of dg Poisson algebras already constructed in the pAQFT version of the BV formalism.
Consider the subcategory ${\bf Caus}(X) \subset \Opens(X)$ of causally convex, relatively compact opens. 
If one restricts $(\mloc_c, \Qcal_S)$ to ${\bf Caus}(X)$, 
obtains the net of classical observables.
(In \cite{FR} this net is denoted $\frak{BV}$.)
See Section~\ref{review of GR} for more context.

\subsection{Extending the differential to dg local functionals}

We want to lift $\Qcal_S$ from $\loc_c$ to $\lloc_c$, and later to dg multilocal functionals.

A Lagrangian $p$-form $F = \Upsilon^p(\alpha, \mu)$ is an element of ${\rm Maps}(\Ecal, \Omega^p)$.
It is actually a {\em smooth} map and so it has a differential
\[
T(F) : T \Ecal \to T \Omega^p, 
\]
but as these are both vector spaces, $T(f)$ defines a map 
\[
\begin{array}{cccc}
T(F): & \Ecal \times \Ecal & \to & \Omega^p \times \Omega^p\\
& (\phi, \psi) & \mapsto & \left(F(\phi), \frac{\delta F}{\delta \psi}(\phi) \right)
\end{array}.
\]
Here $F(\phi) = \alpha(\jet_k(\phi))\mu$ while the directional derivative
\[
D_\psi F(\phi) = \lim_{\epsilon \to 0} \frac{F(\phi + \epsilon \psi) - F(\phi)}{\epsilon}
\]
as usual.
Let $\delta F: \Ecal \times \Ecal \to \Omega^p$ denote the projection of $T(F)$ onto the second component of the codomain,
i.e., 
\[
\delta F(\phi, \psi) = D_\psi F(\phi),
\]
which can also be seen as a smooth map from $\Ecal$ to~$\Hom_{cts}(\Ecal, \Omega^p)$,
i.e., as a smooth 1-form on $\Ecal$ with values in $\Omega^p$.
We will write $\delta F|_\phi$ to denote the ``covector'' in $\Hom_{cts}(\Ecal, \Omega^p)$,
which can be viewed as an $\Omega^p$-valued functional on the tangent space~$T_\phi \Ecal \cong \Ecal$.
There is a natural pairing between 1-forms and vector fields on any manifold, even $\Ecal$,
which we now put to good use.

\begin{df}
Define an operator $\delta_S$ on Lagrangian $p$-forms by the formula
\[
\delta_S(F) = \iota_{\Qcal_S}(\delta F),
\]
where $\iota$ denotes contraction of a vector field with a 1-form and the term $\Qcal_S$ means the cohomological vector field on $\Ecal$ defined by an action function.
\end{df}

This operator is, in fact, an endomorphism.

\begin{lemma}
For a Lagrangian $p$-form $F$, the functional $\delta_S(F)$ is also a Lagrangian $p$-form.
Moreover, the operator $\delta_S$ preserves supports.
\end{lemma}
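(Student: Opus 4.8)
The plan is to compute $\delta_S(F)$ explicitly and recognize it as a Lagrangian $p$-form of slightly higher order, reusing the jet bookkeeping already carried out in Lemma~\ref{lem: d makes sense}. First I would unwind the defining formula into a directional derivative: for any field $\phi$,
\[
\delta_S(F)(\phi) = (\delta F|_\phi)\big(\Qcal_S(\phi)\big) = D_{\Qcal_S(\phi)} F(\phi),
\]
the variation of $F$ at $\phi$ along the vector $\Qcal_S(\phi) = \omega^\sharp_\Ecal(\delta S|_\phi)$. So the content of the lemma is that contracting the (globally defined) $\Omega^p$-valued $1$-form $\delta F$ against the (globally defined) cohomological vector field $\Qcal_S$ stays inside the Lagrangian $p$-forms. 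Since both $\delta F$ and $\Qcal_S$ are intrinsic, it suffices to verify this locally; I would apply a partition of unity and work in a coordinate patch with a chosen frame, as in Lemma~\ref{lem: d makes sense}, and by linearity reduce to $F = \Upsilon^p(\alpha, \mu)$ with $\alpha$ of some finite order $k$.

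In that local description the variation is a pointwise expression in jets. Writing $\phi = (\phi_1,\dots,\phi_m)$ in the frame and using a multi-index $\beta$ (reserving $\mu$ for the $p$-form factor), one has
\[
D_\psi F(\phi) = \sum_{j}\sum_{|\beta|\le k} \frac{\partial \alpha}{\partial(\partial^\beta \phi_j)}\Big|_{\jet_k\phi}\,\partial^\beta \psi_j\,\mu,
\]
which exhibits $\delta F|_\phi$ as a local functional of the tangent direction $\psi$; this is the $p$-form analogue of the smoothness of the variation recorded in~\cite{BDGR}, and is precisely what makes the contraction below well defined even though $\Qcal_S(\phi)$ is field dependent. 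Now $\Qcal_S(\phi)=\omega^\sharp_\Ecal(\delta S|_\phi)$ is a genuine (smooth) section of $E$ whose value at a point is a local expression in the jet of $\phi$, of some fixed order $\ell$ governed by the Euler--Lagrange operator of the action. Substituting $\psi = \Qcal_S(\phi)$ gives
\[
\delta_S(F)(\phi) = \sum_{j}\sum_{|\beta|\le k} \frac{\partial \alpha}{\partial(\partial^\beta \phi_j)}\Big|_{\jet_k\phi}\,\partial^\beta\big(\Qcal_S(\phi)\big)_j\,\mu.
\]
Each summand is a smooth function on $J^{k+\ell}E$ evaluated on $\jet_{k+\ell}\phi$, multiplied by the same $p$-form $\mu$; hence the whole expression equals $\tilde\alpha(\jet_{k+\ell}\phi)\,\mu$ for an order $k+\ell$ Lagrangian $\tilde\alpha$. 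Thus $\delta_S(F) = \Upsilon^p(\tilde\alpha,\mu)$ is a Lagrangian $p$-form, and the local pieces glue because $\delta_S(F)$ was defined intrinsically.

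For the support claim, note that the $p$-form factor $\mu$ is untouched by the construction, so $\delta_S(F)(\phi)$ is supported in $\supp\mu$ for every $\phi$; moreover $\tilde\alpha$ vanishes wherever $\alpha$ does, since it is assembled from the fiber derivatives of $\alpha$ and from $\Qcal_S$, the latter being local, hence support--non-increasing, in $\phi$. Consequently $\supp(\delta_S F)\subseteq\supp F$, exactly in the spirit of the support lemma for $\Qcal_S$ on local functionals. The only step requiring genuine care is the first one --- confirming that the contraction of $\delta F$ with the field-dependent vector $\Qcal_S(\phi)$ is again a local, jet-polynomial expression rather than something distributional --- and this is secured by the locality of the variation from~\cite{BDGR} together with the smoothness and locality of $\Qcal_S$; the order bookkeeping $k\mapsto k+\ell$ is then routine.
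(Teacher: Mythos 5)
Your proof is correct and follows essentially the same route as the paper's: the paper also just unravels the definition, writing $\delta_S(F)(\phi) = \omega_{\Ecal^!}(\delta S|_\phi, \delta F|_\phi)$ and observing that this is a $p$-form on $X$ that is manifestly local (with the support claim inherited from the corresponding lemma for local functionals). Your version simply makes explicit, in jet coordinates with the order bookkeeping $k \mapsto k+\ell$, what the paper compresses into the phrase ``manifestly local.''
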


\begin{proof}
Unraveling the definitions, one finds that
\[
\Qcal_S(F)(\phi) = \omega_{\Ecal^!}(\delta S|_\phi, \delta F|_\phi),
\]
which is a $p$-form on $X$.
This formula is manifestly local, as the derivatives $\delta S$ and $\delta F$ are local.
The claim about supports is identical to the proof for local functionals.
\end{proof}

By summing over $p$, 
we now have an operator of cohomological degree one on the whole collection of local differential forms,
and we denote it simply by~$\delta_S$.

\begin{prop}
If an action functional $S$ satisfies the classical master equation (i.e., $\Qcal_S^2 = 0$ on local functionals),
then 
\begin{itemize}
\item $\delta_S^2 = 0$ on Lagrangian $p$-forms and
\item $[\d_{\lloc}, \delta_S] = 0$ on $\lloc$.
\end{itemize}
In short, $(\lloc, \d_\lloc + \delta_S)$ is a cochain complex.
\end{prop}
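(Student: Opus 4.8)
The plan is to read $\delta_S$ and $\d_\llag$ as the two differentials of (the bottom row of) the variational bicomplex: $\delta_S$ differentiates in the field direction along the cohomological vector field $\Qcal_S$, while $\d_\llag$ is the spacetime exterior derivative, with the de Rham forms $\Omega^\bullet_X$ appearing as inert coefficients for $\delta_S$. These two operators act in independent directions, and this is the conceptual reason they (anti)commute and each squares to zero; the proof simply makes that precise and inserts the correct signs.

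First I would reduce the identity $[\d_\lloc,\delta_S]=0$ for the full differential $\d_\lloc=\d_\llag\pm\d_{DR}+\iota$ to the single assertion $[\d_\llag,\delta_S]=0$. Every element of the mapping-cone summand $\Omega^\bullet_X[n+1]$ is a \emph{field-independent} Lagrangian form, so its field-space differential $\delta(\cdot)$ vanishes and hence $\delta_S=\iota_{\Qcal_S}(\delta(\cdot))$ annihilates it. Consequently $\delta_S$ acts as the field derivation on the $\llag$-summand and as zero on the $\Omega^\bullet_X$-summand, so the composites of $\delta_S$ with $\iota$ and with $\d_{DR}$ all vanish. For the remaining piece I would compute directly from the two facts established above, namely $(\d_\llag F)(\phi)=\d(F(\phi))$ and $\delta_S(F)(\phi)=\delta F|_\phi(\Qcal_S(\phi))$. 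Since the exterior derivative $\d$ on $X$ is linear and $\phi$-independent, it commutes with the directional derivative $D_\psi$ for each fixed $\psi$; evaluating at $\psi=\Qcal_S(\phi)$ (which is linear in $\psi$) yields $\delta_S(\d_\llag F)(\phi)=\d\big(\delta F|_\phi(\Qcal_S(\phi))\big)=\d_\llag(\delta_S F)(\phi)$. Thus the naive operators \emph{commute}; the sign built into $\d_\lloc$ (the $\pm$ that makes the horizontal and vertical differentials of the variational bicomplex anticommute) promotes this to the graded-commutator identity $[\d_\lloc,\delta_S]=0$ required to assemble a total differential.

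Next I would prove $\delta_S^2=0$. With $\Omega^p_X$ inert, $\delta_S$ is exactly the derivation $\Qcal_S$ applied in the field direction, $\delta_S\big(\alpha(\jet_k\phi)\,\mu\big)=\big(\Qcal_S\,\alpha(\jet_k\phi)\big)\mu$, so on Lagrangian $p$-forms $\delta_S^2=\tfrac12[\Qcal_S,\Qcal_S]$. The classical master equation asserts $\Qcal_S^2=0$, i.e.\ that the self-bracket of the odd, \emph{local} vector field $\Qcal_S$ vanishes. Combining the three vanishings gives $(\d_\lloc+\delta_S)^2=\d_\lloc^2+[\d_\lloc,\delta_S]+\delta_S^2=0$, so $(\lloc,\d_\lloc+\delta_S)$ is a cochain complex.

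The hard part will be promoting the classical master equation from its integrated form (given as a hypothesis as $\Qcal_S^2=0$ on the functionals $\loc_c$) to the pointwise, density-level statement $[\Qcal_S,\Qcal_S]=0$ needed for $\delta_S^2=0$ on $p$-forms before integration; this uses that $\Qcal_S$, and hence $[\Qcal_S,\Qcal_S]$, is a local vector field, so that annihilating every local functional with arbitrary cutoff forces it to vanish as a vector field. The other delicate point is purely bookkeeping: tracking the Koszul signs so that $\d_\llag$ and $\delta_S$ anticommute (rather than merely commute) in the total complex, consistent with the $\pm$ appearing in the definition of $\d_\lloc$.
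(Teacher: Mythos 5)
Your proof is correct and follows essentially the same route as the paper's: both arguments rest on the observation that $\delta_S$ (the field-space Lie derivative along $\Qcal_S$) and the spacetime exterior derivative act in independent directions --- the paper phrases this as the two operators living on separate factors of the sheaf $C^\infty_\Ecal \boxtimes \Omega^*_X$ on $\Ecal \times X$ --- combined with $L_{\Qcal_S} \circ L_{\Qcal_S} = 0$ from the classical master equation. Your write-up is more explicit than the paper's two-line proof (the mapping-cone terms $\iota$ and $\d_{DR}$, the Koszul sign making commutation into graded anticommutation, and the promotion of the CME from a statement on local functionals to a statement about the vector field are all left implicit there), but the underlying ideas coincide.
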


A systematic treatment is found in \cite{Get21} at Proposition~2.3.

\begin{proof}
The first assertion is immediate: $\Qcal_S^2 = 0$ implies $L_{\Qcal_S} \circ L_{\Qcal_S} = 0$.
The second follows because these operators act, in a sense, on separate tensor factors of $\lloc$.
More explicitly, observe that on the product space $\Ecal \times X$, 
there is a sheaf of dg commutative rings $C^\infty_\Ecal \boxtimes \Omega^*_X$.
The operator $\Qcal_S$ only acts on the first factor, and the de Rham derivative only acts on the second factor,
and hence they manifestly commute.
\end{proof}

This construction extends without difficulty from $\lloc_c$ to $\mlloc_c$, 
the dg multilocal functionals.
Note that $\mlloc_c$ is a graded-commutative algebra with respect to the following multiplication.
Consider a functional $F$ homogeneous of degree $m$ so that
\[
F(\phi) = \int_{X^m} \alpha(\jet_k \phi(x_1),\ldots, \jet_k \phi(x_m)) \mu
\]
with $\alpha$ a function on the total space of a bundle ${\rm J^k} E^{[m]} \to X^m$ and $\mu$ a smooth density on $X^m$.
Let $G$ be homogeneous of degree $n$ so that
\[
G(\phi) = \int_{X^n} \beta(\jet_\ell \phi(x_1),\ldots, \jet_\ell \phi(x_m)) \nu
\]
with $\beta$ a function on the total space of a bundle ${\rm J^\ell} E^{[m]} \to X^m$ and $\nu$ a smooth density on $X^m$.
Then the product $F G$ is homogeneous of degree $m + n$ and is defined by
\[
(FG)(\phi) = \int_{X^{m+n}}\alpha(\jet_k \phi(x_1),\ldots, \jet_k \phi(x_m))  \beta(\jet_\ell \phi(x_{m+1}),\ldots, \jet_\ell \phi(x_{m+n})) \mu \wedge \nu.
\]
One extends $\delta_S$ to $\mlloc_c$ as a derivation with respect to this multiplication.
Concretely that means one applies $\delta_S$ along each factor of $X$ in a product space~$X^m$.

\subsection{The factorization algebra of classical observables from multilocal differential forms}

We now describe how to build a homotopy factorization algebra of classical observables by using the multilocal differential forms.

\begin{df}
For a solution $\mathcal{L}$ that satisfies the classical master equation~\eqref{eqn:cme}, 
let $\Obs^{cl}$ denote the prefactorization algebra on $X$ of {\em classical observables} for the theory.
It takes values in $\Ch$, and it assigns to each open set $U$, the cochain complex $(\mlloc_c(U), \d_\mlloc + \delta_S)$.
\end{df}

The key point is that the operator $\delta_S$ preserves the support of any Lagrangian $p$-forms and more generally of any multilocal differential form. 
Hence on each open set $U$,  it defines a deformation of $\mlloc_c(U)$ as a prefactorization algebra.
Moreover, the deformation intertwines with the structure maps of $\mlloc_c$:
for inclusions (i.e., unary maps), this is an immediate consequence of the support-preservation property, 
and it holds for the higher arity maps, which are defined by composing inclusions
(first into the disjoint union and then into the larger target open).

In fact, this deformation remains a homotopy factorization algebra.

\begin{prop}
The construction $\Obs^{cl}$ is a homotopy factorization algebra.
\end{prop}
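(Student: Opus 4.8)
The plan is to upgrade the proof that $\mlloc_c$ is a homotopy factorization algebra by carrying the deformation $\delta_S$ through the very same spectral-sequence argument. The discussion preceding this proposition already establishes that $\Obs^{cl}$ is a \emph{prefactorization} algebra, so the only thing left is homotopy codescent with respect to the Weiss topology. The key structural observation I would exploit is that $\delta_S$ preserves the decomposition $\mlloc_c = \bigoplus_{m} \lloc_c^{[m]\bullet}$ into polynomial degrees. Indeed, $\delta_S$ is extended from $\Qcal_S = \{S,-\}$ as a derivation for the multiplication that sends degree $m$ times degree $n$ to degree $m+n$; since $\Qcal_S$ sends a local functional to a local functional (the lemma in Section~\ref{sec: dynamics}), it sends degree-one generators to degree-one generators, and hence $\delta_S = \sum_{i=1}^m \delta_S^{(i)}$ acts factor-by-factor and stays within the summand $\lloc_c^{[m]\bullet}$ living on $X^m$. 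Because direct sums are colimits and commute with cohomology and with homotopy colimits, it suffices to prove, for each fixed $m$, that $(\lloc_c^{[m]\bullet}(U^m), \d_\mlloc + \delta_S)$ satisfies homotopy codescent on $X^m$ for the cover $\{U_i^m\}_i$ induced by a Weiss cover $\{U_i\}_i$ of $U$, exactly as in the proof that $\mlloc_c$ is a homotopy factorization algebra.

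For fixed $m$, I would form the same \v{C}ech double complex used in the proof of Proposition~\ref{lem: lloc is homotopy cosheaf}, but now take the internal differential to be the full deformed differential $D := \d_\mlloc + \delta_S$ rather than $\d_\mlloc$ alone. The homotopy colimit of $\Obs^{cl}$ over the \v{C}ech nerve of $\{U_i^m\}_i$ is then modeled by the totalization of a double complex whose two differentials are the \v{C}ech differential (in the cover direction) and $D$ (in the internal direction); here $D^2 = 0$ holds precisely because the action satisfies the classical master equation, which gives $\delta_S^2 = 0$ and $[\d_\mlloc, \delta_S] = 0$ by the preceding proposition.

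The crucial point is that the contracting homotopy trivializing the \v{C}ech direction is built from a partition of unity on the spacetime and acts only in the cover direction, so it is entirely insensitive to the internal differential $D$. I would therefore run the spectral sequence of this double complex filtered by internal degree, so that $d_0$ is the \v{C}ech differential. For each fixed internal layer $\lloc_c^{[m]p}$---which is merely the compactly supported sections of a vector bundle on $U^m$---the partition-of-unity homotopy shows the \v{C}ech complex is acyclic except in \v{C}ech degree zero, where it returns the naive global sections $\lloc_c^{[m]p}(U^m)$. Hence $E_1$ is concentrated in the single column of \v{C}ech degree zero, with $d_1 = D$, and the spectral sequence collapses there. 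This identifies the homotopy colimit with $(\lloc_c^{[m]\bullet}(U^m), \d_\mlloc + \delta_S)$, i.e.\ with the degree $m$ summand of $\Obs^{cl}(U)$; summing over $m$ completes the argument.

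The step warranting the most care---and the main potential obstacle---is the bookkeeping and convergence of the spectral sequence when the Weiss cover is infinite. Since the internal complex $\lloc_c^{[m]\bullet}$ is bounded (concentrated in degrees $-mn-1$ to $0$), the collapse of $E_1$ onto a single column guarantees degreewise convergence, so no subtle higher inverse-limit issues arise. The one genuine thing to verify is that the partition-of-unity homotopy really does commute past $\delta_S$; this holds because $\delta_S$ is support-preserving and acts on the field-dependence while the homotopy merely multiplies by functions on $X^m$. This commutation is exactly the mechanism that lets the deformation be absorbed into the $d_1$ differential rather than obstructing codescent, so that the deformation of the differential does not disturb the homotopy cosheaf property.
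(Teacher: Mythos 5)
Your proof is correct, but it runs the spectral-sequence comparison with a different filtration than the paper does, and it re-proves rather than cites the undeformed statement. The paper's proof is shorter and more modular: it views $\Obs^{cl}(U)$ as the totalization of a double complex (form direction with $\d_\mlloc$, BV direction with $\delta_S$), filters the \v{C}ech diagram so that $\d_\mlloc$ is applied first, and then observes that on the resulting page the comparison map is exactly the map already proven to be a quasi-isomorphism when $\mlloc_c$ was shown to be a homotopy factorization algebra; the map of spectral sequences is an isomorphism from that page on, so the total map is a quasi-isomorphism. You instead filter so that the \v{C}ech differential goes first: you decompose by polynomial degree $m$ (legitimate, since $\delta_S$ acts factor-by-factor on $X^m$ and so preserves each summand $\lloc_c^{[m]\bullet}$ --- this is exactly how the paper extends $\delta_S$ as a derivation), contract the \v{C}ech direction layer-by-layer with partitions of unity, and let the full deformed differential $D = \d_\mlloc + \delta_S$ appear as the $d_1$ differential on the single surviving column. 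Both arguments are sound; each needs the convergence remark you make, and each needs $D$ to commute with the extension maps, which follows from support-preservation. What the paper's route buys is reusability --- the identical pattern recurs for the quantum observables, with the $\hbar$-filtration in place of the BV filtration --- while yours is more self-contained and makes explicit two facts the paper leaves implicit: that $\delta_S$ respects the polynomial-degree decomposition, and that the partition-of-unity homotopy does not interfere with $\delta_S$ because $\delta_S$ is $C^\infty(X^m)$-linear, acting only on the field dependence. One small remark: your final worry about the homotopy ``commuting past $\delta_S$'' is not actually needed once the spectral sequence is set up, since the homotopy is invoked only at the $E_0$ stage, where the differential is purely the \v{C}ech one; the spectral-sequence machinery then absorbs $D$ into $d_1$ automatically.
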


\begin{proof}
For any open $U$, $\Obs^{cl}(U)$ can be seen as the totalization of a double complex.
Hence, for any Weiss cover $\Vcal$ of an open $V$, the \v{C}ech diagram can be seen as the totalization of a diagram of double complexes.
It maps to $\Obs^{cl}(V)$ by the structure maps.

Consider the spectral sequence for these double complexes that applies $\d_\mlloc$ first.
We have shown already that this \v{C}ech diagram is quasi-isomorphic to $\mlloc(V)$.
Hence as the map of spectral sequences becomes a quasi-isomorphism,
the original map from the \v{C}ech diagram for $\Obs^{cl}$ to $\Obs^{cl}(V)$ is also a quasi-isomorphism.
\end{proof}

\subsection{Fiberwise analyticity}
\label{analyticity of obscl}

It is often useful to restrict to multilocal functional or differential forms that are analytic functions along the fibers of jet bundles,
because most observables of interest to physicists are fiberwise analytic,
and because the renormalization procedure we use demands it.
Restricting the class of observables to being fiberwise analytic does, however, impose a restriction on the allowed dynamics:
we need the action functional $S$ to be fiberwise analytic as well, so that the cohomological vector field $\Qcal_S$ preserves fiberwise analyticity of multilocal functionals.
With that proviso, all the preceding work admits an immediate analog.

\begin{df}
For a fiberwise analytic Lagrangian density $\mathcal{L}$ that satisfies the classical master equation~\eqref{eqn:cme}, 
let $\Obs^{cl, f\omega}$ denote the prefactorization algebra on $X$ of {\em fiberwise analytic classical observables}.
It takes values in $\Ch$, and  it assigns to each open set $U$, the cochain complex $(\mlloc_c^{f\omega}(U), \d_\mlloc + \delta_S)$.
\end{df}

As fiberwise analyticity is a local condition along the spacetime manifold, 
the argument already given for all observables simply carries over.

\begin{prop}
The construction $\Obs^{cl, f\omega}$ is a homotopy factorization algebra.
\end{prop}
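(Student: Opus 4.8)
The plan is to realize $\Obs^{cl,f\omega}$ as a sub-dg-prefactorization-algebra of $\Obs^{cl}$ and then rerun, essentially verbatim, the homotopy-codescent argument already given for $\Obs^{cl}$. Concretely, $\mlloc_c^{f\omega}(U)$ is the subcomplex of $\mlloc_c(U)$ obtained degreewise by replacing each sheaf $\pi^k_* C^\infty_{{\rm J}^k E^{[m]}}$ by its fiberwise analytic subsheaf $\pi^k_* C^{f\omega}_{{\rm J}^k E^{[m]}}$ (cf. Section~\ref{fbrwsanalytic dg v} and Section~\ref{analyticity for mloc}). There are thus exactly three things to check: that the structure maps of $\mlloc_c$ preserve fiberwise analyticity, that $\d_\mlloc$ does, and that $\delta_S$ does. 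Once these hold, $(\mlloc_c^{f\omega}, \d_\mlloc + \delta_S)$ is a prefactorization subalgebra of $\Obs^{cl}$, and the relations $\delta_S^2 = 0$ and $[\d_\mlloc, \delta_S] = 0$ are inherited from the ambient complex.

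First I would dispose of the two ``kinematic'' checks, which are both transparent because fiberwise analyticity is a condition along the fibers only. The prefactorization structure maps are built from wedge products and pushforwards along the inclusions of the components $U^m \times (U')^{m'}$ of $(U \sqcup U')^{m+m'}$; on Lagrangians this amounts to multiplying the defining functions, and a product of fiberwise analytic functions is again fiberwise analytic, so the structure maps close on $\mlloc_c^{f\omega}$. For $\d_\mlloc$, the coordinate formula~\eqref{defofdlag} shows that $\d_\llag$ differentiates a Lagrangian $\alpha$ both in the base coordinates $x_i$ and in the fiber coordinates $\partial^\mu \phi_j$; a derivative of a fiberwise analytic function in either set of variables is again fiberwise analytic (base derivatives leave the fiber dependence analytic, and fiber derivatives of a convergent power series are again convergent). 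The $\pm \d_{DR}$ and $\iota$ summands of $\d_\mlloc$ act only on the de Rham factor and so manifestly preserve the condition.

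Second I would check that $\delta_S$ preserves $\mlloc_c^{f\omega}$, which is precisely where the standing hypothesis that the Lagrangian density $\Lcal$ (hence $S$ and $\Qcal_S$) is fiberwise analytic is used. As in the proof for Lagrangian $p$-forms, $\delta_S(F) = \iota_{\Qcal_S}(\delta F)$ is computed by the pointwise pairing $\omega_{\Ecal^!}(\delta S|_\phi, \delta F|_\phi)$; since $\delta S$ and $\delta F$ are fiberwise analytic whenever $S$ and $F$ are, the contraction stays fiberwise analytic. With this, the homotopy-codescent argument transfers without change: for any open $U$, $\Obs^{cl,f\omega}(U)$ is the totalization of the double complex with differentials $\d_\mlloc$ and $\delta_S$; for a Weiss cover $\Vcal$ of $V$ the \v{C}ech diagram is a totalization of a diagram of such double complexes mapping to $\Obs^{cl,f\omega}(V)$. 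On the page that applies $\d_\mlloc$ first, the essential input is that $\mlloc_c^{f\omega}$ is itself a homotopy cosheaf on $X$ (equivalently, that each degree $m$ summand $\lloc_c^{f\omega,[m]\bullet}$ is a homotopy cosheaf on $X^m$). This follows from the proof of Proposition~\ref{lem: lloc is homotopy cosheaf} verbatim: the contracting homotopy there is built from a partition of unity $\{\psi_i\}$ on $X$, and multiplication by the smooth functions $\psi_i$ acts only in the base direction, hence preserves fiberwise analyticity and leaves the \v{C}ech--de Rham collapse intact. The induced map of spectral sequences is therefore a quasi-isomorphism on that page, and so is the comparison map from the \v{C}ech diagram to $\Obs^{cl,f\omega}(V)$.

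The main obstacle here is bookkeeping rather than substance: the one genuinely new point is confirming that $\delta_S$ closes on the fiberwise analytic subspace, which hinges essentially on the analyticity of $\Qcal_S$ and would fail for a non-analytic action. Everything else is the observation, already flagged in Section~\ref{fbrwsanalytic dg v}, that fiberwise analyticity is local along $X$ and hence transparent to every local-to-global argument invoked above.
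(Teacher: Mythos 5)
Your proof is correct and takes essentially the same approach as the paper: the paper's own argument is exactly that fiberwise analyticity is a local condition along the spacetime manifold, preserved by the structure maps, by $\d_\mlloc$, and by $\delta_S$ (under the standing hypothesis that the action is fiberwise analytic), so that the \v{C}ech/spectral-sequence argument for $\Obs^{cl}$ carries over verbatim. You have merely spelled out the closure checks and the partition-of-unity compatibility that the paper leaves implicit in its one-sentence proof.
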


We will restrict to fiberwise analytic multilocal functionals in the next section, 
so we drop the superscript $f\omega$ in the remainder of the paper.

\section{The quantum observables}
\label{sec obsq}

In this final section, we recall what it means to quantize perturbatively following the Batalin-Vilkovisky formalism, as proposed in \cite{FR3}. Along the way we accrue results that let us conclude by showing that the quantum observables form a factorization algebra.

\subsection{Factorization algebras of regular polynomials for free quantum theories}
\label{review of GR}

The main result of \cite{GR20} is that a free quantum theory, as defined above, has a factorization algebra of observables and that it contains the information encoded by the theory's net of observables.
The construction in this paper employs different techniques, but it borrows some key ideas from our prior work,
so we quickly review what we need.

The key observation of that paper was that the relation between local nets and factorization algebras is established by means of the time-ordering operator $\TT$, which in the non-renormalized case is given by
\be\label{Tmap}
\TT=e^{\frac{1}{2}\partial_{G^{\rm F}}},
\ee
where $G^{\rm F}$ is the Feynman propagator and, for an integral kernel $G$, we use the notation $\partial_G$ to denote the differential operator 
\[
\partial_G F\doteq \iota_G(F^{(2)})\,,
\]
where $\iota_G(F^{(2)})(\ph)=\left<G,F^{(2)}(\ph)\right>$.

The map $\TT$ provides us with the cochain isomorphism
\[
\Acal
\xto{\TT} (PV[[\hbar]],\delta_{S})\,,
\]
where 
\[
\Acal(U) = (PV(U)[[\hbar]],\delta_{S}-i\hbar \triangle,\{-,-\}).
\]
We use $\delta_S$ for the classical BV differential of the free theory with action $S$ (denoted by $\d$ in \cite{GR20}), we use $\triangle$ for the BV Laplacian, and we use $PV$ for the space of polynomial polyvector fields, where degree $k$-polyvector field are defined by
\[
PV^{-k}(U) = \bigoplus_{n\geq 0} \Dcal_{n+k}(U)_{S_n \times S_k}\,.
\]
In the pAQFT language, $PV$ is understood as the space of regular polynomial smooth functionals on $\Ecal$ (in \cite{GR20} we use the notation $\mathfrak{PV}_{\pol}$ for these objects). It arises from  the pAQFT functor 
$$\mathfrak{A}_{\pol}(U)=(PV(U)[[\hbar]],\star,\delta_S)\big|_{{\bf Caus}(X)}\,,$$
which is defined in the first instance only on causally convex, relatively compact opens (giving rise to the subcategory ${\bf Caus}(X)$). Hence $(PV(U)[[\hbar]],\delta_S)|_{{\bf Caus}(X)}$ is obtained from $\mathfrak{A}_{\pol}$ after applying the forgetful functor to cochain complexes.

The key observation is that on the factorization algebra side we work with $PV$ equipped with the usual graded-commutative product $\cdot$, 
and on the pAQFT side, we use $\TT$ to deform it to the graded-commutative product $\T$ defined by
\be\label{Tprod}
F\T G\doteq \TT(\TT^{-1}F\cdot \TT^{-1} G)\,,
\ee
where $F,G\in PV$.
We can also use $\TT$ to deform the differential $\delta_{S}$ and introduce
\be\label{intertw}
\hat{s}=\TT^{-1}\circ \delta_{S} \circ \TT\,.
\ee
On $PV$, by explicit computation, one finds that
\[
\hat{s}=\delta_{S}-i\hbar \Lap\,,
\]
which is exactly the differential in the factorization algebra $\Acal$.
One can summarize the relation between the products and differentials on both sides of the comparison as follows:
\[
(\hat{s}=\delta_{S_0}-i\hbar \Lap,\cdot)\xto{\TT} (\delta_{S_0},\T)\,.
\]
On the left-hand side, the differential is deformed and the product remains as $\wedge$. On the right-hand side, the differential is left as $\delta_{S_0}$ and the product is deformed. The equivalence provided by $\TT$ means that both constructions lead to the same cochain complex of quantum observables.

\subsection{Factorization algebras of multilocal polynomials in free and interacting quantum theory}

In this paper we work with a larger class of observables, in order to treat interacting theories.
This change requires us to modify the approach from the free theories.

\subsubsection{Motivation and strategy}

The main result of this paper is a direct generalization of what we discussed in Section~\ref{review of GR}, i.e. the construction of factorization algebras of free and interacting quantum theories, with the use of Epstein-Glaser renormalisation.
 
Note that $\Tcal$ given by means of formula \eqref{Tmap} cannot be applied to local functionals that are at least quadratic in the fields, 
due to singularities of the Feynman propagator~$G^{\rm F}$. 
Hence we cannot  use the formulas from Subsection~\ref{review of GR} to extend the factorisation algebra of free field theory to general \emph{multilocal} observables, 
even in a free field theory.
Thus, we deploy an Epstein-Glaser renormalization process \cite{EG} on the pAQFT side and use a renormalized version of time-ordering $\TT$ given in Definition~\ref{df: T map} to {\em define} a factorization algebra.

We will see that this approach produces an interesting deformation of the factorization algebra $\Obs^{cl}$ of classical observables defined above, 
in which the structure maps are unchanged but the differential changes.
This deformation realizes one version of the slogan ``turn on the BV Laplacian and solve the quantum master equation.'' 
The resulting functor $\Obs^q_{\rm free}$ is constructed in Section~\ref{sec: qobs free}.


The next step is the generalization to interacting theories, already sketched in~\cite[8.1]{GR20}. 
Given a quadratic $S_0$ and the (higher than quadratic) interaction $V\in\loc_c(X)$, 
we define the renormalized interacting BV operator by
\be\label{def:qBVop:int}
\hat{s}(F):=e^{-i\lambda V/\hbar}\, \TT^{-1}\circ s_0 \circ \TT(e^{i\lambda V/\hbar} F)=e^{-i\lambda V/\hbar}\, \hat{s}_0 (e^{i\lambda V/\hbar} F)\,,
\ee
Assuming an extra renormalization condition called the \textit{quantum master equation} (QME) 
and using the \textit{master Ward identity}, proven by \cite{BreDue,H}, 
we will show that 
\[
\hat{s}_V=\delta_{S_0}+\{\cdot, V\}-i\hbar \Lap\,,
\]
where $\Lap$ is the renormalized BV Laplacian.

\begin{rem}
Given that the QME holds, we could also express $\hat{s}_V$ by means of the intertwining operator 
$$R_V(F)\doteq (e_{\TT}^{iV/\hbar})^{\star -1}\star(e_{\TT}^{iV/\hbar}\T F).$$
It was shown in \cite{FR3} that assuming the QME, we have
$$\hat{s}_V:=R_V^{-1}\circ\delta_{S_0}\circ R_V\,.$$
Hence, again, we obtain a deformation of the classical BV differential $\delta_{S_0}+\{\cdot, V\}$ through deformation of the product, this time using~$R_V$.
\end{rem}

\subsubsection{The anomalous master Ward identity}

We now recall a key result, known in the literature as the {\em anomalous master Ward identity} (AMWI).
It leads to an explicit formula for the renormalized $\hat{s}_0$  and the renormalized $\hat{s}_V$, also guaranteeing the locality of the latter.

\begin{thm}[AMWI]\label{thm: AMWI}
Let $L_0$ be a generalized Lagrangian and $f$ a test function. For any local compactly supported functional $F\in \loc_c$,
there is a local functional $A_{\lambda F}$, known as the {\em anomaly}, satisfying
\be
\label{MWI:inf:antif}
\delta_{L_0(f)}(\TT e^{i F/\hbar})=\tfrac{i}{\hbar}\TT\left(e^{i\lambda F/\hbar}\left(\tfrac{1}{2}\{L_0(f)+F,L_0(f)+F\}-i\hbar A_{F}\right)\right)\,,
\ee
the so-called {\em anomalous master Ward identity}. The anomaly is given in terms of maps $A_n:\lloc_c^{[n]}\rightarrow \overline{\mlloc}_{\mc}[[\hbar]]$, which are of order at least $\hbar^{n-1}$, and are local with the image supported on the small (or total) diagonal.  
For $F=\int_{X} \alpha$, where $\alpha$ is a top form in $\lloc_c^{[n]}$, we have 
\[
A_{F}=\sum_{n=1}^\infty \int_{X^n} A_n(\alpha^{\otimes n})\,.
\]
\end{thm}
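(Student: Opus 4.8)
The plan is to adapt the inductive argument of Brennecke--Dütsch and Hollands to our multilocal differential-form setting, by measuring the obstruction to commuting the classical BV derivation $\delta_{L_0(f)}$ past the renormalized time-ordering $\TT$. First I would expand the time-ordered exponential as $\TT e^{iF/\hbar} = \sum_{n\geq 0}\frac{1}{n!}(i/\hbar)^n \TT_n(F^{\otimes n})$ and compute the action of $\delta_{L_0(f)}$ on each term. The field-independence axiom \textbf{(T5)}, in the form $\delta_\psi \TT_n(G) = \sum_i \TT_n(\delta_{\phi_i}G)$, lets me commute the functional derivatives defining $\delta_{L_0(f)}$ inside each $\TT_n$, reducing the computation to how the free equations of motion encoded by $L_0(f)$ act on the integrand. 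Collecting terms, the contributions assemble---up to a remainder supported on the small diagonal of $X^n$---into the antibracket $\tfrac12\{L_0(f)+F, L_0(f)+F\}$ appearing on the right-hand side; this reproduces the classical master Ward identity, and it is where the classical master equation for $L_0$ enters.

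Second, I would show that the remainder is genuinely local, i.e. supported on the total diagonal of each $X^n$. Away from the total diagonal, the causal factorization axiom \textbf{(T3)} expresses $\TT_n$ as iterated $\star$-products of lower products $\TT_k$, and on this region the naive Leibniz computation for $\delta_{L_0(f)}$ against $\star$ closes exactly, with no obstruction, because the advanced/retarded propagators and the two-point function satisfy the relevant identities off the diagonal. Hence the discrepancy between the two sides vanishes outside the small diagonal and, by the covariance axiom \textbf{(T7)}, defines natural local maps $A_n\colon \lloc_c^{[n]}\to \overline{\mlloc}_{\mc}[[\hbar]]$ whose integrands are carried there. This localization must be run inductively in $n$, mirroring the existence proof for the $\TT_n$ themselves: at each stage the lower $A_k$ are already controlled off the diagonal, and one records the new diagonal contribution as $A_n$.

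Third, I would track the order in $\hbar$. Each Wick contraction appearing in the operator $D_W$ that defines $\star$ and $\TT$ carries one power of $\hbar$, and a diagonal contribution at level $n$ not already accounted for by the $A_k$ with $k<n$ requires the $n$ insertion points to be linked, hence at least $n-1$ contractions; together with the $\ph$-locality axiom \textbf{(T6)}, which bounds how many Taylor coefficients of the argument enter at each order, this forces $A_n = \Ocal(\hbar^{n-1})$. Summing the diagonal contributions as $A_F = \sum_{n\geq 1}\int_{X^n} A_n(\alpha^{\otimes n})$ for $F = \int_X\alpha$ then reproduces the stated identity, and the Action Ward Identity \textbf{(TT9)} ensures the whole construction is compatible with $\d_\mlloc$, so that it descends correctly from multilocal differential forms to functionals.

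The hard part will be the second step: proving that the discrepancy is \emph{exactly} a local term carried on the small diagonal, with no nonlocal tail. This is a distributional-extension argument---the very analytic mechanism that makes the renormalized $\TT_n$ ambiguous in the first place is precisely what $A_n$ records---and it must be established by induction on $n$ in lockstep with the causal-factorization data. Passing to multilocal differential forms rather than functionals adds only a bookkeeping layer (checking degree-preservation and compatibility with $\d_\mlloc$) and introduces no new analytic difficulty beyond the functional case treated in \cite{BreDue,H}.
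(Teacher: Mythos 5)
The paper never proves this theorem: it is recalled as a known result, with the proof attributed to Brennecke--D\"utsch \cite{BreDue} and Hollands \cite{H}, and the surrounding text only remarks that the locality of the anomaly is ``not immediate from the definitions.'' So there is no in-paper argument to compare against; the relevant comparison is with the cited literature, and your sketch is a faithful outline of exactly that strategy---expansion of the time-ordered exponential combined with field independence \textbf{(T5)}, an induction on $n$ that uses causal factorization \textbf{(T3)} to show the order-$n$ discrepancy not already accounted for by lower-order anomalies is carried on the total diagonal, and the connectedness count giving $A_n=\Ocal(\hbar^{n-1})$. You also correctly identify the inductive distributional-localization step as the real technical content.

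One claim deserves pushback: you assert that passing from functionals to multilocal differential forms ``adds only a bookkeeping layer'' with ``no new analytic difficulty.'' The authors explicitly disagree. In the introduction they state that they do \emph{not} construct a homotopy factorization algebra of quantum observables precisely because it would require proving a version of the anomalous master Ward identity at the level of multilocal differential forms, and that such a result ``involves a number of new ideas and techniques'' deferred to future work. The theorem as stated is calibrated to avoid this: the identity \eqref{MWI:inf:antif} itself lives at the level of functionals ($F\in\loc_c$, and both sides are functionals), and only the anomaly \emph{coefficients} $A_n$ are recorded as maps on $\lloc_c^{[n]}$---something the literature already provides, since the construction of the $t_n$ and $\Tb_n$ takes place at the level of distributional forms via the Wick expansion and the Action Ward Identity. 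A genuinely form-level identity, which is what your final paragraph implicitly promises, is exactly what the paper says remains open. Your outline is therefore a sound reconstruction of the proof of the stated (functional-level) theorem, but you should either restrict your closing claim accordingly or flag the form-level lift as an open problem rather than bookkeeping.
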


The anomaly measures the deviation from the MWI.
Remarkably, this failure is itself local, which is not immediate from the definitions. 
In \cite{H} it was also proven that in good cases, 
one can remove the anomaly by redefining the time-ordered product.

Let $L_I$ be the interaction Lagrangian and $L=L_0+\lambda L_I$. 
Following \cite{FR3}, we formulate the {\it quantum master equation} (QME) as
\begin{equation}
\label{eq:QME}
\delta_{L_0(f)}(\TT e^{i\lambda L_I(f)/\hbar})=0\,,
\end{equation}
or, equivalently,
\begin{equation}\label{eq:dgQME2}
\tfrac{1}{2}\{ L(f), L(f)\}-i\hbar A_{\lambda L_I(f)}=0\,.
\end{equation}
The anomaly term $A_{\lambda\Lcal_I}$ is thus closely related to the renormalized BV Laplacian, 
so it is useful to introduce the following suggestive notation for its derivative:
\[
\Delta_{\lambda L_I(f)}(F)\doteq \frac{d}{d\lambda} A_{\lambda L_I(f)+\mu F}\Big|_{\mu=0}\,,
\]
where $F$ is a local functional.

Theorem~\ref{thm: AMWI} allows us to compute $\hat{s}_0\doteq \TT^{-1}\circ\, \delta_{L_0(f)}\circ \TT$ on multiolcal functionals of the form $F_1\cdot\ldots \cdot F_n$, where $F_1,\dots,F_n\in\loc_c$, 
using the formula
\[
F_1\cdot\ldots \cdot F_n=\left(\frac{\hbar}{i}\right)^n\frac{d^n}{d\lambda_1\dots d\lambda_n} e^{\frac{i}{\hbar}(\lambda_1F_1+\dots+\lambda_n F_n)}\big|_{\lambda_1=\dots \lambda_n=0}\,.
    \] 
Denote $f\equiv \lambda_1F_1+\dots+\lambda_n F_n $.
The result of the calculation is
    \[
            \hat{s}_0(F_1\cdot\ldots \cdot F_n)=\left(\frac{\hbar}{i}\right)^{n-1}\frac{d^n}{d\lambda_1\dots d\lambda_n} \left(e^{i F/\hbar}\left(\tfrac{1}{2}\{L_0(f)+F,L_0(f)+F\}-i\hbar A_{F}\right)\right)\big|_{\lambda_1=\dots \lambda_n=0}\,.
\]
This motivates the following definition:
\be\label{eq: Delta multiloc}
\Delta_0(F_1\cdot\ldots \cdot F_n)\doteq \left(\frac{\hbar}{i}\right)^{n-1}\frac{d^n}{d\lambda_1\dots d\lambda_n} \left(e^{i F/\hbar}\left(\tfrac{1}{2}\{F,F\}-i\hbar A_{F}\right)\right)\big|_{\lambda_1=\dots \lambda_n=0}\,,
\ee
which also generalizes to interacting theory as well; see formula (48) in~\cite{Rej13}. 
Notice a crucial property here:
if the $F_j$ have pairwise disjoint support, then
\[
\Delta_0(F_1\cdot\ldots \cdot F_n) = \sum_{j=1}^n \pm F_1 \cdot \ldots \Delta_0(F_j) \ldots \cdot F_n
\]
because $A_{F}$ vanishes as it is supported on the small diagonal.

\subsubsection{Quantum observables of the free theory}
\label{sec: qobs free}

In this section we construct the factorization algebra of quantum observables the free theory that contains polynomial multilocal functionals, 
extending the results of \cite{GR20}, where we only treated the case of regular polynomials. 
The essential idea is to replace the differential on the classical observables with $\hat{s}_0\doteq \TT^{-1}\circ\,  \delta_{S_0}\circ \TT$, where $\TT$ is given by Definition~\ref{df: T map}.

One must show that this new differential preserves supports to know that one still has a factorization algebra.
Using Theorem~\ref{thm: AMWI} and the definition of $\Delta_0$ in \eqref{eq: Delta multiloc} (which extends to general elements of $\mloc_c$ in a straightforward manner), 
we can write
\[
\hat{s}_0 F=\delta_{L_0(f)}F-i\hbar \Delta_0F=\delta_{S_0}F-i\hbar \Delta_0F\,,
\]
for $f$ chosen such that $f\equiv 1$ on~$\supp(F)$.
But these formulas are manifestly support-preserving, as the antibracket is and the anomaly is,
so we have a prefactorization algebra.

We now verify the local-to-global axiom.

\begin{prop}
\label{prop: obsqfree is fact alg}
The functor $\Obs_{\rm free}^{q}: \Opens(X) \to \Ch(\CC[[\hbar]])$, assigning 
\[
(\mloc_c(U)[[\hbar]], \hat{s}_0)
\]
to each open subset $U\subset X$, is a factorization algebra. 
\end{prop}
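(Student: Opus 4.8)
The plan is to mimic the proof of Proposition~\ref{prop: obscl is a fact alg} for the classical observables, reducing the claim to the fact that the underlying graded object $\mloc_c[[\hbar]]$ is already a strict factorization algebra by Proposition~\ref{prop: mloc is fact alg}. The only genuinely new input beyond that proof is that the deformed differential $\hat{s}_0=\delta_{S_0}-i\hbar\Delta_0$ interacts correctly with the prefactorization structure, and this rests entirely on the locality of the anomaly recorded in Theorem~\ref{thm: AMWI}.

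First I would confirm that $(\mloc_c[[\hbar]],\hat{s}_0)$ is a prefactorization algebra valued in $\Ch(\CC[[\hbar]])$. The operator $\hat{s}_0$ is $\CC[[\hbar]]$-linear and squares to zero, being conjugate to $\delta_{S_0}$ via $\TT$ with $\delta_{S_0}^2=0$ for the free theory, so each $\Obs_{\rm free}^q(U)$ is genuinely a cochain complex of $\CC[[\hbar]]$-modules. Since $\hat{s}_0$ preserves supports, as noted just above the statement, it commutes with the unary extension maps. For the binary (hence all higher) structure maps, which are built from the commutative product, one needs $\hat{s}_0$ to act as a graded derivation on products $F\cdot G$ of functionals with disjoint support: the classical term $\delta_{S_0}$ is always such a derivation, while for the quantum term the displayed consequence of~\eqref{eq: Delta multiloc} gives
\[
\Delta_0(F_1\cdot\ldots\cdot F_n)
=\sum_{j=1}^n \pm\, F_1\cdot\ldots\Delta_0(F_j)\ldots\cdot F_n
\]
whenever the $F_j$ have pairwise disjoint supports, precisely because the anomaly $A_F$ vanishes off the small diagonal. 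Thus $\hat{s}_0$ intertwines with every structure map of $\mloc_c[[\hbar]]$.

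Next I would verify the Weiss cosheaf condition. The key formal fact, already exploited for $\Obs^{cl}$, is that a strict colimit of a diagram of cochain complexes is computed degreewise. Given a Weiss cover $\{U_i\}_i$ of an open $U$, Proposition~\ref{prop: mloc is fact alg} identifies $\mloc_c(U)$, in each cohomological degree, with the cokernel of the \v{C}ech structure maps built from the cover; working order by order in $\hbar$ promotes this to the same statement for $\mloc_c(U)[[\hbar]]$, as $\hat{s}_0$ is upper-triangular for the $\hbar$-adic filtration. Because $\hat{s}_0$ commutes with these \v{C}ech maps (again by support-preservation) and the cokernel is taken degreewise, the differential descends to the cokernel and exhibits $(\mloc_c(U)[[\hbar]],\hat{s}_0)$ as the cokernel of the \v{C}ech complex of cochain complexes. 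Hence the gluing isomorphism for the graded object upgrades verbatim to one of cochain complexes, and $\Obs_{\rm free}^q$ is a strict factorization algebra.

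The main obstacle is not the cosheaf bookkeeping, which is formal once the graded statement of Proposition~\ref{prop: mloc is fact alg} is in hand, but rather securing the derivation property of the quantum term $\Delta_0$ on disjoint supports. That $\Delta_0$ --- a second-order BV-type operator, not a derivation in general --- nonetheless behaves like a derivation when its arguments are separated is exactly the content of the locality of the anomaly in Theorem~\ref{thm: AMWI}; without the diagonal support of the maps $A_n$, the deformed differential would fail to commute with the factorization product and the reduction to $\mloc_c$ would break down.
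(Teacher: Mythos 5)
Your proof is correct in substance but follows a genuinely different route from the paper's. The paper does not argue the strict cosheaf condition at all: it forms the totalization of the full \v{C}ech complex of $\Obs_{\rm free}^{q}$ on a Weiss cover, filters both this complex and $\Obs_{\rm free}^{q}(U)$ by powers of $\hbar$, and compares the two spectral sequences; the map of first pages is the classical comparison map, which it handles by citing Proposition~\ref{prop: obscl is a fact alg}, and it concludes that the \v{C}ech map is a quasi-isomorphism, i.e., homotopy descent. You instead prove strict descent by reducing to Proposition~\ref{prop: mloc is fact alg} for the underlying graded object and letting $\hat{s}_0$ descend to the cokernel --- exactly the mechanism by which the paper deduced Proposition~\ref{prop: obscl is a fact alg} itself. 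Your route buys two things: its conclusion (a strict factorization algebra over $\CC[[\hbar]]$) is what the introduction's theorem actually asserts, and it makes explicit a point the paper disposes of in one sentence before the proposition, namely that $\hat{s}_0$ is compatible with the higher-arity structure maps because $\Delta_0$ acts as a derivation on products with pairwise disjoint supports, by locality of the anomaly (Theorem~\ref{thm: AMWI}). The paper's route buys the homotopy-coherent statement and avoids any discussion of colimits of $\CC[[\hbar]]$-modules, at the price of using the classical proposition in a homotopical role.

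One step of yours needs repair, though. You assert that ``working order by order in $\hbar$'' promotes the cokernel identification from $\mloc_c$ to $\mloc_c[[\hbar]]$, justified by the upper-triangularity of $\hat{s}_0$; that justification addresses the differential, not the colimit. The actual issue is that in the category of $\CC[[\hbar]]$-modules the coproduct of the objects $\mloc_c(U_i)[[\hbar]]$ is the uncompleted direct sum $\bigoplus_i \left(\mloc_c(U_i)[[\hbar]]\right)$, and the canonical map from it to $\mloc_c(U)[[\hbar]]$ fails to be surjective for an infinite Weiss cover: all $\hbar$-coefficients of a finite sum of elements of the $\mloc_c(U_i)[[\hbar]]$ are supported in one finite union $U_{i_1}\cup\cdots\cup U_{i_k}$, whereas the coefficients $F_n$ of a general series $\sum_n F_n \hbar^n$ can have supports escaping every such finite union. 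So strict descent holds only if colimits are computed in the $\hbar$-adically complete category, where the coproduct is $\bigl(\bigoplus_i \mloc_c(U_i)\bigr)[[\hbar]]$ and exactness of products of vector spaces turns your order-by-order argument into a proof. The paper leaves this convention implicit too (its main theorem has the same issue if read naively), so this is a point of precision rather than a flaw peculiar to your argument --- but as written, your sentence does not establish the step.
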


\begin{proof}
Fix a Weiss cover $\{U_i\}_i$ of any open subset $U \subset X$.
There is a canonical cochain map from the totalization of the \v{C}ech complex $C(\{U_i^m\}_i,\Obs_{\rm free}^q)$ to $\Obs_{\rm free}^q(U)$.
Now consider the filtration of the cochain complex above by powers of $\hbar$,
which exists on both the \v{C}ech complex and on $\Obs_{\rm free}^q(U)$ and hence determines a spectral sequence for both complexes.
As the cochain map preserves the filtration,
there is a map of spectral sequences.
This map is a quasi-isomorphism on the first page by Proposition~\ref{prop: obscl is a fact alg}, 
and hence the original map is also a quasi-isomorphism.
In other words, $\Obs_{\rm free}^q$ satisfies homotopy descent for Weiss covers.
\end{proof}

\subsubsection{Quantum observables of the interacting theory}
\label{sec:qobs}
We wish now to define the interacting quantum BV operator $\hat{s}$, 
following \cite{FR3}. We use formula \eqref{def:qBVop:int} with $\TT$ given in Defintion~\ref{df: T map}. Locality of $\hat{s}$ is
 an obvious corollary of Theorem~\ref{thm: AMWI}.

\begin{cor} 
Let $L=L_0+L_I$ be the Lagrangian of the theory. If the QME holds for $L(f)$ with appropriate choice of the test function(s) $f$, the quantum BV operator is local, support-preserving, and is given~by
\be
\label{eq:qBV}
\hat{s}(F)=\{S,F\}-i\hbar \Delta_{\lambda L_I(f)}(F)\,,
\ee
provided $f\equiv 1$ on $\supp(F)$.
\end{cor}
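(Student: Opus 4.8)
The plan is to unfold the definition~\eqref{def:qBVop:int} of $\hat{s}$ and feed it directly into the anomalous master Ward identity of Theorem~\ref{thm: AMWI}, extracting the stated formula by a single differentiation in an auxiliary parameter. Write $V=L_I(f)$ for the cutoff interaction, with $f\equiv 1$ on a neighborhood of $\supp(F)$, and recall that $\TT$ intertwines the pointwise product with the time-ordered product $\T$, hence is an algebra homomorphism. Since $e^{i\lambda V/\hbar}F=\tfrac{\hbar}{i}\tfrac{d}{d\mu}\big|_{\mu=0}e^{i(\lambda V+\mu F)/\hbar}$ for the pointwise exponential, linearity of $\TT$ gives $\TT(e^{i\lambda V/\hbar}F)=\tfrac{\hbar}{i}\tfrac{d}{d\mu}\big|_{\mu=0}\TT e^{iG/\hbar}$ with $G=\lambda V+\mu F$. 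Applying $\delta_{L_0(f)}$, invoking the AMWI for the exponent $G$, and then applying $\TT^{-1}$ and multiplication by $e^{-i\lambda V/\hbar}$, I would arrive at
\[
\hat{s}(F)=e^{-i\lambda V/\hbar}\,\frac{d}{d\mu}\Big|_{\mu=0}\left[e^{iG/\hbar}\left(\tfrac12\{L_0(f)+G,L_0(f)+G\}-i\hbar A_{G}\right)\right],
\]
where all products are now pointwise (the two factors of $\tfrac{\hbar}{i}$ and $\tfrac{i}{\hbar}$ cancel).

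Next I would evaluate the $\mu$-derivative by the product rule, splitting into two contributions and using that $e^{-i\lambda V/\hbar}\cdot e^{i\lambda V/\hbar}=1$ pointwise. Differentiating the exponential produces, after this cancellation, the factor $\tfrac{i}{\hbar}F$ times the bracket--anomaly term at $\mu=0$, namely $\tfrac{i}{\hbar}F\big(\tfrac12\{L_0(f)+\lambda V,L_0(f)+\lambda V\}-i\hbar A_{\lambda V}\big)$; this is exactly $\tfrac{i}{\hbar}F$ times the left-hand side of the quantum master equation~\eqref{eq:dgQME2} for $L(f)=L_0(f)+\lambda L_I(f)$, and so vanishes by hypothesis. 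This is the one place the QME is essential. Differentiating the bracket--anomaly factor instead gives, at $\mu=0$, the term $\{L_0(f)+\lambda V,F\}-i\hbar\,\tfrac{d}{d\mu}\big|_{\mu=0}A_{\lambda L_I(f)+\mu F}$, using bilinearity and graded symmetry of the antibracket. The first piece equals $\{S,F\}$ once one uses $f\equiv 1$ on $\supp(F)$ to drop the cutoff, exactly as in the definition of $\delta S$ in Section~\ref{sec: dynamics}; the second piece is, by definition, $-i\hbar\,\Delta_{\lambda L_I(f)}(F)$. Summing yields $\hat{s}(F)=\{S,F\}-i\hbar\,\Delta_{\lambda L_I(f)}(F)$, as claimed.

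Finally, locality and support-preservation follow from the structure of the two summands: the antibracket $\{S,-\}$ preserves supports (this is the content of the lemma that $\Qcal_S$ preserves the support of local functionals), while $\Delta_{\lambda L_I(f)}(F)$ is local and support-preserving because the anomaly maps $A_n$ are supported on the small diagonal and built from local data, per Theorem~\ref{thm: AMWI}. Independence of the cutoff $f$ (subject to $f\equiv 1$ on $\supp F$) is the same locality argument used throughout Section~\ref{sec: dynamics}.

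The main obstacle I anticipate is bookkeeping rather than conceptual. One must justify commuting $\tfrac{d}{d\mu}$ through $\TT^{-1}$ and through the pointwise products at the level of formal power series; the $\hbar$-adic and $\lambda$-adic topologies make these manipulations legitimate, but they deserve care. One must also handle the graded signs in $e^{i(\lambda V+\mu F)/\hbar}$ and in the symmetry of the antibracket when $F$ has odd cohomological degree. Most delicate is the support accounting: although $\TT(e^{i\lambda V/\hbar}F)$ has support spread over $\supp(V)\cup\supp(F)$, the final expression is genuinely local, and one must check that $f\equiv 1$ on $\supp(F)$ suffices to identify $\delta_{L_0(f)}$ with $\delta_{S_0}$ and $\{L(f),F\}$ with $\{S,F\}$; this is exactly what the diagonal support of the anomaly and the support properties of the antibracket, combined with the QME, are designed to guarantee.
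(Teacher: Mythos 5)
Your proposal is correct and follows essentially the same route as the paper's proof: express $e^{i\lambda L_I(f)/\hbar}F$ as a derivative in an auxiliary parameter of a time-ordered exponential, apply the AMWI of Theorem~\ref{thm: AMWI}, and observe that the product rule yields one term proportional to $F$ times the QME (vanishing by hypothesis) and one term giving $\{L(f),F\}-i\hbar\Delta_{\lambda L_I(f)}(F)$, with $f\equiv 1$ on $\supp(F)$ identifying $\{L(f),F\}$ with $\{S,F\}$. Your use of a separate parameter $\mu$ is a cleaner bookkeeping choice than the paper's reuse of $\lambda$, but it is the same argument.
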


\begin{proof}
Let $F\in \loc_c$. We apply the AMWI (Theorem~\ref{thm: AMWI}) to conclude that
\begin{align*}
\hat{s}(F)
&=(-i\hbar)\frac{d}{d\lambda}e^{-i\lambda L_I(f)/\hbar} \delta_{L_0(f)}(e^{i(\lambda L_I(f)+\lambda F)/\hbar} )\Big|_{\lambda=0}\\
&=(-i\hbar)\frac{d}{d\lambda} \Big(e^{i(\lambda L(f)+F)/\hbar} \big(\tfrac{1}{2}\{L(f)+\lambda F,L(f)+ F\}-i\hbar A_{\lambda L_I(f)+\lambda F }\big)\Big)\Big|_{\lambda=0}\\	&=\frac{i}{\hbar}F\left(\tfrac{1}{2}\{L_I(f),L_I(f)\}-i\hbar A_{\lambda L_I(f)}\right)+ \{L_I(f),F\}-i\hbar \Delta_{\lambda L(f)}(F)\,.
\end{align*} 
The first term in the last equation vanishes if the QME holds, and the second term gives the desired form~\eqref{eq:qBV} of $\hat{s}$, where we also use the fact that $f\equiv 1$ on the support of $F$.
\end{proof}

\begin{cor}
If the QME holds, the functor $\Obs^q$ assigning 
\[
(\mloc_c(U)[[\hbar,\lambda]], \hat{s})
\] 
to each open subset $U \subset M$ is a factorization algebra. 
\end{cor}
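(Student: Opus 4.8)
The plan is to follow the proof of Proposition~\ref{prop: obsqfree is fact alg} almost verbatim, with the free action $S_0$ replaced by the full action $S = S_0 + \lambda L_I$ and an extra formal parameter $\lambda$ carried along throughout. First I would check that $\Obs^q$ is a prefactorization algebra. This is exactly where the QME hypothesis is used: by the preceding corollary --- which relies on the anomalous master Ward identity (Theorem~\ref{thm: AMWI}) --- the operator $\hat s$ is local, support-preserving, and given explicitly by~\eqref{eq:qBV}. The displayed factorization property of $\Delta_{\lambda L_I(f)}$ on functionals of pairwise disjoint support shows that $\hat s$ behaves as a derivation across disjoint regions, and hence intertwines with the prefactorization structure maps of $\mloc_c$, which are themselves inherited unchanged from the classical case. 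Thus $\Obs^q$ is a deformation of $\Obs^{cl}$ as a prefactorization algebra in which only the differential is altered.

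The substance of the proof is the Weiss local-to-global condition. Fix a Weiss cover $\{U_i\}_i$ of an open $U$ and form the canonical cochain map from the totalization of the \v{C}ech complex $C(\{U_i^m\}_i, \Obs^q)$ to $\Obs^q(U)$. I would filter both sides by powers of $\hbar$; this filtration is complete and exhaustive since we work over $\CC[[\hbar,\lambda]]$, and the comparison map preserves it, inducing a map of the associated spectral sequences. On the associated graded the term $-i\hbar \Delta_{\lambda L_I(f)}$ is killed, and the induced differential is precisely the classical BV differential $\Qcal_S = \{S,-\}$ for $S = S_0 + \lambda L_I$. Here it is important that $S$ satisfies the classical master equation~\eqref{eqn:cme}: extracting the order-$\hbar^0$ part of the QME~\eqref{eq:dgQME2} forces $\{L(f), L(f)\} = 0$, because the anomaly enters only at order $\hbar$ and higher. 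Consequently Proposition~\ref{prop: obscl is a fact alg}, applied $\CC[[\lambda]]$-linearly to the action $S$, shows that the classical observables $(\mloc_c[[\lambda]], \Qcal_S)$ form a factorization algebra, so the comparison map is a quasi-isomorphism on the first page. The standard comparison theorem for complete, exhaustive filtered complexes then upgrades this to a quasi-isomorphism of the totalizations, which is the required homotopy descent.

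I expect the genuine obstacle to be conceptual rather than computational, and to lie entirely in the hypotheses rather than in the descent argument. All of the difficulty is packaged into the preceding corollary, whose proof rests on the locality of the anomaly supplied by Theorem~\ref{thm: AMWI} together with the QME; without the QME the interacting operator $\hat s$ fails to be local, $-i\hbar \Delta_{\lambda L_I(f)}$ need not preserve supports, and both the prefactorization compatibility and the clean reduction to $\Qcal_S$ on the associated graded break down. Granting the corollary, the only point demanding care is verifying that the $\hbar$-adic (and $\lambda$-adic) filtrations are complete and exhaustive, so that a quasi-isomorphism on the first page of the spectral sequence legitimately propagates to a quasi-isomorphism of totalizations; this is routine for formal power series but is the step that must not be skipped.
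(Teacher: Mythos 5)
Your proposal is correct and follows essentially the same route as the paper: the paper's proof is exactly to rerun the argument of Proposition~\ref{prop: obsqfree is fact alg}, using the spectral sequence of the $\hbar$-adic filtration on the \v{C}ech complex, with the first page identified with the classical factorization algebra of Proposition~\ref{prop: obscl is a fact alg} and the QME (via the preceding corollary) guaranteeing that $\hat{s}$ is local and support-preserving. Your added care about completeness and exhaustiveness of the filtration, and the observation that the $\hbar^0$ part of the QME yields the CME for the full action $S$, are sound refinements of details the paper leaves implicit.
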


The same proof as for Proposition~\ref{prop: obsqfree is fact alg} works here,
using the spectral sequence on the \v{C}ech complex associated to the $\hbar$ filtration.

\subsection{Why examples exist}\label{sec: examples}

It is natural and pertinent to ask if there are interesting examples of our main result.
We have seen that the QME is the necessary and sufficient condition for the locality of~$\hat{s}$. The following examples have been discussed in the literature:
\begin{itemize}
    \item Single real scalar field. In this case the CME is trivially satisfied and the anomaly vanishes\cite{FR3}.
    \item $n$ real scalar fields has been discussed in \cite{brunetti2022unitary}.
    \item Yang-Mills theory \cite{H,FR3}
    \item Bosonic string \cite{BRZ}.
    \item Effective quantum gravity \cite{BFRej13}
\end{itemize}
These references were written at different stages of the development of the whole framework, so they contained a variable level of detail and notation might differ significantly. To make it easier for the reader to navigate the literature, here we provide the core argument used in proving QME in all these cases. This strategy applies even if given papers do not state it explicitly.

\begin{thm}\label{thm: existence}
    Assume that for the given Lagrangian density there exists a prescription for $L(f)$ such that \eqref{eq: CME strong} holds. Assume further that  $H^1(\lloc(M),\d_\mlloc + \delta_S)$ is trivial. Then the QME can be fulfilled through appropriate redefinition of time-ordered product and the functor $\Obs^q$ produces a factorisation algebra for the model.
\end{thm}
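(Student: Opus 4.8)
The plan is to enforce the QME order by order in the $(\hbar,\la)$-filtration, using the vanishing of $H^1(\lloc(M),\d_\mlloc+\delta_S)$ to absorb, at each stage, the obstruction into a redefinition of the time-ordered product. First I would record that the equivalent form~\eqref{eq:dgQME2} of the QME, together with the hypothesis~\eqref{eq: CME strong} that the strong CME holds for the chosen prescription $L(f)$, forces the classical term $\tfrac{1}{2}\{L(f),L(f)\}$ to be supported on $\supp\,\d f$, hence to vanish on the region where $f\equiv 1$. There the QME collapses to the single requirement that the anomaly vanish,
\[
A_{\la L_I(f)}=0.
\]
By Theorem~\ref{thm: AMWI} the anomaly is assembled from the maps $A_n$, which are \emph{local} and supported on the small diagonal with values in $\overline{\mlloc}_{\mc}[[\hbar]]$. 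Thus $A_{\la L_I(f)}$ is a formal power series in $\hbar$ and $\la$ whose coefficients are local functionals of cohomological degree~$1$ supported on the small diagonal; this is exactly the kind of object whose obstruction is governed by~$H^1$.

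Next I would run the induction. Suppose a time-ordered product has been chosen for which the anomaly vanishes below some order of the filtration, and let $a$ be its leading nonzero coefficient. The structural input is the \emph{consistency condition}: since $\hat{s}_0=\TT^{-1}\circ\delta_{L_0(f)}\circ\TT$ is conjugate to the nilpotent differential $\delta_{L_0(f)}$, one has $\hat{s}_0^2=0$, and expanding this identity with the AMWI (equivalently, applying $\delta_{L_0(f)}$ twice to $\TT\,e^{iF/\hbar}$ and using $\delta_{L_0(f)}^2=0$) shows that the leading obstruction is closed,
\[
(\d_\mlloc+\delta_S)\,a=0.
\]
This is the Wess--Zumino-type condition exhibiting $a$ as a degree-$1$ cocycle in $(\lloc,\d_\mlloc+\delta_S)$. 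The hypothesis $H^1(\lloc(M),\d_\mlloc+\delta_S)=0$ then supplies a \emph{local} primitive $b$ with $a=(\d_\mlloc+\delta_S)\,b$.

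I would then use $b$ to cancel the obstruction. By the Main Theorem of renormalization (Theorem~\ref{thm:renorm}) the St\"uckelberg--Petermann group $\Rcal$ acts transitively on admissible time-ordered products, and the element $\Zcal\in\Rcal$ generated by the local counterterm $b$ at the relevant order redefines $\TT\mapsto\TT\circ\Zcal$. Because the $\Zcal_n$ are built inductively as differences of $n$-fold time-ordered products and are supported on the small diagonal, this redefinition shifts the leading coefficient of the anomaly by exactly $(\d_\mlloc+\delta_S)\,b=a$, cancelling it while leaving the lower orders intact. Iterating over the filtration produces a time-ordered product for which $A_{\la L_I(f)}=0$ on $\{f\equiv 1\}$, i.e.\ for which the QME~\eqref{eq:dgQME2} holds. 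The claim then follows from the corollary already established: once the QME holds, $\hat{s}$ is local and support-preserving, so $\Obs^q$ assigns $(\mloc_c(U)[[\hbar,\la]],\hat{s})$ functorially and satisfies homotopy descent for Weiss covers, hence is a factorization algebra.

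The hard part will be the consistency condition and the bookkeeping in the cancellation step. One must check that expanding $\hat{s}_0^2=0$ yields closedness of the \emph{leading} obstruction rather than a relation entangling several orders, and that the renormalization-group freedom acts on the anomaly precisely by $(\d_\mlloc+\delta_S)$-coboundaries. Both hinge on combining the locality and small-diagonal support of the anomaly from Theorem~\ref{thm: AMWI} with the inductive construction of the $\Zcal_n$ in Theorem~\ref{thm:renorm}. This is the familiar algebraic-renormalization argument, but carrying it out on the multilocal differential-form complex $(\lloc,\d_\mlloc+\delta_S)$ is what makes $H^1$ the natural receptacle for the obstruction.
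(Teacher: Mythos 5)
Your proposal is correct and takes essentially the same route as the paper's proof: the paper likewise reduces the QME (under the strong CME hypothesis~\eqref{eq: CME strong}) to the removal of the anomaly, invokes the Wess--Zumino consistency condition to exhibit the anomaly as a cocycle for $\d_\mlloc+\delta_S$, and uses triviality of $H^1(\lloc(M),\d_\mlloc+\delta_S)$ together with a redefinition of the time-ordered product to cancel it. The only difference is one of detail, not of substance: the paper defers the order-by-order induction and the cocycle/coboundary bookkeeping to the cited literature, whereas you spell that induction out explicitly.
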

\begin{proof}
In \cite{r00095,H,FR3} the following strategy to prove the QME has been outlined. One assumes first that
\be\label{eq: CME strong}
\{ L(f), L(f)\}=0\,,
\ee
which also implies that $\delta_{L(f)}^2=0$. (This in turn implies $\delta_{S}^2=0$, but the implication does not go the other way.) Then one shows that the anomaly in \eqref{MWI:inf:antif} can be removed by appropriate redefinition of~$\Tcal$. 
This step amounts to solving a certain cohomological condition.

The key observation is that the anomaly in AMWI has to satisfy the \emph{Wess-Zumino consistency condition}. This fact was first identified  in \cite{H} and then generalized in \cite{FR3}. The detailed argument explaining why this condition implies that anomalies are classified by   $H^1(\lloc(M),\d_\mlloc + \delta_S)$ is provided in \cite{r00095}.
\end{proof}

\begin{exa}
We demonstrate here that Yang-Mills theory without chiral fermions satisfies the assumptions of Theorem~\ref{thm: existence}.

In Yang-Mills theory, it is convenient to use the following choice of test functions: 
$\euD=\cinfty_c(X,\RR^2)$, so $f=(f',f'')$ 
and we choose $f'\equiv 1$ on the support of $f''$. 
Let $\frak{g}$ denote the Lie algebra of a compact Lie group~$G$.
Before the gauge-fixing, we use the following generalized Lagrangian 
(written in local coordinates and a given basis for~$\frak{g}$):
\begin{multline*}
L(f)=-\frac{1}{2}\int_X \tr(F[f'A]\wedge *F[f'A])+\int_X \big(d(f''c)+\frac{1}{2}[f'A,f''c]\big)^I_\mu(x)\frac{\delta}{\delta (f'A^I_\mu(x))}\\+\frac{1}{2}\int_X  [f''c,f''c]^I(x)\frac{\delta}{\delta (f''c^I(x))}-i\int_X  f''b_I(x)\frac{\delta}{\delta (f'\bar{c}_I(x))}\,,
\end{multline*}
where $A$ is the vector potential in $\Omega_1(X,\frak{g})$, $c\in\cinfty(X,\frak{g})$ is the ghost (which has degree 1), 
$\bar{c}\in\cinfty(X,\frak{g})$ is the antighosts in degree -1, 
and
$b \in\cinfty(X,\frak{g})$ is the Nakanishi-Lautrup field. 
Here $F[A]$ is the field strength for the vector potential $A$ and $F[f'A]$ is that, but for $A$ multiplied with the test function $f'$.

Our choice of dependence on $f$ and the fact that $f'\equiv 1$ on $\supp f''$ guarantees that we can effectively use the same algebraic relation one would apply for compactly supported fields, 
so the standard calculation proves that $\{L(f),L(f)\}=0$. 
Note that for consistency we have ``regularized'' not only fields, 
but also antifields, which is then taken into account when computing the bracket. 

To obtain the gauge-fixed theory (i.e. to guarantee that the linearized equations of motion in antifield degree 0 give rise to normally hyperbolic equations), 
we introduce a gauge-fixing fermion:
\[
\Psi(f)=i\int\limits_X f'\bar{c}_I\left(\frac{1}{2}f''b^I+*^{-1}d*\!f'A^I\right)\dvol\,.
\]
The gauge-fixed action is $L(f)+\{L(f),\Psi(f)\}$, 
which also satisfies the CME.

The cohomological condition is also fulfilled for pure Yang-Mills, as argued in \cite{H}. In section 5.5 of \cite{H}, constraints on adding possible matter fields are also discussed. In particular, the presence of chiral fermions would introduce an anomaly that cannot be removed from the QME, so the standard argument presented above would no appy.
\end{exa}


\bibliographystyle{amsalpha}
\bibliography{References}

\end{document}